\let\pdfstrcmp\pdf@strcmp\makeatother%
\def\Gin@getbase#1{%
  \edef\Gin@base{\filename@area\filename@base}%
  \edef\Gin@ext{#1}%
}
\newcommand{\renewtheorem}[1]{%
  \expandafter\let\csname #1\endcsname\relax%
  \expandafter\let\csname c@#1\endcsname\relax%
  \expandafter\let\csname end#1\endcsname\relax%
  \newtheorem{#1}%
}
\theoremstyle{plain}
	\theoremstyle{definition}
	\theoremstyle{remark}
	\theoremstyle{claimstyle}
\theoremstyle{plain}
\gdef\ExaEndSeen{0}
\newcommand\ExaEndHere{%
  \leavevmode\unskip\penalty9999 \hbox{}\nobreak\hfill\quad\hbox{$\lrcorner$}%
  \gdef\ExaEndSeen{1}%
}
\gdef\RemEndSeen{0}
\newcommand\RemEndHere{%
  \leavevmode\unskip\penalty9999 \hbox{}\nobreak\hfill\quad\hbox{$\lrcorner$}%
  \gdef\RemEndSeen{1}%
}
\tikzset{%
  gn/.style={circle,fill=black,inner sep=0pt, minimum size=6pt, prefix after command={\pgfextra{\tikzset{every label/.style={font=\footnotesize}}}}},
  gnw/.style={circle,draw,inner sep=0pt, minimum size=6pt, prefix after command={\pgfextra{\tikzset{every label/.style={font=\footnotesize}}}}},
  gni/.style={circle,draw,inner sep=1.5pt, minimum size=6pt, font=\footnotesize},
  roundbox/.style={rectangle, rounded corners=4pt},
  >=To,
  gedge/.style={->,>=latex},
  condtri/.style={draw,dart,dart tail angle=135,dart tip angle=50,inner xsep=0.1em, inner ysep=0.2em,anchor=tip},
  fancydotted/.style={dash pattern=on 1.25pt off 1.75pt},
  gninlinablefont/.style={font=\small},
  gninlinable/.style={gninlinablefont,circle,draw,inner sep=0.2pt, minimum size=8pt},
  gninlinablesub/.style={gninlinable,font=\scriptsize},
}
\definecolor{graphmorphismgreen}{rgb}{0.4,0.84,0.3}
\definecolor{graphmorphismfailred}{rgb}{0.9,0.2,0.4}
\tikzset{
  cospanintcommon/.style={draw=black!38,line width=0.5pt},
  cospanint/.style={->,cospanintcommon},
  cospanintmono/.style={cospanintcommon,>->},
  cospanarr/.style={->,line width=0.5pt},
  graphmor/.style={cospanarr},
  graphmorcol/.style={->,line width=0.5pt,color=graphmorphismgreen},
  graphmorcolfail/.style={->,dash pattern=on 1.25pt off 1.75pt,line width=0.5pt,color=graphmorphismfailred},
}
  \newcommand{\dotEx}{\ .\exists\ }
  \newcommand{\dotAll}{\ .\forall\ }
  \newcommand{\dotFalse}{\ . \condfalse}
  \newcommand{\dotTrue}{\ . \condtrue}
  \def\faintborder{%
    \begin{pgfonlayer}{bg}
      \draw[black!40, fill=black!5, rounded corners=2pt, line width=0.2pt, overlay]
      ($(current bounding box.south west)+(-1pt,-1pt)$) rectangle
      ($(current bounding box.north east)+(1pt,0.8pt)$);
    \end{pgfonlayer}
  }
  \def\fGraph#1#2{
    \tikz[baseline=(#1.base)]{#2 \faintborder}%
  }
  \def\fcGraph#1#2{
    \tikz[baseline=(#1.base),x=0.6cm]{#2 \faintborder}%
  }
\theoremstyle{plain}
\newtheorem{algorithm}[theorem]{Algorithm}
\renewcommand{\phi}{\varphi}
\renewcommand{\epsilon}{\varepsilon}
\renewcommand{\emptyset}{\varnothing}
\tikzset{
  vertex/.style={shape=circle,draw=black,minimum height=.5cm,inner sep=0pt},
  mono/.style={>->},
  epi/.style={->>},
  arlabel/.style={circle,inner sep=1pt,outer sep=0pt,font=\scriptsize},
  ontop/.style={preaction={draw,-,line width=3pt,white}}
}
\newcommand\comp{\mathop{;}}
\newcommand\arright[1][]{%
 \ifthenelse{\equal{#1}{}}%
 {\rightarrow}%
 {\mathbin{%
     \mathchoice%
     {\xrightarrow{#1}}%
     {\scalebox{.8}[1]{$\textstyle\relbar$}{\raisebox{.23ex}{$\scriptstyle #1$}}{\shortrightarrow}}%
     {\scalebox{.8}[1]{$\scriptstyle\relbar$}{\raisebox{.15ex}{$\scriptscriptstyle #1$}}{\shortrightarrow}}%
     {\scalebox{.8}[1]{$\scriptscriptstyle\relbar$}{\raisebox{.15ex}{$\scriptscriptstyle #1$}}{\shortrightarrow}}%
 }}
}
\newcommand\arleft[1][]{%
 \ifthenelse{\equal{#1}{}}%
 {\leftarrow}%
 {\mathchoice%
   {\xleftarrow{#1}}
   {\mathbin{{\textstyle\shortleftarrow}{\raisebox{.23ex}{$\scriptstyle #1$}}\scalebox{.8}[1]{$\textstyle\relbar$}}}
   {\mathbin{{\scriptstyle\shortleftarrow}{\raisebox{.15ex}{$\scriptscriptstyle #1$}}\scalebox{.8}[1]{$\scriptstyle\relbar$}}}
   {\mathbin{{\scriptscriptstyle\shortleftarrow}{\raisebox{.15ex}{$\scriptscriptstyle #1$}}\scalebox{.8}[1]{$\scriptscriptstyle\relbar$}}}
 }
}
\newcommand\step[1][]{
  \ifthenelse{\equal{#1}{}}{
    \Rightarrow
  }{
    \Rightarrow_{#1}
  }
}
\newcommand\red[1][]{
  \ifthenelse{\equal{#1}{}}{
    \Rightarrow^*
  }{
    \Rightarrow_{#1}^*
  }
}
\newcommand{\cbox}[1]{\vspace{0.2cm}\noindent %
  \fbox{\parbox{.97\textwidth}{#1}}\vspace{0.2cm}}
\renewcommand{\cbox}[1]{}
\newcommand{\short}[1]{}
\newcommand{\full}[1]{#1}
\renewcommand{\short}[1]{#1}
\renewcommand{\full}[1]{}
\DeclareMathOperator{\RO}{RO}
\newcommand{\id}{\mkern1mu\mathrm{id}}
\newcommand{\condtrue}{\mkern1mu\mathrm{true}}
\newcommand{\condfalse}{\mkern1mu\mathrm{false}}
\newcommand{\graphf}{\textbf{Graph}_{\textbf{fin}}}
\newcommand{\graphfinj}{\textbf{Graph}_{\textbf{fin}}^{\textbf{inj}}}
\def\ILC{\mathbf{ILC}}
\def\Cospan{\mathbf{Cospan}}
\def\defeq{:=}
\newcommand{\notmodels}{\mkern3mu\not\mkern-3mu\models}
\newcommand{\catC}{\ensuremath{\mathbf{C}}}
\newcommand{\quantor}{\ensuremath{\mathcal{Q}}}
\newcommand{\dom}{\ensuremath{\mathsf{dom}}}
\newcommand{\cod}{\ensuremath{\mathsf{cod}}}
\newcommand{\Cond}{\ensuremath{\mathsf{Cond}}}
\newcommand{\Seq}{\mathsf{Seq}} 
\newcommand{\uptoShift}{u_{\downarrow}}
\newcommand{\uptoConj}{u_{\land}}
\newcommand{\uptoConjCond}{\mathcal U_{\land}}
\newcommand{\uptoRecomp}{u_{\fatsemi}}
\newcommand{\uptoIso}{u_{\cong}}
\newcommand{\condiso}{\cong}
\newcommand{\natzero}{\mathbb N_0}
\def\biglor{\bigvee}
\def\bigland{\bigwedge}
\newcommand{\tdots}{. \mkern1mu . \mkern1mu . \mkern1mu}
\newcommand{\biggg}{\bBigg@{3}}
\newcommand{\Biggg}{\bBigg@{4}}
\newcommand{\bigggg}{\bBigg@{5}}
\newcommand{\Bigggg}{\bBigg@{6}}
\def\coincide{\discretionary{co}{incide}{co\kern.75pt incide}\xspace}
\def\coinductive{\discretionary{co}{inductive}{co\kern.75pt inductive}\xspace}
\def\Coinductive{\discretionary{Co}{inductive}{Co\kern.75pt inductive}\xspace}
\def\coinductively{\discretionary{co}{inductively}{co\kern.75pt inductively}\xspace}
\newenvironment{proofparts}{%
  \quad\par
  \begin{itemize}%
}{%
  \end{itemize}%
}
\newcommand{\proofPartNoNewline}[1]{\item \textbf{(#1):}}
\newcommand{\proofPart}[1]{\proofPartNoNewline{#1}\\}
\title{Coinductive Techniques for Checking Satisfiability of
  Generalized Nested Conditions}
\author{Lara Stoltenow}{University of Duisburg-Essen,
  Germany}{lara.stoltenow@uni-due.de}{https://orcid.org/0009-0009-1667-8573}{}
\author{Barbara K\"onig}{University of Duisburg-Essen,
  Germany}{barbara\_koenig@uni-due.de}{https://orcid.org/0000-0002-4193-2889}{}
\author{Sven Schneider}{Hasso Plattner Institute at the University of
  Potsdam, Germany}{sven.schneider@hpi.de}{http://orcid.org/0000-0001-9828-618X}{}
\author{Andrea Corradini}{Universit\`a di Pisa, Italy}{andrea@di.unipi.it}{https://orcid.org/0000-0001-6123-4175}{Research partially supported by the Italian MUR under the PRIN 20228KXFN2
``STENDHAL'' and by the University of Pisa under the PRA 2022\_99 ``FM4HD''.}
\author{Leen Lambers}{Brandenburgische Technische Universit\"at
  Cottbus-Senftenberg, Germany}{leen.lambers@b-tu.de}{https://orcid.org/0000-0001-6937-5167}{}
\author{Fernando Orejas}{Universitat Polit{\`e}cnica de Catalunya,
  Spain}{orejas@lsi.upc.edu}{http://orcid.org/0000-0002-3023-4006}{Research
  partially supported by MCIN/ AEI /10.13039/501100011033 under grant PID2020-112581GB-C21.}
\authorrunning{Stoltenow, K\"onig, Schneider, Corradini, Lambers, Orejas}
\keywords{
  satisfiability,
  graph conditions,
  coinductive techniques,
  category theory
}
\begin{document}

\maketitle

\begin{abstract}
  We study nested conditions, a generalization of first-order logic to
  a categorical setting, and provide a tableau-based (semi-decision)
  procedure for checking (un)satisfiability and finite model
  generation. This generalizes earlier results on graph
  conditions. Furthermore we introduce a notion of witnesses, allowing
  the detection of infinite models in some cases.
  To ensure completeness, paths in a tableau must be fair, where
  fairness requires that all parts of a condition are processed
  eventually. Since the correctness arguments are non-trivial, we rely
  on coinductive proof methods and up-to techniques that structure the
  arguments.
  We distinguish between two types of categories: categories where all
  sections are isomorphisms, allowing for a simpler tableau calculus
  that includes finite model generation; in categories where this
  requirement does not hold, model generation does not work, but we
  still obtain a sound and complete calculus.
\end{abstract}

\section{Introduction}
\label{sec:introduction}

Nested graph conditions (called graph conditions subsequently) are a
well-known specification technique for graph transformation systems
\cite{hp:correctness-nested-conditions} where they are used, e.g., to
specify graph languages and application conditions.  While their
definition is quite different from first-order logic (FOL), they have
been shown to be equivalent to FOL in
\cite{r:representing-fol,hp:correctness-nested-conditions}.  They are
naturally equipped with operations such as shift, a form of partial
evaluation, which is difficult to specify directly in FOL.  This
operation can be used to compute weakest preconditions and strongest
postconditions for graph transformation systems
\cite{bchk:conditional-reactive-systems}.

In \cite{bchk:conditional-reactive-systems} it
has also been observed that graph conditions can be generalized to the
categorical setting of reactive systems \cite{lm:derive-bisimulation}
as an alternative to the previously considered instantiation to graphs
and injective graph morphisms that is equivalent to FOL.  Further
possible instantiations include cospan categories where the
graphs, equipped with an inner and an outer interface, are the arrows,
as well as Lawvere theories.  To derive analysis techniques for all
such instantiations, we consider nested conditions in the general
categorical setting.

Here we are in particular interested in satisfiability checks on the
general categorical level. As in FOL, satisfiability can be an
undecidable problem (depending on the category), and we propose a
semi-decision procedure that can simultaneously serve as a model
finder. For FOL there are several well-known methods for
satisfiability checking, for instance resolution or tableau
proofs~\cite{f:fol-theorem-proving}, while model generation is
typically performed separately. The realization that satisfiability
checking is feasible directly on graph conditions came in
\cite{p:algorithm-approximating-satisfiability,p:development-correct-gts},
and a set of tableau rules was presented
\cite{p:algorithm-approximating-satisfiability} without a proof of
(refutational) completeness that was later published in
\cite{lo:tableau-graph-properties}, together with a model generation
procedure \cite{slo:model-generation}. A generalization to
non-injective graph morphisms was given
in~\cite{nopl:navigational-logic}.

The contributions of the current paper can be summarized as follows:
\begin{itemize}
\item We generalize the tableau-based semi-decision procedure for
  graph conditions from \cite{lo:tableau-graph-properties} to the
  level of general categories, under some mild constraints (such as
  the existence of so-called representative squares
  \cite{bchk:conditional-reactive-systems}). We present a procedure
  that has some resemblance to the construction of a tableau in FOL.
  
  \item We distinguish between two cases: one
  simpler case in which all sections (arrows that have a right
  inverse) in the category under consideration are isomorphisms (\Cref{sec:satisfiability}); and a more
  involved case where this does not necessarily hold
  (\Cref{sec:general-case}). The tableau rules of the former case (\Cref{sec:satisfiability}) are easier to present and implement, and we can give
  additional guarantees, such as model generation whenever there
  exists a so-called finitely decomposable model, generalizing the
  notion of finite models. The latter case (\Cref{sec:general-case}) does not guarantee model generation and has more involved tableau rules, but it allows for instantiations to more categories, such as graphs and arbitrary morphisms. The results of both cases generalize
  \cite{lo:tableau-graph-properties,nopl:navigational-logic,slo:model-generation}
  from graphs and graph morphisms to an abstract categorical~level,
  which allows application to additional categories such as cospan
  categories and Lawvere theories (see \Cref{appendix-terms}).
  
\item The completeness argument for the satisfiability checking
  procedure -- in particular showing that non-termination implies the
  existence of an infinite model -- requires that the tableau
  construction satisfies a fairness constraint. The resulting proof is
  non-trivial and -- compared to the proof in
  \cite{lo:tableau-graph-properties} -- we show that it can be
  reformulated using up-to techniques. Here we give it a completely
  new and hopefully clarifying structure that relies on
  \coinductive methods
  \cite{p:complete-lattices-up-to,ps:enhancements-coinductive}.
  The alternative would be to inline the up-to techniques, or to rely on complex ad-hoc notation that are less clear and further complicate the proof.
  
  \item Furthermore we use \coinductive
  techniques to display witnesses for infinite models
  (\Cref{sec:witnesses}): in some cases where only infinite models
  exist and hence the tableau construction is non-terminating, we can
  still stop and determine that there does exist an infinite model.
  \Coinductive techniques
  \cite{s:bisimulation-coinduction,ps:enhancements-coinductive} are
  reasoning techniques based on greatest fixpoints, suitable to
  analyze infinite or cyclic structures.
  To the best of our knowledge, such techniques have not yet been
  employed in the context of satisfiability checking for FOL and graph
  conditions.
\end{itemize}

\noindent The main contribution compared to previous work consists of a
categorical generalization to reactive systems on the one hand, and
the use of \coinductive (up-to) techniques on the other hand.  The
implication of the first type of contribution is that the theory
becomes available for new instantiations such as adhesive categories
(which includes all variants of graphs, such as typed graphs, Petri
nets, but also algebraic specifications,
cf.~\cite{eept:fundamentals-agt}), as well as other cases such as
cospan categories and Lawvere theories. The second type of
contribution implies that the proofs (especially for completeness) can
now be presented in a more systematic way.







This paper is the full version of
\cite{sksclo:coinductive-satisfiability-nested-conditions}.

\section{Preliminaries}
\label{sec:preliminaries}

\subsection{Coinductive Techniques}
\label{sec:coinductive}

A \emph{complete lattice} is a partially ordered set
$(L, \sqsubseteq)$ where each subset $Y\subseteq L$ has a greatest
lower bound, denoted by $\bigsqcap Y$ and a least upper bound, denoted
by $\bigsqcup Y$.

A function $f \colon L \to L$ is \emph{monotone} if for all
$l_1, l_2 \in L$,\; $l_1 \sqsubseteq l_2$ implies
$f(l_1) \sqsubseteq f(l_2)$, \emph{idempotent} if $f \circ f = f$, and
\emph{extensive} if $l \sqsubseteq f(l)$ for all $l \in L$.  

Given a monotone function $f \colon L \to L$ we are in particular
interested in its \emph{greatest fixpoint} $\nu f$.  By Tarski's
Theorem \cite{t:lattice-fixed-point},
$\nu f = \bigsqcup \{ x \mid x \sqsubseteq f(x) \}$, i.e., the
greatest fixpoint is the least upper bound of all
post-fixpoints. Hence for showing that $l\sqsubseteq \nu f$ (for some
$l\in L$), it is sufficient to prove that $l$ is below some
post-fixpoint $l'$, i.e., $l \sqsubseteq l'\sqsubseteq f(l')$.

In order to employ up-to techniques one defines a monotone function
$u \colon L \to L$ (the up-to function) and checks whether $u$ is
\emph{$f$-compatible}, that is $u \circ f \sqsubseteq f \circ u$.  If
that holds every post-fixpoint $l$ of $f\circ u$ (that is
$l \sqsubseteq f(u(l))$) is below the greatest fixpoint of $f$
($l\sqsubseteq \nu f$). This simple technique can often greatly
simplify checking whether a given element is below the greatest
fixpoint. For more details see~\cite{p:complete-lattices-up-to}.


\subsection{Categories}


We will use standard concepts from category theory. Given an arrow
$f\colon A\to B$, we write $\dom(f)=A$, $\cod(f)=B$. For two arrows
$f \colon A \to B$, $g \colon B \to C$ we denote their composition by
${f;g} \colon A \to C$.
%
An arrow $s \colon A \to B$ is a \emph{section} (also known as
\emph{split mono}) if there exists $r\colon B\to A$ such that
$s;r = \id$. That is, sections are those arrows $s$ that have a
right-inverse~$r$.  Arrows that have a left-inverse (in this case $r$) are
called \emph{retractions}.

As in graph rewriting we will consider the category $\graphf$, which
has finite graphs as objects and graph morphisms as arrows.  We also
consider $\graphfinj$, the subcategory of $\graphf$ that has the same
objects, but only injective, i.e.\ mono, graph morphisms. In this
category the sections are exactly the isos, while this is not the case
in $\graphf$.

\begin{toappendix}
  \paragraph*{Graphs and graph morphisms}

  We will define in more detail which graphs and graph morphisms we
  are using: in particular, a graph is a tuple $G = (V,E,s,t,\ell)$,
  where $V,E$ are sets of nodes respectively edges, $s,t\colon E\to V$
  are the source and target functions and $\ell\colon V\to \Lambda$
  (where $\Lambda$ is a set of labels) is the node labelling
  function. In the examples we will always omit node labels by
  assuming that there is only a single label.

  A graph $G$ is finite if both $V$ and $E$ are finite.

  Furthermore, given two graphs $G_i = (V_i,E_i,s_i,t_i,\ell_i)$,
  $i\in\{1,2\}$, a graph morphism $\phi\colon G_1\to G_2$ consists of
  two maps $\phi_V\colon V_1\to V_2$, $\phi_E\colon E_1\to E_2$ such
  that $\phi_V\circ s_1 = s_2\circ \phi_E$, $\phi_V\circ t_1 =
  t_2\circ \phi_E$ and $\ell_1 = \ell_2\circ\phi_V$.

  In the examples, the mapping of a morphism is given implicitly by
  the node identifiers: for instance, $
    \fcGraph{n1}{\node[gninlinable] (n1) at (0,0) {$1$}; \node[gninlinable] (n2) at (1,0) {$2$};}
    \rightarrow \fcGraph{n1}{
      \node[gninlinable] (n1) at (0,0) {$1$};
      \node[gninlinable] (n2) at (2,0) {$2$};
      \node[gninlinable] (n3) at (1,0) {$3$};
      \draw[gedge] (n1) to (n3);
      \draw[gedge] (n2) to (n3);
    }
  $ adds the node identified by $3$ and adds two edges from the existing
  nodes identified by $1$ and $2$.
\end{toappendix}

Another important example category that will be used in
\Cref{sec:witnesses} is based on cospans: note that reactive systems
instantiated with cospans
\cite{lmcs:8951,ss:reactive-cospans,s:deriving-congruences} yield
exactly double-pushout rewriting
\cite{eps:gragra-algebraic}.  Given a base category $\mathbf D$ with
pushouts, the category $\Cospan(\mathbf D)$ has as objects the objects
of $\mathbf D$ and as arrows \emph{cospans}, which are equivalence
classes of pairs of arrows of the form
$\smash{A \xrightarrow{f_L} X \xleftarrow{f_R} B}$, where the middle
object is considered up to isomorphism. Cospan composition is
performed via pushouts (for details see Appendix~\ref{app:cospans}).

\begin{toappendix}
  \paragraph*{Cospans and cospan composition}
  \label{app:cospans}
  Two cospans \mbox{$%
    f \colon A \xrightarrow{f_L} X \xleftarrow{f_R} B,\ %
    g \colon B \xrightarrow{g_L} Y \xleftarrow{g_R} C$} are composed
  by taking the pushout $(p_L, p_R)$ of $(f_R, g_L)$ as shown in
  \Cref{fig-cospan-compo}.  The result is the cospan
  \mbox{${f;g} \colon A \xrightarrow{f_L;p_L} Z \xleftarrow{g_R;p_R}
    C$}, where $Z$ is the pushout object of $f_R,\; g_L$.  We see an
  arrow $f \colon A \to C$ of $\Cospan(\mathbf D)$ as an object~$B$
  of~$\mathbf D$ equipped with two interfaces $A,C$ and corresponding
  arrows $f_L,f_R$ to relate the interfaces to $B$, and composition
  glues the inner objects of two cospans via their common
  interface.

  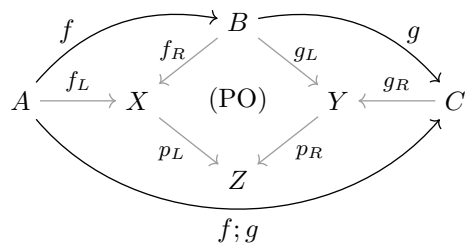
\begin{figure}[h]
    \centering\begin{tikzpicture}[x=1.10cm,y=1.10cm]

      \def\sqwo{1.2} \def\sqoo{1.4} \def\sqho{0.95}

      \node (a) at (-\sqwo-\sqoo,0) {$A$}; \node (x) at (-\sqwo,0)
      {$X$}; \node (b) at (0,\sqho) {$B$}; \node (y) at (\sqwo,0)
      {$Y$}; \node (c) at (\sqwo+\sqoo,0) {$C$}; \node (z) at
      (0,-\sqho) {$Z$};

      \def\smf{\footnotesize} \draw[cospanint] (a) --
      node[above]{\smf$f_L$} (x); \draw[cospanint] (b) --
      node[above,pos=0.7]{\smf$f_R\ $} (x); \draw[cospanint] (b) --
      node[above,pos=0.7]{\smf\kern5pt$g_L$} (y); \draw[cospanint] (c)
      -- node[above]{\smf$g_R$} (y);

      \draw[cospanint] (x) -- node[below,pos=0.4]{\smf$p_L$\kern8pt}
      (z); \draw[cospanint] (y) --
      node[below,pos=0.4]{\smf\kern12pt$p_R$} (z);

      \draw[cospanarr] (a) to[bend left=30] node[above,pos=0.2]{$f$}
      (b); \draw[cospanarr] (b) to[bend left=30]
      node[above,pos=0.8]{$\ g$} (c); \draw[cospanarr] (a) to[bend
      right=50] node[below]{$f;g$} (c);

      \node at (0,0) {\textup{(PO)}};

    \end{tikzpicture}%
    \caption{Composition of cospans $f$ and~$g$ is done via pushouts}%
    \label{fig-cospan-compo}%
  \end{figure}
  In order to make sure that arrow composition in $\Cospan(\mathbf D)$
  is associative on the nose, we quotient cospans up to
  isomorphism. In more detail: two cospans
  $f \colon A \xrightarrow{f_L} X \xleftarrow{f_R} B$,
  $g \colon A \xrightarrow{g_L} Y \xleftarrow{g_R} B$ are equivalent
  whenever there exists an iso $\iota\colon X\to Y$ such that
  $f_L;\iota = g_L$, $f_R;\iota = g_R$. Then, arrows are equivalence
  classes of cospans.
\end{toappendix}

A~cospan is \emph{left-linear} if its left leg $f_L$ is mono.  For adhesive
categories~\cite{ls:adhesive-journal}, the composition of left-linear
cospans again yields a left-linear cospan, and $\ILC(\mathbf D)$ is
the subcategory of $\Cospan(\mathbf D)$ where the arrows are
restricted to left-linear cospans.

Note that $\graphfinj$ can be embedded into $\ILC(\graphf)$ by
transforming an injective graph morphism $f$ to a left-linear cospan with $f$ as the left leg
and $\id$ as the right leg.

Another application are Lawvere theories, where arrows are (tuples of)
terms, an approach we explore in \Cref{appendix-terms}.

\subsection{Generalized Nested Conditions}
\label{sec:reactive-systems-conditions}


We consider (nested) conditions over an arbitrary category $\catC$ in
the spirit of reactive systems
\cite{lm:derive-bisimulation,l:congruences-reactive}.  Following
\cite{r:representing-fol,hp:correctness-nested-conditions}, we define
conditions as finite tree-like structures, where nodes are annotated
with quantifiers and objects, and edges are annotated with arrows.

\begin{definition}[Condition]
  \label{def:condition-new}
  Let $\catC$ be a category. A condition $\mathcal{A}$ over an object
  $A$ in $\catC$ is defined inductively as follows: it is either
  \begin{itemize}
    \item
      a finite conjunction of universals
      $\bigland\nolimits_{i \in \{1,\dots,n\}} \forall f_i.\mathcal A_i
      = \forall f_1.\mathcal A_1 \land \ldots \land \forall f_n.\mathcal A_n$, or
    \item
      a finite disjunction of existentials
      $\biglor\nolimits_{i \in \{1,\dots,n\}} \exists f_i.\mathcal A_i
      = \exists f_1.\mathcal A_1 \lor \ldots \lor \exists f_n.\mathcal A_n$
  \end{itemize}
  where $f_i\colon A\to A_i$ are arrows in $\catC$ and
  $\mathcal A_i \in \Cond(A_i)$ are conditions. We call
  $A = \RO(\mathcal A)$ the \emph{root object} of the condition
  $\mathcal A$.  Each subcondition $\mathcal Q f_i.\mathcal A_i$
  ($\mathcal Q \in \{\forall,\exists\}$) is called a \emph{child
    of~$\mathcal A$}.  The constants $\condtrue_A$ (empty conjunction) and
  $\condfalse_A$ (empty disjunction) serve as the base cases.
  We will omit subscripts in $\condtrue_A$ and $\condfalse_A$ when clear
  from the context. 

  The set of all conditions over $A$ is denoted by $\Cond(A)$.
\end{definition}

Instantiated with $\graphf$ respectively $\graphfinj$, conditions are
equivalent to graph conditions as defined in
\cite{hp:correctness-nested-conditions}, and equivalence to
first-order logic has been shown in \cite{r:representing-fol}.
Intuitively, these conditions require the existence of certain subgraphs or patterns, or that whenever a given subgraph occurs, the surroundings of the match satisfy a child condition.
For instance,
$\forall\ \emptyset \to
\fcGraph{n1}{\node[gninlinable] (n1) at (0,0) {$1$}; \node[gninlinable] (n2) at (1,0) {$2$}; \draw[gedge] (n1) to (n2); }
\dotEx
\fcGraph{n1}{\node[gninlinable] (n1) at (0,0) {$1$}; \node[gninlinable] (n2) at (1,0) {$2$}; \draw[gedge] (n1) to (n2); }
\to
\fcGraph{n1}{\node[gninlinable] (n1) at (0,0) {$1$}; \node[gninlinable] (n2) at (1,0) {$2$}; \draw[gedge] (n1) to[bend left=10] (n2); \draw[gedge] (n2) to[bend left=10] (n1); }
\dotTrue$
requires that for every edge, a second edge in the reverse direction also exists.
For additional examples of conditions we refer to
\Cref{exa-prove-unsat,exa-find-finite-models,ex-ray-graphs} given
later.

To simplify our algorithms and their proofs,
the definition of conditions requires that conjunctions contain only
universal children and disjunctions only existential children (e.g.,
$\exists f.\mathcal A \land \exists g.\mathcal B$ is excluded).
However, this can be simulated using
$\forall \id.\exists f.\mathcal A \land \forall \id.\exists g.\mathcal B$, and similarly
for disjunctions of universals. Hence we sometimes write
$\mathcal{A}\land \mathcal{B}$ or $\mathcal{A}\lor \mathcal{B}$ for
arbitrary conditions in the proofs.


While in \cite{bchk:conditional-reactive-systems} a model for a
condition was a single arrow, we have to be more general, since there
are some satisfiable conditions that have no finite models. Here we
want to work in categories of finite graphs (so that conditions are
finite), but at the same time we want to consider infinite models. The
solution is to evaluate conditions on infinite sequences of arrows
$\bar a = [ a_1, a_2, a_3, \tdots ]$, where
$A \xrightarrow{a_1} A_1 \xrightarrow{a_2} A_2 \xrightarrow{a_3}
\dots$, called \emph{composable sequences}.%
\footnote{Another option would be to work in the category of
  potentially infinite graphs. However, that would allow conditions
  based on infinite graphs for which satisfiability checks become
  algorithmically infeasible.}  We define
$\dom(\bar{a}) = \dom(a_1) = A$ and we call such a sequence
\emph{finite} iff for some index $k$ all $a_i$ with $i > k$ are
identities.

Intuitively, the model is represented by the ``composition'' of the
infinite sequence of arrows. In the category $\graphfinj$ this would
amount to taking the limit of this sequence. As we will later see, it
does not play a role how exactly an infinite structure is decomposed
into arrows, as all decompositions are equivalent with respect to
satisfaction. 

\begin{definition}[Satisfaction]
  \label{def:satisfaction}
  Let $\mathcal A \in \Cond(A)$.  Let
  $\bar a = [ a_1, a_2, a_3, \tdots ]$ be a composable sequence with
  $A = \dom(\bar{a})$.  We define the satisfaction relation
  $\bar a \models \mathcal A$ as follows:
  \begin{itemize}
  \item $\bar a \models \bigland_{i\in I} \forall f_i.\mathcal A_i$
    iff for every  $i\in I$ and every arrow
    $g \colon \RO(\mathcal A_i) \to B$ and \mbox{all $n \in \natzero$}
    we have: if~$a_1;\tdots;a_n = f_i;g$, then
    $[g, a_{n+1}, \tdots] \models \mathcal A_i$.
  \item $\bar a \models \biglor_i \exists f_i.\mathcal A_i$ iff there
    exists $i\in I$ and an arrow $g \colon \RO(\mathcal A_i) \to B$
    and \mbox{some $n \in \natzero$} such that $a_1;\tdots;a_n = f_i;g$
    and $[g, a_{n+1}, \tdots] \models \mathcal A_i$.
  \end{itemize}
\end{definition}
Note that this covers the base cases ($\bar a \models \condtrue$,
$\bar a \notmodels \condfalse$ for every sequence $\bar a$).
Furthermore $a_1;\tdots;a_n$ equals the identity whenever $n=0$.
For a finite sequence $\bar{a} = [a_1,\tdots,a_k,\id,\id,\tdots]$ this
means that $\bar{a} \models \mathcal{A}$ iff $a = a_1;\tdots;a_k$ is a
model for $\mathcal{A}$ according to the definition of satisfaction
given in \cite{bchk:conditional-reactive-systems}. In
this case we write $[a_1,\tdots,a_k]\models \mathcal{A}$ or simply
$a \models \mathcal{A}$.

\begin{remark}
  In the following we use $\Cond$ to denote the set\footnote{Actually,
    without restrictions these are proper classes rather than sets. We
    tacitly assume that we are working in the corresponding skeleton
    category where no two different objects are isomorphic and assume
    that we can consider $\Cond$, $\Seq$ as sets.} of all conditions
  and $\Seq$ as the set of all composable sequences of arrows
  (potential models).
\end{remark}

We write $\mathcal A \models \mathcal B$ ($\mathcal{A}$ implies
$\mathcal{B}$) if $\RO(\mathcal A) = \RO(\mathcal B)$ and for every
$\bar a$ with $\dom(\bar a) = \RO(\mathcal A)$ we
have: if $\bar a \models \mathcal A$, then
$\bar a \models \mathcal B$.  Two conditions are equivalent
($\mathcal A \equiv \mathcal B$) if $\mathcal A \models \mathcal B$
and $\mathcal B \models \mathcal A$.

Every condition can be transformed to an
equivalent condition that alternates between $\forall,\exists$ by
inserting $\forall \id$ or $\exists \id$ as needed. Such conditions
are called \emph{alternating}.



We also define what it means for two conditions to be isomorphic.
It is easy to see that isomorphic conditions are equivalent, but not
necessarily vice versa.

\begin{definition}[Isomorphic Conditions]
  For conditions $\mathcal A, \mathcal B$ and an iso
  \mbox{$h\colon\RO(\mathcal{B})\to \RO(\mathcal{A})$}, we say that
  $\mathcal A, \mathcal B$ are \emph{isomorphic}
  ($\mathcal A \condiso \mathcal B$) wrt.\ $h$, whenever both are
  universal, i.e.,\linebreak
  $\mathcal A = \bigland_{i\in I} \forall f_i. \mathcal{A}_i$,\ %
  $\mathcal B = \bigland_{j\in J} \forall g_j. \mathcal{B}_j$, and for
  each $i\in I$ there exists $j\in J$ and an~iso
  $h_{j,i}\colon \RO(\mathcal{B}_j) \to \RO(\mathcal{A}_i)$ such
  that $h;f_i = g_j;h_{j,i}$ and $\mathcal A_i \condiso \mathcal B_j$
  wrt.\ $h_{j,i}$; and vice versa (for each $j\in J$ there
  exists $i\in I$ \dots). Analogously if both conditions are
  existential.
\end{definition}

\begin{toappendix}
  \paragraph*{Equivalence laws for conditions}
  We rely on the results given in the following two propositions that
  were shown in \cite{bchk:conditional-reactive-systems}. They were originally
  stated for satisfaction with single arrows, but it is easy to see they are
  valid for possibly infinite sequences as well: conditions have finite depth
  and satisfaction only refers to finite prefixes of the sequence.

  \begin{proposition}[Adjunction]
    \label{prop:adjunction}
    Let $\mathcal{A},\mathcal{B}$ be two conditions with root object
    $A$, let $\mathcal{C},\mathcal{D}$ be two conditions with root object $B$
    and let $\phi\colon A\to B$. Then it holds that:
    \begin{enumerate}
    \item $\mathcal{A}\models \mathcal{B}$ implies
      $\mathcal{A}_{\downarrow \phi}\models \mathcal{B}_{\downarrow
        \phi}$.
    \item $\mathcal{C}\models \mathcal{D}$ implies
      $\quantor\phi.\mathcal{C}\models \quantor\phi.\mathcal{D}$ for
      $\quantor\in\{\exists,\forall\}$.
    \item
      $\exists \phi.(\mathcal{A}_{\downarrow \phi}) \models
      \mathcal{A}$ and for every $\mathcal{C}$ with
      $\exists \phi.\mathcal{C}\models \mathcal{A}$ we have that
      $\mathcal{C} \models \mathcal{A}_{\downarrow \phi}$.
    \item
      $\mathcal{A}\models \forall \phi.(\mathcal{A}_{\downarrow
        \phi})$ and for every $\mathcal{C}$ with
      $\mathcal{A}\models \forall \phi.\mathcal{C}$ we have that
      $\mathcal{A}_{\downarrow \phi}\models \mathcal{C}$.
    \end{enumerate}
  \end{proposition}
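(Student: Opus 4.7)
The plan is to prove each of the four items by reducing to the single-arrow case already established in \cite{bchk:conditional-reactive-systems}, exploiting the fact that every condition has finite depth and that $\bar{a} \models \mathcal{A}$ inspects only a finite prefix of $\bar{a}$. The key semantic property of shift that I will use is the characterisation that $\bar{a} \models \mathcal{A}_{\downarrow\phi}$ iff ``prepending $\phi$'' to $\bar{a}$ yields a composable sequence satisfying $\mathcal{A}$ (more precisely, the sequence $[\phi, a_1, a_2, \tdots]$ with appropriate $\dom$); equivalently, the index-shift $a_1;\tdots;a_n \mapsto \phi;a_1;\tdots;a_n$ turns the clauses of \Cref{def:satisfaction} for $\mathcal{A}$ into those for $\mathcal{A}_{\downarrow\phi}$. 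This observation is essentially the single-arrow adjunction, lifted to sequences by finite-prefix dependence.

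For item (1) I would fix $\bar{a}$ with $\bar{a} \models \mathcal{A}_{\downarrow\phi}$, invoke the characterisation above to obtain $[\phi, a_1, a_2, \tdots] \models \mathcal{A}$, apply the hypothesis $\mathcal{A} \models \mathcal{B}$ to conclude $[\phi, a_1, a_2, \tdots] \models \mathcal{B}$, and then apply the characterisation in reverse to get $\bar{a} \models \mathcal{B}_{\downarrow\phi}$. Item (2) is immediate from \Cref{def:satisfaction}, because $\bar{a} \models \forall\phi.\mathcal{C}$ (resp.\ $\exists\phi.\mathcal{C}$) is defined by a universal (resp.\ existential) quantification over factorisations $a_1;\tdots;a_n = \phi;g$ followed by requiring $[g, a_{n+1}, \tdots] \models \mathcal{C}$; replacing $\mathcal{C}$ by $\mathcal{D}$ in the body transfers the entailment verbatim. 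The counit $\exists\phi.(\mathcal{A}_{\downarrow\phi}) \models \mathcal{A}$ in (3) and the unit $\mathcal{A} \models \forall\phi.(\mathcal{A}_{\downarrow\phi})$ in (4) are then proved by unfolding \Cref{def:satisfaction} with the canonical witness $n=0$, $g=\phi$ (so that $a_1;\tdots;a_0 = \id = \phi;g$ fails unless we first normalise using $\forall\id$ or $\exists\id$, which is harmless by the ``alternating form'' remark), and invoking the shift characterisation on the body.

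The universal halves of (3) and (4) then follow by the standard adjunction argument on top of (1), (2), the unit, and the counit: from $\exists\phi.\mathcal{C} \models \mathcal{A}$, item (1) gives $(\exists\phi.\mathcal{C})_{\downarrow\phi} \models \mathcal{A}_{\downarrow\phi}$, and composing with the unit $\mathcal{C} \models (\exists\phi.\mathcal{C})_{\downarrow\phi}$ (an instance of (4) applied to $\exists\phi.\mathcal{C}$ together with (2)) yields $\mathcal{C} \models \mathcal{A}_{\downarrow\phi}$; item (4) is dual via the counit and item (2) for $\forall$. The main obstacle I anticipate is careful bookkeeping for the ``prepending'' operation on composable sequences, specifically handling the boundary case $n=0$ in \Cref{def:satisfaction} and ensuring that shift commutes with sequence composition at every inductive depth. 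However, since conditions have bounded depth and the adjunction is known at the single-arrow level from \cite{bchk:conditional-reactive-systems}, there is no new semantic content to discover, and the argument is essentially a lift of the existing proof through the definition of sequence-based satisfaction.
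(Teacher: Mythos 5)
Your overall strategy is the one the paper intends: the paper gives no worked-out proof of this proposition at all, but simply cites the single-arrow results of \cite{bchk:conditional-reactive-systems} and remarks that they carry over to sequences because conditions have finite depth and satisfaction only inspects finite prefixes. Your use of the characterisation $[\phi;a_1,a_2,\tdots]\models\mathcal{A}\iff[a_1,a_2,\tdots]\models\mathcal{A}_{\downarrow\phi}$ together with recomposition-invariance of satisfaction is exactly that lift, and your arguments for items (1) and (2) are correct as sketched.

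Two of your steps fail as written, though. First, the ``canonical witness $n=0$, $g=\phi$'' for $\exists\phi.(\mathcal{A}_{\downarrow\phi})\models\mathcal{A}$ and $\mathcal{A}\models\forall\phi.(\mathcal{A}_{\downarrow\phi})$ is not a witness: with $n=0$ you would need $\id=\phi;g$, which holds only if $\phi$ is a section, and inserting $\forall\id$ or $\exists\id$ does not change that. In fact no canonical witness is needed -- for the counit the factorisation $a_1;\tdots;a_n=\phi;g$ is \emph{given} by the assumption and the shift characterisation plus recomposition yields $\bar a\models\mathcal{A}$; for the unit one argues the same way for every given factorisation. Second, and more substantially, your derivation of the universal halves of (3) and (4) invokes the unit $\mathcal{C}\models(\exists\phi.\mathcal{C})_{\downarrow\phi}$ (dually, the counit $(\forall\phi.\mathcal{C})_{\downarrow\phi}\models\mathcal{C}$) and claims it is ``an instance of (4) applied to $\exists\phi.\mathcal{C}$ together with (2)''. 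It is not: item (4) applied to $\exists\phi.\mathcal{C}$ gives $\exists\phi.\mathcal{C}\models\forall\phi.\bigl((\exists\phi.\mathcal{C})_{\downarrow\phi}\bigr)$, a statement rooted at $A$, and neither (1) nor (2) converts it into the needed entailment rooted at $B$; in this preorder setting one adjoint inequality cannot be formally derived from the other ones you have, so both the missing unit and the missing counit need their own short direct proofs (e.g.\ from $\bar c\models\mathcal{C}$ one gets $[\phi;c_1,c_2,\tdots]\models\exists\phi.\mathcal{C}$ with witness $n=1$, $g=c_1$, hence $\bar c\models(\exists\phi.\mathcal{C})_{\downarrow\phi}$ by the shift characterisation). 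With these two repairs -- both of which use only the tools you already deploy -- the argument is complete.
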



  \begin{proposition}[Laws for conditions]
    \label{prop:laws-conditions}
    One easily obtains the following laws for shift and
    quantification, conjunction and disjunction:
    \begin{align*}
      \mathcal{A}_{\downarrow\id} &\equiv \mathcal{A} &
      \mathcal{A}_{\downarrow \phi\comp\psi} &\equiv
      (\mathcal{A}_{\downarrow \phi})_{\downarrow \psi} \\
      \forall \id.\mathcal{A} &\equiv \mathcal{A} & \forall
      (\phi\comp\psi).\mathcal{A} &\equiv \forall \phi.\forall
      \psi.\mathcal{A} \\
      \exists \id.\mathcal{A} &\equiv \mathcal{A} & \exists
      (\phi\comp\psi).\mathcal{A} &\equiv \exists \phi.\exists
      \psi.\mathcal{A} \\[5pt]
      (\mathcal{A}\land \mathcal{B})_{\downarrow \phi} &\equiv 
      \mathcal{A}_{\downarrow \phi}\land \mathcal{B}_{\downarrow \phi} 
      &
      (\mathcal{A}\lor \mathcal{B})_{\downarrow \phi} &\equiv 
      \mathcal{A}_{\downarrow \phi}\lor \mathcal{B}_{\downarrow \phi} \\
      \forall \phi.(\mathcal{A}\land \mathcal{B}) &\equiv 
      \forall \phi.\mathcal{A} \land \forall \phi.\mathcal{B} &
      \exists \phi.(\mathcal{A} \lor \mathcal{B})  &\equiv 
      \exists \phi.\mathcal{A} \lor \exists \phi.\mathcal{B}
    \end{align*}
  \end{proposition}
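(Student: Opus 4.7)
The plan is to derive each of the ten laws by either directly unfolding Definition \ref{def:satisfaction} or by invoking the adjunctions $\exists \phi \dashv (\cdot)_{\downarrow \phi} \dashv \forall \phi$ from Proposition \ref{prop:adjunction}, used as a black box wherever possible.

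For the three identity laws in the left column I would reason pointwise on satisfaction. For instance, $\bar a \models \forall \id.\mathcal{A}$ instantiates the universal clause at $n=0$ with $g = \id$ to give $[\id, a_1, \dots] \models \mathcal{A}$, which coincides with $\bar a \models \mathcal{A}$ since prepending an identity does not affect satisfaction (the decomposition indices shift by one). The converse is immediate by picking the same witness. The cases $\exists \id.\mathcal{A} \equiv \mathcal{A}$ and $\mathcal{A}_{\downarrow \id} \equiv \mathcal{A}$ are entirely analogous, the latter using only the concrete definition of shift.

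For the three composition laws in the right column I would first establish $\mathcal{A}_{\downarrow (\phi \comp \psi)} \equiv (\mathcal{A}_{\downarrow \phi})_{\downarrow \psi}$ by induction on $\mathcal{A}$ directly from the concrete shift definition. The remaining two composition laws then follow by uniqueness of adjoints: $\exists \phi \circ \exists \psi$ is left adjoint to $(\cdot)_{\downarrow \psi} \circ (\cdot)_{\downarrow \phi}$, which by the shift-composition law is equivalent to $(\cdot)_{\downarrow (\phi \comp \psi)}$, the right adjoint of $\exists(\phi \comp \psi)$; left-adjoint uniqueness then yields $\exists(\phi \comp \psi).\mathcal{A} \equiv \exists \phi.\exists \psi.\mathcal{A}$, and the universal case is dual using right-adjoint uniqueness.

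For the four distribution laws in the bottom two rows I would invoke the standard fact that left adjoints preserve colimits (disjunctions) while right adjoints preserve limits (conjunctions). Since $(\cdot)_{\downarrow \phi}$ is both a left and a right adjoint it distributes over both $\land$ and $\lor$; $\forall \phi$ preserves $\land$; and $\exists \phi$ preserves $\lor$. Each of these also admits a short direct verification by unfolding satisfaction if one prefers to avoid the categorical shortcut. The main obstacle throughout is bookkeeping: Definition \ref{def:satisfaction} quantifies over infinite composable sequences $\bar a$ and a decomposition index $n$, so direct arguments must track these parameters. However, since conditions have finite depth and satisfaction only ever inspects finite prefixes of $\bar a$, the sequence-based laws collapse to their single-arrow analogues and no genuinely new reasoning beyond the adjunctions of Proposition \ref{prop:adjunction} is required.
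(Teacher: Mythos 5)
Your proposal is correct in substance, but it takes a genuinely different route from the paper: the paper gives no proof of this proposition at all — it imports both Proposition~\ref{prop:adjunction} and these laws from \cite{bchk:conditional-reactive-systems}, where they are shown for single-arrow satisfaction, and only remarks that they remain valid for composable sequences because conditions have finite depth and satisfaction inspects only finite prefixes. Your argument instead reconstructs the laws from scratch: identities by unfolding Definition~\ref{def:satisfaction}, the shift-composition law, and then the quantifier-composition and distribution laws purely order-theoretically from the Galois connections $\exists\phi \dashv (\cdot)_{\downarrow\phi} \dashv \forall\phi$ of Proposition~\ref{prop:adjunction} (uniqueness of adjoints, preservation of meets by upper adjoints and of joins by lower adjoints). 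This buys a self-contained, modular proof in which eight of the ten laws need no further inspection of $\kappa$ or of sequences, whereas the paper buys brevity by citation. Two small points of care in your route: for $\mathcal{A}_{\downarrow\phi\comp\psi}\equiv(\mathcal{A}_{\downarrow\phi})_{\downarrow\psi}$ a syntactic induction on $\mathcal{A}$ is more delicate than needed, since for a general class $\kappa$ the two sides are not syntactically equal (representative squares need not paste on the nose), only equivalent; the one-line semantic argument via the characterizing property $[c;d_1,d_2,\tdots]\models\mathcal{A}\iff[d_1,d_2,\tdots]\models\mathcal{A}_{\downarrow c}$ stated after Definition~\ref{def:shift} is the cleaner path, and it also gives $\mathcal{A}_{\downarrow\id}\equiv\mathcal{A}$ immediately. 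Second, your identity-law and distribution arguments (and the claim that $\land$, $\lor$ are meet and join) quietly use that satisfaction of a sequence is invariant under recomposing its arrows (prepending $\id$, fusing a prefix into one arrow); this is true — it is exactly the fact $\uptoRecomp(\models)=\models$ the paper uses elsewhere — but it should be stated as an explicit lemma rather than left implicit, since the universal clauses quantify over all decomposition indices $n$ and not just a chosen witness.
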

\end{toappendix}

\subsection{Representative Squares and the Shift Operation}
\label{sec:representative-squares-shift}

We will now define the notion of representative squares, which
describe representative ways to close a span of arrows. They
generalize idem pushouts~\cite{lm:derive-bisimulation} and
borrowed context diagrams~\cite{ek:congruence-dpo-journal}.


\begin{definition}[Representative squares~\cite{bchk:conditional-reactive-systems}]\label{def:representative-squares}%
\floatingpicspaceright[4]{2.9cm}
\begin{floatingpicright}{2.8cm}%
  \hfill\begin{tikzpicture}[x=0.9cm,y=0.9cm,baseline=2pt]
    \node (a) at (0,0) {$A$};
    \node (b) at (1.5,0) {$B$};
    \node (c) at (0,-1.7) {$C$};
    \node (d) at (1.1,-1.3) {$D$};
    \node (ds) at (1.9,-2.0) {$D'$};
    \draw[->] (a) -- node[above,overlay]{$\alpha_1$} (b);
    \draw[->] (a) -- node[left]{$\alpha_2$} (c);

    \draw[->,fancydotted] (b) -- node[left,pos=0.3]{$\beta_1$} (d);
    \draw[->,fancydotted] (c) -- node[above,pos=0.35]{$\beta_2$} (d);

    \draw[->,fancydotted] ($(d.center)+(5pt,-5pt)$) -- ($(ds.center)+(-5pt,5pt)$);
    \node at ($(d.center)+(11pt,-4pt)$) {$\gamma$};
    \draw[->] (b) -- node[right]{$\delta_1$} (ds);
    \draw[->] (c) -- node[below,overlay]{$\delta_2$} (ds);
  \end{tikzpicture}
\end{floatingpicright}
A class $\kappa$ of commuting squares in a category $\catC$ is
\emph{represen\-tative}
if 
for every commuting square $\alpha_1 ; \delta_1 = \alpha_2 ; \delta_2$
in $\catC$ there exists a representative square
$\alpha_1 ; \beta_1 = \alpha_2 ; \beta_2$ in $\kappa$ and an arrow
$\gamma$ such that $\delta_1 = \beta_1;\gamma$ and
$\delta_2 = \beta_2;\gamma$.

  \floatingpicspaceright[1]{2.9cm} For two arrows
  $\alpha_1 \colon A \to B,\ \alpha_2 \colon A \to C$, we define
  $\kappa(\alpha_1,\alpha_2)$ as the set of pairs of arrows
  $(\beta_1,\beta_2)$ which, together with $\alpha_1,\alpha_2$, form
  representative squares in $\kappa$.
\end{definition}


Compared to weak pushouts, more than one square might be needed to represent all commuting squares that extend a given span $(\alpha_1,\alpha_2)$.
In categories with pushouts (such as $\graphf$), pushouts are the most
natural candidate for representative squares.  In $\graphfinj$
pushouts do not exist, but jointly epi squares can be used
instead. For cospan categories, one can use borrowed context
diagrams~\cite{ek:congruence-dpo-journal} (see
Appendix~\ref{app:borrowed-context} for a summary).

For many categories of interest -- such as $\graphf$ and
$\ILC(\graphf)$ -- we can guarantee a choice of $\kappa$ such that
each set $\kappa(\alpha_1,\alpha_2)$ is finite and computable.
In the rest of this paper, we assume that we work in such
a category, and use such a class $\kappa$. Hence the
constructions described below are effective since the finiteness of
the transformed conditions is preserved.
%
%
\begin{toappendix}
  \paragraph*{Borrowed context diagrams}
  \label{app:borrowed-context}

For cospan categories over adhesive categories (such as $\ILC(\graphf)$), borrowed context
diagrams -- initially introduced as an extension of DPO
rewriting~\cite{ek:congruence-dpo-journal} -- can be used as representative squares. Before
we can introduce such diagrams, we first need the notion of jointly
epi.

\begin{definition}[Jointly epi]
  A pair of arrows $f \colon B \to D,\ g \colon C \to D$ is
  \emph{jointly epi} (\emph{JE}) if for each pair of arrows
  $d_1, d_2 \colon D \to E$ the following holds: if $f;d_1 = f;d_2$
  and $g;d_1 = g;d_2$, then $d_1 = d_2$.
\end{definition}

In $\graphf$ jointly epi equals jointly surjective, meaning
that each node or edge of $D$ is required to have a preimage under $f$
or $g$ or both (it contains only images of $B$ or~$C$).

This criterion is similar to, but weaker than a pushout:
For jointly epi graph morphisms $d_1 \colon B \to D,\ d_2 \colon C \to D$,
there are no restrictions on which elements of $B,C$ can be merged in~$D$.
However, in a pushout constructed from morphisms $a_1 \colon A \to B,\ a_2 \colon A \to C$,
elements in~$D$ can (and must) only be merged if they have a common preimage in $A$.
(Hence every pushout generates a pair of jointly epi arrows, but not vice versa.)

\begin{definition}[Borrowed context diagram~\cite{DBC-CRS}]
  A commuting diagram in the category $\ILC(\catC)$, where $\catC$ is
  adhesive, is a \emph{borrowed context diagram} whenever it has the
  form of the diagram shown in \Cref{fig-bc-diag}, and the four
  squares in the base category $\catC$ are pushout (PO), pullback (PB)
  or jointly~epi (JE) as indicated. Arrows depicted as
  $\rightarrowtail$ are mono. In particular $L\rightarrowtail G^+$,
  $G\rightarrowtail G^+$ must be jointly epi.
\end{definition}

\Cref{fig-bc-venn} shows a more concrete version of
\Cref{fig-bc-diag}, where graphs and their overlaps are depicted by
Venn diagrams (assuming that all morphisms are injective). Because of
the two pushout squares, this diagram can be interpreted as
composition of cospans $a;f = \ell;c = D \rightarrow G^+ \leftarrow K$
with extra conditions on the top left and the bottom right square.
The top left square fixes an overlap $G^+$ of $L$ and $G$, while $D$
is contained in the intersection of $L$ and $G$ (shown as a hatched
area). Being jointly epi ensures that it really is an overlap and does
not contain unrelated elements.
The top right pushout corresponds to the left
pushout of a DPO rewriting diagram. It contains a total match of $L$
in $G^+$.  Then, the bottom left pushout gives us the minimal borrowed
context $F$ such that applying the rule becomes possible.
The top left and the bottom left squares together ensure that the contexts to be considered are not larger than necessary.
The bottom right pullback ensures that the interface $K$ is as large
as possible.

For more concrete examples of borrowed context diagrams, we refer to~\cite{ek:congruence-dpo-journal,lmcs:8951}.

\begin{figure}[t]
  \begin{subfigure}[b]{0.45\textwidth}
    \centering\begin{tikzpicture}[x=1.10cm,y=1.10cm]
      \def\sqw{1.5}
      \def\sqh{1.48}
      \pgfmathsetmacro\sqww{2*\sqw}
      \pgfmathsetmacro\sqhh{2*\sqh}

      \path[use as bounding box] (-0.7,0.8) rectangle (2*\sqw+0.7,-2*\sqh-0.87);

      \node (d) at (0,0) {$D$};
      \node (l) at (\sqw,0) {$L$};
      \node (i) at (\sqww,0) {$I$};
      \node (g) at (0,-\sqh) {$G$};
      \node (gp) at (\sqw,-\sqh) {$\mkern-3mu G^+\mkern-5mu$};
      \node (c) at (\sqww,-\sqh) {$C$};
      \node (j) at (0,-\sqhh) {$J$};
      \node (f) at (\sqw,-\sqhh) {$F$};
      \node (k) at (\sqww,-\sqhh) {$K$};

      \def\sqlfs{\footnotesize}
      \node at (0.5*\sqw,-0.5*\sqh) {\sqlfs\textup{JE}};
      \node at (1.5*\sqw,-0.5*\sqh) {\sqlfs\textup{PO}};
      \node at (0.5*\sqw,-1.5*\sqh) {\sqlfs\textup{PO}};
      \node at (1.5*\sqw,-1.5*\sqh) {\sqlfs\textup{PB}};

      \draw[cospanintmono]
        (d) edge (l) edge (g)
        (l) edge (gp)
        (g) edge (gp)
        (i) edge (c)
        (j) edge (f);
      \draw[cospanint]
        (j) edge (g)
        (k) edge (f) edge (c)
        (f) edge (gp)
        (i) edge (l)
        (c) edge (gp);

      \draw[cospanarr] (d.north east) to[bend left =10] node[above]{$\ell$} (i.north west);
      \draw[cospanarr] (d.south west) to[bend right=10] node[left ]{$a$} (j.north west);
      \draw[cospanarr] (i.south east) to[bend left =10] node[right]{$c$} (k.north east);
      \draw[cospanarr] (j.south east) to[bend right=10] node[below]{$f$} (k.south west);
    \end{tikzpicture}%
    \caption{Structure of a borrowed context diagram.
    The inner, lighter arrows are morphisms of the base category $\catC$,
    while the outer arrows are morphisms of $\ILC(\catC)$.
    }%
    \label{fig-bc-diag}%
  \end{subfigure}\hfill
  \begin{subfigure}[b]{0.52\textwidth}
    \centering
    \begin{tikzpicture}
      \def\heightAdjustA{1.00}
      \def\heightAdjustB{1.00}
      \pgfmathsetmacro\posX{2.6}
      \pgfmathsetmacro\posY{-1.65}
      \pgfmathsetmacro\boxSxp{1.1}
      \pgfmathsetmacro\boxSxm{0.95}
      \pgfmathsetmacro\boxSyp{0.68*\heightAdjustB}
      \pgfmathsetmacro\boxSym{0.62*\heightAdjustB}
      \pgfmathsetmacro\boxW{\boxSxp+\boxSxm}
      \pgfmathsetmacro\boxH{\boxSyp+\boxSym}

      \def\circGinner{((0.05,0) ellipse[x radius=0.53, y radius=0.24]}
      \def\circGouter{((0.05,0) ellipse[x radius=0.65, y radius=0.36]}
      \def\circLinner{((0.6,0.14) circle[radius=0.28]}
      \def\circLouter{((0.6,0.14) circle[radius=0.4]}
      \def\allArea{(-\boxSxm,-\boxSym) rectangle (\boxSxp,\boxSyp)} 

      \newcommand{\drawgl}{%
        \draw \circGinner;
        \draw \circGouter;
        \draw \circLinner;
        \draw \circLouter;
      }

      \tikzset{vennfill/.style={fill=black!40}}

      \newcommand{\fillboth}{%
        \fill[vennfill] \circGouter;
        \fill[vennfill] \circLouter;
      }

      \newcommand{\boxat}[5]{
        \begin{scope}[shift={(#1*\posX,#2*\posY*\heightAdjustA)}]
          \node (#3) at ($(\boxSxp,\boxSyp)!.5!(-\boxSxm,-\boxSym)$) [draw,roundbox,cospanintcommon,minimum width=\boxW cm, minimum height=\boxH cm] {};
          \node[anchor=south west] at (-\boxSxm,-\boxSym) {$\scriptstyle{#4}$};

          \begin{scope}[even odd rule]
            #5
          \end{scope}
          \drawgl
        \end{scope}
      }

      \boxat00DD{
        \clip \circLouter;
        \clip \circGouter;
        \foreach \d in {0,1,...,20} {
          \draw[black!40, line width=0.7pt] (0, -1cm + 0.3pt + \d*2pt) -- +(1cm,1cm);
        }}

      \boxat10LL{
        \fill[vennfill] \circLouter;}

      \boxat20II{
        \clip \circLinner \allArea;
        \fill[vennfill] \circLouter;}

      \boxat01GG{
        \fill[vennfill] \circGouter;}

      \boxat11{Gp}{G^+}{
        \fillboth}

      \boxat21CC{
        \clip \circLinner \allArea;
        \fillboth}

      \boxat02JJ{
        \clip \circGinner \allArea;
        \fill[vennfill] \circGouter;}

      \boxat12FF{
        \clip \circGinner \allArea;
        \fillboth}

      \boxat22KK{
        \clip \circGinner \allArea;
        \clip \circLinner \allArea;
        \fillboth}

      \def\arrowfromto#1#2#3{
        \draw[cospanint] (#1) edge (#2);
      }
      \def\cospanfromto#1#2#3{
        \arrowfromto{#1}{#2}{mono}
        \arrowfromto{#3}{#2}{}
      }
      \def\polabels#1#2#3#4{
        \def\sqlfs{\scriptsize\color{gray}}
        \node at ($(D)!0.5!(Gp)$) {\sqlfs\textup{#1}};
        \node at ($(L)!0.5!(C)$) {\sqlfs\textup{#2}};
        \node at ($(G)!0.5!(F)$) {\sqlfs\textup{#3}};
        \node at ($(Gp)!0.5!(K)$) {\sqlfs\textup{#4}};
      }

      \cospanfromto{D}{L}{I} 
      \cospanfromto{D}{G}{J} 
      \cospanfromto{J}{F}{K}
      \cospanfromto{I}{C}{K}
      \cospanfromto{L}{Gp}{F}
      \cospanfromto{G}{Gp}{C}
      \polabels{JE}{PO}{PO}{PB}
    \end{tikzpicture}
    \caption{Borrowed context diagrams represented as Venn diagrams.
    The outer circles represent graphs $L,G$, and
    the area between the inner and outer circles represents their interfaces $I,J$.}
    \label{fig-bc-venn}
  \end{subfigure}
  \caption{Borrowed context diagrams}
\end{figure}

For cospan categories over adhesive categories, borrowed context
diagrams form a represen\-tative class of
squares~\cite{bchk:conditional-reactive-systems}. Furthermore, for
some categories (such as $\graphfinj$), there are -- up to isomorphism
-- only finitely many jointly epi squares for a given span of monos
and hence only finitely many borrowed context diagrams given $a,\ell$
(since pushout complements along monos in adhesive categories are
unique up to isomorphism).

Whenever the two cospans $\ell,a$ are in $\ILC(\graphfinj)$, it is
easy to see that $f,c$ are in  $\ILC(\graphfinj)$, i.e., they consist
only of monos, i.e., injective morphisms.

Note also that representative squares in $\graphfinj$ are simply
jointly epi squares and they can be straighforwardly extended to
squares of $\ILC(\graphfinj)$.
\end{toappendix}
%
%
%

One central operation is the shift of a condition along an arrow.  The
name shift is taken from an analogous operation for nested application
conditions (see \cite{p:development-correct-gts}).

\cbox{Check whether the term ``shift'' is used in
  \cite{eept:fundamentals-agt}. Where was this operation defined
  first?} 

\begin{definition}[Shift of a Condition]
  \label{def:shift}
  \label{prop:shift}
  Given a fixed class of representative squares $\kappa$,
  the \emph{shift of a condition $\mathcal A$ along an arrow
  $c \colon \RO(\mathcal A) \to B$} is inductively defined as follows:
  \[
    \Big( \bigland_{i \in I} \forall f_i.\mathcal A_i
    \Big)_{\downarrow c} = \bigland_{i \in
        I}\bigland_{(\alpha,\beta) \in \kappa(f_i,c)}
    \forall \beta.({\mathcal A_i}_{\downarrow \alpha})
    \mkern35mu\hfill
    \begin{tikzpicture}[baseline={(0,-0.2)}]
      \path[use as bounding box] (-0.2,0) rectangle (1.6,-0.75);
      \node (sqtl) at (0,0) {};
      \node (sqtr) at (1.25,0) {};
      \node (sqbl) at (0,-0.9) {};
      \node (sqbr) at (1.25,-0.9) {};

      \draw[->] (sqtl) -- node[overlay,above]{$f_i$} (sqtr);
      \draw[->] (sqtl) -- node[left]{$c$} (sqbl);
      \draw[->] (sqtr) -- node[right]{$\alpha$} (sqbr);
      \draw[->] (sqbl) -- node[overlay,below]{$\beta$} (sqbr);
    \end{tikzpicture}
  \]
  Shifting of existential conditions is performed analogously.
\end{definition}


The shift operation can be understood as a partial evaluation of
$\mathcal A$ under the assumption that $c$ is already ``present''.  In
particular it satisfies
$[c;d_1,d_2,\tdots] \models \mathcal{A} \iff [d_1,d_2,\tdots] \models
\mathcal{A}_{\downarrow c}$. This implies that while the
representation of the shifted condition may differ depending on the
chosen class of representative squares, the resulting conditions are
equivalent. Since we assume that each set $\kappa(f_i,c)$ is finite,
shifting a finite condition will again result in a finite condition.

As an example in $\graphfinj$, shifting $\forall\ \emptyset \to
\fcGraph{n1}{\node[gninlinable] (n1) at (0,0) {$1$}; }
\dotEx
\fcGraph{n1}{\node[gninlinable] (n1) at (0,0) {$1$}; }
\to
\fcGraph{n1}{\node[gninlinable] (n1) at (0,0) {$1$}; \node[gninlinable] (n2) at (1,0) {$2$}; \draw[gedge] (n1) to (n2); }
\dotTrue$
(every node has an outgoing edge)
over
$\emptyset \to \fcGraph{n0}{\node[gninlinable] (n0) at (0,0) {\phantom1}; }$
(a node exists)
yields
$\forall\ %
\fcGraph{n0}{\node[gninlinable] (n0) at (0,0) {\phantom1}; }
\to
\fcGraph{n0}{\node[gninlinable] (n0) at (0,0) {\phantom1}; }
\dotEx
\fcGraph{n0}{\node[gninlinable] (n0) at (0,0) {\phantom1}; }
\to
\fcGraph{n1}{\node[gninlinable] (n0) at (0,0) {}; \node[gninlinable] (n1) at (1,0) {$1$}; \draw[gedge] (n0) to (n1); }
\dotTrue \land
\mathmbox{\forall\ %
\fcGraph{n0}{\node[gninlinable] (n0) at (0,0) {\phantom1}; }
\to
\fcGraph{n1}{\node[gninlinable] (n0) at (-0.75,0) {}; \node[gninlinable] (n1) at (0,0) {$1$}; }}
\ .\big(
\exists\ %
\fcGraph{n1}{\node[gninlinable] (n0) at (-0.75,0) {}; \node[gninlinable] (n1) at (0,0) {$1$}; }
\to
\fcGraph{n1}{\node[gninlinable] (n0) at (-0.75,0) {}; \node[gninlinable] (n1) at (0,0) {$1$}; \node[gninlinable] (n2) at (1,0) {$2$}; \draw[gedge] (n1) to (n2); }
\dotTrue \lor \exists\ %
\fcGraph{n1}{\node[gninlinable] (n0) at (-0.75,0) {}; \node[gninlinable] (n1) at (0,0) {$1$}; }
\to
\fcGraph{n1}{\node[gninlinable] (n0) at (-1,0) {}; \node[gninlinable] (n1) at (0,0) {$1$}; \draw[gedge] (n1) to (n0); }
\dotTrue \big)
$
(the designated node has an outgoing edge, and so does every other
node, possibly to the designated node).

\begin{toappendix}
  \paragraph*{Visualization of shifts}
  Given a condition $\mathcal{A}$ and an arrow $c\colon A =
  \RO(\mathcal{A})\to B$, we will visualize shifts in diagrams as follows:

\noindent\mbox{}\hfill\begin{tikzpicture}
  \def\sqw{1.75}
  \node (a) at (0,0) {$A$}; \node (b) at (1*\sqw,0) {$B$}; \node (x)
  at (2*\sqw,0) {$X$};
  \draw[->] (a) -- node[above]{$c$} (b); \draw[->] (b) --
  node[above]{$d$} (x); \node[condtri,dart tip angle=30,shape border
  rotate=270,rotate around={-10:(a.center)},
  label={[rotate=0,anchor=south,label
    distance=1pt]above:{$\mkern7mu\mathcal
      A\vphantom{{}_{\downarrow}}$}}] at (a.north) {\kern3pt};
  \node[condtri,dart tip angle=30,shape border rotate=270,rotate
  around={-10:(b.center)}, label={[rotate=0,anchor=south,label
    distance=1pt]above:{$\mathcal A_{\downarrow c}$}}] at (b.north)
  {\kern3pt};
\end{tikzpicture}\hfill\mbox{}

Remember that for an arrow $d\colon B\to X$ it holds that $d\models
\mathcal{A}_{\downarrow c} \iff c;d \models \mathcal{A}$.
\end{toappendix}

\begin{toappendix}
  \paragraph*{Proofs}

If one arrow is a section, the following property is useful:
  
\begin{lemmarep}[Representative squares preserve sections]
  \label{rsq-preserve-sections}
  Let $\alpha_1 \colon A \to B,\ \alpha_2 \colon A \to C$.  If
  $\alpha_2$ is a section, then there exists some
  $(\beta_1, \beta_2) \in \kappa(\alpha_1, \alpha_2)$ such that
  $\beta_1$ is a section.
\end{lemmarep}

\begin{proof}
  Consider the commuting square given below on the left, where $r_2$
  is some right-inverse of $\alpha_2$, i.e., $\alpha_2;r_2 = \id$.
  Since every commuting square can be reduced to a representative
  square, there exist $\beta_1, \beta_2, \gamma$ as given in the
  diagram below on the right.  Since $\beta_1 ; \gamma = \id_B$ is an
  isomorphism, $\beta_1$ is a section and $\gamma$ a retraction.

  \centering\begin{tikzpicture}[x=1.10cm,y=1.10cm]

  \def\sqw{2}
  \def\sqh{2}
  \def\sqhsp{0.3} 
  \def\sqhspi{0.4} 
  \def\sqhsr{0.25} 
  \def\sqhsrr{0.25} 

  \begin{scope}[shift={(0,0)}]
    \node (a) at (0,0) {$A$};
    \node (b) at (\sqw-\sqhsr,0) {$B$};
    \node (c) at (0,-\sqh+\sqhsrr) {$C$};
    \node (ds) at (\sqw-\sqhsr,-\sqh+\sqhsrr) {$B$};
    \draw[->] (a) -- node[above]{$\alpha_1$} (b);
    \draw[->] (a) -- node[left]{$\alpha_2$} (c);

    \draw[->] (b) -- node[right]{$\id_B$} (ds);
    \draw[->] (c) -- node[below]{$r_2 ; \alpha_1$} (ds);
  \end{scope}

  \node at (\sqw+1,-0.5*\sqh) {$\rightarrow$};

  \begin{scope}[shift={(\sqw+2.5,0)}]
    \node (a) at (0,0) {$A$};
    \node (b) at (\sqw-\sqhsp,0) {$B$};
    \node (c) at (0,-\sqh+\sqhsp) {$C$};
    \node (d) at (\sqw-\sqhsp-\sqhspi,-\sqh+\sqhsp+\sqhspi) {$D$};
    \node (ds) at (\sqw,-\sqh) {$B$};
    \draw[->] (a) -- node[above]{$\alpha_1$} (b);
    \draw[->] (a) -- node[left]{$\alpha_2$} (c);
    \draw[->] (b) -- node[left,pos=0.25]{$\beta_1$} (d);
    \draw[->] (c) -- node[above,pos=0.3]{$\beta_2$} (d);

    \draw[->] (d.center)+(5pt,-5pt) -- (ds);
    \node at ($(d.center)+(12pt,-4pt)$) {$\gamma$};
    \draw[->] (b) -- node[right]{$\id_B$} (ds);
    \draw[->] (c) -- node[below]{$r_2 ; \alpha_1$} (ds);
  \end{scope}
  \end{tikzpicture}%
\end{proof}

\end{toappendix}

In the case where $\alpha_1$ in \Cref{def:representative-squares} is
an iso, we can always assume that
$\kappa(\alpha_1,\alpha_2) = \{(\alpha_1^{-1};\alpha_2,\ \id)\}$ and
we will use this assumption in the paper.


\subsection{Further Concepts}


Our goal is to develop a procedure that finds a finite model if one
exists, produces unsatisfiability proofs if a condition has neither
finite nor infinite models, and otherwise does not terminate. In order
to state the correctness of this procedure, we will need an abstract
notion of finiteness and to this aim we introduce \emph{finitely
  decomposable morphisms}.
Intuitively this means that every infinite
decomposition contains only finitely many non-isomorphisms.



\begin{definition}[Finitely decomposable morphism]
  \label{def:f-d-morphism}
  A morphism $m \colon A \to B$ is \emph{finitely decomposable} if for
  every infinite sequence of $(f_i, g_i)$, $i\in\natzero$, such that
  $f_0 = m$ and $f_i = g_i ; f_{i+1}$ (cf.\ the diagram below), only
  finitely many $g_i$ are non-isomorphisms.
\end{definition}

\iffalse
\noindent\mbox{}\hfill\begin{tikzpicture}[x=3cm,y=-1cm]
  \node (a) at (0,0) {$A$};
  \node (b) at (1,0) {$B$};
  \node (n1) at (0,1) {};
  \node (n2) at (0,2) {};
  \node (n3) at (0,2.2) {$\vdots$};
  \draw[->] (a) -- node[above]{$m = f_0$} (b);
  \draw[->] (a) -- node[left]{$g_0$} (n1);
  \draw[->] (n1) -- node[left]{$g_1$} (n2);
  \draw[->] (n1) to[bend right=10] node[above,pos=0.2]{$f_1$} (b.220);
  \draw[->] (n2) to[bend right=15] node[above,pos=0.2]{$f_2$} (b.250);
\end{tikzpicture}\hfill\mbox{}
\else
\noindent\mbox{}\hfill\begin{tikzpicture}[x=2cm,y=-1cm]
  \node (a) at (0,0) {$A$};
  \node (b) at (0,1) {$B$};
  \node (n1) at (1,0) {};
  \node (n2) at (2,0) {};
  \node (n3) at (2.2,0) {$\cdots$};
  \draw[->] (a) -- node[left]{$m = f_0$} (b);
  \draw[->] (a) -- node[above,overlay]{$g_0$} (n1);
  \draw[->] (n1) -- node[above,overlay]{$g_1$} (n2);
  \draw[->] (n1.255) to[bend left=10] node[above,pos=0.75]{$f_1$} (b.25);
  \draw[->] (n2.255) to[bend left=15] node[above,pos=0.5]{$f_2$} (b.0);
\end{tikzpicture}\hfill\mbox{}
\fi


Note that in $\graphfinj$ all arrows are finitely decomposable, while
this is not the case in~$\graphf$.
%
%
%
%
In $\graphf$, there exists a section $s$ (with associated retraction
$r$ such that $s;r = \id$) that is \emph{not} an iso (example:
$s = \mbox{$\fcGraph{n1}{\node[gninlinable] (n1) at (0,0)
    {$1$};} \rightarrow \fcGraph{n1}{\node[gninlinable] (n1) at (0,0)
    {$1$}; \node[gninlinable] (n2) at (0.7,0) {$2$};}$}$,
$r = \mbox{$\fcGraph{n1}{\node[gninlinable] (n1) at (0,0)
    {$1$}; \node[gninlinable] (n2) at (0.7,0)
    {$2$};}\rightarrow \fcGraph{n1}{\node[gninlinable] (n1) at (0,0)
    {$1$};} $}$).  Then, the identity on the domain of $s$ has a
decomposition into infinitely many non-isos (an alternating sequence
of $s$ and $r$, more concretely: $g_{2i} = s$, $g_{2i+1} = r$ and
$f_{2i} = \id$, $f_{2i+1} = r$) and is hence not finitely
decomposable.




While satisfaction is typically defined inductively (as in
\Cref{def:satisfaction}), i.e., as a least fixpoint, we can also view
it \coinductively, i.e., as a  greatest fixpoint, due to the fact that
all conditions are finite.

\begin{propositionrep}[Fixpoint function for satisfaction]
  \label{prop:satisfaction-function}
  Let $\bar a = [ a_1, a_2, a_3, \tdots ]\in \Seq$ be a
  composable sequence of arrows.  We define the function
  $s \colon \mathcal P(\Seq \times \Cond) \to \mathcal
  P(\Seq \times \Cond)$ as follows: Let
  $P\subseteq \Seq\times \Cond$, then
  \begin{itemize}
  \item $(\bar a, \bigland_i \forall f_i.\mathcal A_i) \in s(P)$ iff
    for every $i\in I$ and every arrow
    $g \colon \RO(\mathcal A_i) \to B$ and all~$n \in \natzero$ we
    have: if~$a_1;\tdots;a_n = f_i;g$, then
    $([g, a_{n+1}, \tdots],\mathcal A_i)\in P$.
  \item $(\bar a, \biglor_i \exists f_i.\mathcal A_i) \in s(P)$ iff
    there exists $i\in I$ and an arrow
    $g \colon \RO(\mathcal A_i) \to B$ and some~$n \in \natzero$ such
    that $a_1;\tdots;a_n = f_i;g$ and
    $([g, a_{n+1}, \tdots],\mathcal A_i)\in P$.
  \end{itemize}
  The least and greatest fixpoint of $s$ ($\mu s$, $\nu s$) \coincide
  and they equal the satisfaction relation~$\models$.
\end{propositionrep}

\begin{proof}
  First note that \Cref{def:satisfaction} is seen inductively the
  least fixpoint spelled out there is exactly the least fixpoint of
  $s$, hence ${\models} = \mu s$.
  
  It also holds that $\nu s = {\models}$ (the greatest fixpoint of $s$
  equals the satisfaction relation).
  The correctness of this characterization (i.e.\ that ``$s$ is the
  right function'') can be shown as follows:
  ${\models} \subseteq \nu s$ because $\models$ is a fixpoint of $s$.
  To show $\nu s \subseteq {\models}$, let
  $(\bar c, \mathcal A) \in \nu s = s(\nu s)$ (the latter holds
  because $\nu s$ is a fixpoint).  We show
  $(\bar c, \mathcal A) \in {\models}$ by induction on the nesting
  depth of $\mathcal A$:
\begin{itemize}
\item
  Let the depth be 1, i.e., $\mathcal A$ has no children.
  Then it must be an empty universal quantification, i.e., $\condtrue$.
  Hence also $\bar c \models \mathcal A$.
\item
  Let the statement hold for depth $d$ and let $\mathcal A$ have depth $d+1$.
  Let $(\bar c, \mathcal A) \in s(\nu s)$.
  If $\mathcal A = \biglor_i \exists f_i.\mathcal A_i$ is existential,
  there exists $n, i, g$ such that $c_1;\dots;c_n = f_i;g$ and $([g, c_{n+1}, \dots], \mathcal A_i) \in \nu s$.
  $\mathcal A_i$ has depth at most $d$, so by the induction hypothesis also $[g, c_{n+1}, \dots] \models \mathcal A_i$.
  Then also $\bar c \models \mathcal A$.
  The universal case $\mathcal A = \bigland_i \forall f_i.\mathcal A_i$ can be shown analogously.
  \qedhere
\end{itemize}
\end{proof}

\section{Satisfiability Checking in the Restricted Case}
\label{sec:satisfiability}

%
Given a condition $\mathcal{A}$, we want to know whether $\mathcal{A}$
is satisfiable and generates a finitely decomposable model, if it
exists.
Here we provide a procedure that works under the assumption that we
are working in a category where all sections are isos. This is for
instance true for $\graphfinj$ and $\ILC(\graphfinj)$, where
non-trivial right-inverses do not exist. It does not hold for
non-injective graph morphisms (see counterexample above) or
left-linear cospans (counterexample:
\mbox{$\id = \fcGraph{n1}{\node[gninlinable] (n1) at (0,0)
    {$1$};} \rightarrow \fcGraph{n1}{\node[gninlinable] (n1) at (0,0)
    {$1$};} \leftarrow \fcGraph{n1}{\node[gninlinable] (n1) at (0,0)
    {$1$}; \node[gninlinable] (n2) at (0.7,0) {$2$};} \ ;\ %
  \fcGraph{n1}{\node[gninlinable] (n1) at (0,0)
    {$1$}; \node[gninlinable] (n2) at (0.7,0)
    {$2$};} \rightarrow \fcGraph{n1}{\node[gninlinable] (n1) at (0,0)
    {$1$}; \node[gninlinable] (n2) at (0.7,0)
    {$2$};} \leftarrow \fcGraph{n1}{\node[gninlinable] (n1) at (0,0)
    {$1$};} $}).

The general case where this assumption does not hold will be
treated in \Cref{sec:general-case}.

\subsection{Tableau Calculus}

The underlying idea is fairly straightforward: we take an alternating
condition $\mathcal{A}$ and whenever it is existential, that is
$\mathcal{A} = \biglor_{i \in I} \exists f_i.\mathcal A_i$, we branch
and check whether some $\mathcal{A}_i$ is satisfiable. If instead it
is universal, i.e.,
$\mathcal{A} = \bigland_{i \in I} \forall f_i.\mathcal A_i$, we check
whether some $f_i$ is an iso. If that is not the case, clearly the
sequence of identities on $\RO(\mathcal{A})$ is a model, since there
is no arrow $g$ such that $\id = f_i;g$, assuming that all sections
are isos. If however some $f_i$ is an iso, we invoke a pull-forward
rule (see below for more details) that transforms the universal condition
into an existential condition and we continue from there. We will show
that this procedure works whenever the pull-forward follows a
\emph{fair} strategy: in particular every iso (respectively one of its
successors) must be pulled forward eventually.

The pull-forward rule relies on the equivalence
$(\mathcal{A} \land \exists f.\mathcal{B}) \equiv \exists
f.(\mathcal{A}_{\downarrow f}\land \mathcal{B})$
(cf.~Appendix~\ref{app:exists-and-shift}).

\begin{toappendix}
  \label{app:exists-and-shift}
  \begin{lemma}
    \label{lem:exists-and-shift}
    Let $\mathcal{A},\mathcal{B}$ be two conditions with root objects
    $A,B$ and let $f\colon A\to B$. Then
    \[ (\mathcal{A} \land \exists f.\mathcal{B}) \equiv \exists
      f.(\mathcal{A}_{\downarrow f}\land \mathcal{B}) \qquad
      (\mathcal{A} \lor \forall f.\mathcal{B}) \equiv \forall
      f.(\mathcal{A}_{\downarrow f}\lor \mathcal{B}) \]
  \end{lemma}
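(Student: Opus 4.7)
The plan is to prove the first equivalence $(\mathcal{A} \land \exists f.\mathcal{B}) \equiv \exists f.(\mathcal{A}_{\downarrow f} \land \mathcal{B})$ directly from the semantics of $\models$, using as the principal tool the shift--satisfaction property stated immediately after \Cref{def:shift}: prepending an arrow $c$ to a composable sequence is equivalent to shifting the condition along $c$. The dual equivalence will then follow by a fully symmetric argument.

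Before touching the main claim I would establish a small \emph{regrouping} observation: for any condition $\mathcal{C}$ and any composable sequence $[a_1,\tdots,a_n,a_{n+1},\tdots]$ of the appropriate type,
\[
  [a_1,\tdots,a_n,a_{n+1},\tdots] \models \mathcal{C} \iff [a_1;\tdots;a_n,\; a_{n+1},\tdots] \models \mathcal{C}.
\]
Both sides reduce, via the shift--satisfaction property, to $[a_{n+1},\tdots] \models \mathcal{C}_{\downarrow(a_1;\tdots;a_n)}$: on the left by iterating the property $n$ times and collapsing the nested shifts with the law $(\mathcal{C}_{\downarrow \phi})_{\downarrow \psi} \equiv \mathcal{C}_{\downarrow(\phi;\psi)}$ from \Cref{prop:laws-conditions}, and on the right by a single application.

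Then I would unfold both sides of the main equivalence. For $\Rightarrow$, assuming $\bar a \models \mathcal{A}$ and $\bar a \models \exists f.\mathcal{B}$, the latter produces witnesses $n$ and $g$ with $a_1;\tdots;a_n = f;g$ and $[g,a_{n+1},\tdots] \models \mathcal{B}$. Reusing these witnesses on the right-hand side, the only outstanding obligation is $[g,a_{n+1},\tdots] \models \mathcal{A}_{\downarrow f}$; by shift--satisfaction this is equivalent to $[f;g,\,a_{n+1},\tdots] \models \mathcal{A}$, which equals $[a_1;\tdots;a_n,\,a_{n+1},\tdots] \models \mathcal{A}$, and by regrouping is exactly $\bar a \models \mathcal{A}$, already in hand. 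The converse direction runs the same chain in reverse: one extracts the $\mathcal{B}$-conjunct at $[g,a_{n+1},\tdots]$ to recover $\bar a \models \exists f.\mathcal{B}$, and the $\mathcal{A}_{\downarrow f}$-conjunct, via shift and regrouping, to recover $\bar a \models \mathcal{A}$.

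The second equivalence $(\mathcal{A} \lor \forall f.\mathcal{B}) \equiv \forall f.(\mathcal{A}_{\downarrow f} \lor \mathcal{B})$ is handled by the same template, this time quantifying universally over all $n,g$ with $a_1;\tdots;a_n = f;g$ and using the same shift-plus-regrouping chain to pass between $\bar a \models \mathcal{A}$ and $[g,a_{n+1},\tdots] \models \mathcal{A}_{\downarrow f}$; the remaining step is the propositional observation that $A \lor \forall x.\,P(x) \iff \forall x.\,(A \lor P(x))$ when $A$ does not depend on $x$. I do not expect a genuine difficulty: the whole argument is driven by the shift--satisfaction property and a single composition law, and the only point that needs a moment of care is the regrouping observation, which itself reduces cleanly to exactly those tools.
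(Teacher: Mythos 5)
Your overall route is the paper's: unfold the semantics of $\models$, reuse the witnesses $(n,g)$, and pass between $\bar a \models \mathcal A$ and $[g,a_{n+1},\tdots]\models\mathcal A_{\downarrow f}$ via the shift--satisfaction property; the paper writes this chain for a single arrow $c=f;\alpha$ and gets the second law by duality, whereas you work with sequences and argue the dual directly with the prenex law $A\lor\forall x.P(x)\iff\forall x.(A\lor P(x))$ -- a minor, unproblematic difference.

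The gap is in the justification of your regrouping observation, which your main argument genuinely needs (it is exactly the step from $\bar a\models\mathcal A$ to $[a_1;\tdots;a_n,\,a_{n+1},\tdots]\models\mathcal A$, to which the shift property is then applied with $c=f$, $d_1=g$). The property stated after \Cref{def:shift} has the form $[c;d_1,d_2,\tdots]\models\mathcal C\iff[d_1,d_2,\tdots]\models\mathcal C_{\downarrow c}$, where $c;d_1$ is a \emph{single composed entry} of the coarser sequence. It cannot be iterated to peel $a_1$ off $[a_1,a_2,\tdots]$: to make the left-hand sequence literally equal $[a_1,a_2,\tdots]$ you must choose $d_1=\id$, which leaves you with $[\id,a_2,a_3,\tdots]\models\mathcal C_{\downarrow a_1}$, and the property offers no way to delete the leading identity (its only applicable decomposition of $\id$ is trivial). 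The same happens with your ``single application'' on the other side: taking $c=a_1;\tdots;a_n$ again forces $d_1=\id$ and produces $[\id,a_{n+1},\tdots]$, not $[a_{n+1},\tdots]$. Reading the property instead as ``prepending $c$ as a new entry'' would make your iteration work, but that reading is not what is stated, and together with the stated form it is equivalent to the regrouping fact itself -- so the reduction is circular. The fact you need is true, but it requires its own argument: either a direct induction on the (finite) depth of $\mathcal C$ using \Cref{def:satisfaction}, where the inductive step invokes regrouping for the child conditions (not the shift property), or an appeal to invariance of $\models$ under recomposition, which the paper obtains from $s$-compatibility of $\uptoRecomp$ (\Cref{thm:compat-uptorecomp}) together with ${\models}=\nu s$ (\Cref{prop:satisfaction-function}). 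With that repair, and keeping \Cref{prop:laws-conditions} for collapsing nested shifts, the rest of your argument goes through.
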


\begin{proof}
  For
  $(\mathcal A \land \exists f.\mathcal B) \equiv \exists f.(\mathcal
  A_{\downarrow f} \land \mathcal B)$: let $c$ be an arrow with source
  object $\RO(\mathcal{A})$. Then:
  \begin{align*}
    c \models (\mathcal A \land \exists f.\mathcal B) \iff& c \models
    \mathcal A \land c \models \exists f.\mathcal B
    \\
    \text{(\Cref{def:satisfaction})} \iff& c \models \mathcal A \land
    \exists \alpha (c = f \comp \alpha \:\land\: \alpha \models
    \mathcal B)
    \\
    \iff& \exists \alpha (c = f \comp \alpha \:\land\: c \models
    \mathcal A \:\land\: \alpha \models \mathcal B)
    \\
    \iff& \exists \alpha (c = f \comp \alpha \:\land\: f \comp \alpha
    \models \mathcal A \:\land\: \alpha \models \mathcal B)
    \\
    \text{(\Cref{prop:shift})} \iff& \exists \alpha (c = f \comp
    \alpha \:\land\: \alpha \models \mathcal A_{\downarrow f} \land
    \alpha \models \mathcal B)
    \\
    \iff& \exists \alpha (c = f \comp \alpha \:\land\: \alpha \models
    (\mathcal A_{\downarrow f} \land \mathcal B))
    \\
    \text{(\Cref{def:satisfaction})} \iff& c \models \exists
    f.(\mathcal A_{\downarrow f} \land \mathcal B)
  \end{align*}
  The dual equivalence can be derived from the first one by negation.
\end{proof}
\end{toappendix}

\begin{toappendix}
\begin{lemma}\label{all-iso-equiv-exists-iso}
  If $f$ is an isomorphism, then
  $\forall f . \mathcal A \equiv \exists f . \mathcal A$.
\end{lemma}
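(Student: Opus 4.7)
The plan is to prove the two implications $\forall f.\mathcal A \models \exists f.\mathcal A$ and $\exists f.\mathcal A \models \forall f.\mathcal A$ separately, using \Cref{def:satisfaction} and the fact that $f$ has a two-sided inverse $f^{-1}$. The forward direction is immediate: assume $\bar a = [a_1, a_2, \ldots] \models \forall f.\mathcal A$ and instantiate the universal quantifier with $n = 0$ (so that $a_1;\ldots;a_n = \id = f;f^{-1}$) and $g = f^{-1}$. This yields $[f^{-1}, a_1, a_2, \ldots] \models \mathcal A$, and the very same pair $(n,g)$ witnesses $\bar a \models \exists f.\mathcal A$.

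For the backward direction, assume $\bar a \models \exists f.\mathcal A$ via a witness $(n_0, g_0)$ with $a_1;\ldots;a_{n_0} = f;g_0$ and $[g_0, a_{n_0+1}, \ldots] \models \mathcal A$. Because $f$ is iso, $g_0$ is forced to equal $f^{-1};(a_1;\ldots;a_{n_0})$, and for any other candidate $(n, g)$ with $a_1;\ldots;a_n = f;g$ we likewise have $g = f^{-1};(a_1;\ldots;a_n)$. Hence both sequences $[g_0, a_{n_0+1}, \ldots]$ and $[g, a_{n+1}, \ldots]$ are two different groupings of the same underlying composable data $f^{-1}, a_1, a_2, \ldots$\,, and to conclude $[g, a_{n+1}, \ldots] \models \mathcal A$ I would invoke the regrouping invariance of satisfaction alluded to earlier in the paper: satisfaction of a condition depends only on the underlying diagram, not on how it is broken into arrows.

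The main obstacle is establishing this regrouping invariance. Concretely, for every condition $\mathcal C$ and composable sequence $[c_1, c_2, c_3, \ldots]$ one needs $[c_1, c_2, c_3, \ldots] \models \mathcal C \iff [c_1;c_2, c_3, \ldots] \models \mathcal C$, proven by induction on the depth of $\mathcal C$. The subtle point in the inductive step for $\forall f_i.\mathcal A_i$ is that the ungrouped sequence has a prefix-composition at $n = 1$ equal to $c_1$ which is absent on the grouped side; any factorization $c_1 = f_i;g$ must then be shown redundant given the $n = 2$ factorization $c_1;c_2 = f_i;(g;c_2)$, because the corresponding residuals $[g, c_2, a_3, \ldots]$ and $[g;c_2, a_3, \ldots]$ differ by one further regrouping step, which the induction hypothesis on $\mathcal A_i$ handles. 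The existential case is dual. Combining the two implications yields $\forall f.\mathcal A \equiv \exists f.\mathcal A$.
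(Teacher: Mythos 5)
Your central idea is the same as the paper's: factorization through an isomorphism is unique ($g = f^{-1};(a_1;\ldots;a_n)$), so the universal and the existential quantification range over exactly the same instances. The paper's own proof compresses this into a chain of equivalences written for the model as a single composed arrow ($c \models \forall f.\mathcal A \iff f^{-1};c \models \mathcal A \iff c \models \exists f.\mathcal A$), and the insensitivity of satisfaction to how a model is cut into arrows—which you correctly identify as the real work in the sequence-based \Cref{def:satisfaction}—is available in the paper as a separate result: the $s$-compatibility of up-to recomposition (\Cref{thm:compat-uptorecomp}, \Cref{thm:compat-upto}) yields that $\uptoRecomp(\models)$ coincides with $\models$, and this is invoked in exactly this form in the proof of \Cref{lem:all-section-implies-exists-section}. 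So the cleanest version of your argument cites that lemma instead of re-proving regrouping invariance by hand; with it, both of your implications go through (the forward one, via $n=0$ and $g=f^{-1}$, needs no invariance at all).

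If you keep the self-contained induction, note that the statement you chose (merging only the first two arrows) does not quite close as sketched. In the universal step you handle the $n=1$ factorization $c_1 = f_i;g$, whose residuals $[g, c_2, c_3, \ldots]$ and $[g;c_2, c_3, \ldots]$ indeed differ by a front merge covered by the induction hypothesis for $\mathcal A_i$. But you omit $n=0$: there both sides factor the identity, and the residuals are $[g, c_1, c_2, c_3, \ldots]$ versus $[g, c_1;c_2, c_3, \ldots]$, i.e.\ a merge at the \emph{second} position, which your hypothesis does not cover directly. This is repairable: chain three front-merge equivalences for the child $\mathcal A_i$, namely $[g, c_1, c_2, c_3, \ldots] \models \mathcal A_i \iff [g;c_1, c_2, c_3, \ldots] \models \mathcal A_i \iff [g;c_1;c_2, c_3, \ldots] \models \mathcal A_i \iff [g, c_1;c_2, c_3, \ldots] \models \mathcal A_i$; or, better, strengthen the induction statement to arbitrary recomposition of a finite prefix over a common tail (which is precisely \Cref{thm:compat-uptorecomp} restricted to $\models$), after which the induction closes uniformly. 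The existential case has the same uncovered $n=0$ subcase with the same fix; apart from this, your argument is sound.
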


\begin{proof}
  \[\def\arraystretch{1.2}\setlength{\arraycolsep}{0pt}\begin{array}{rlrrl}
    &\phantom{\iff{}} \mathrlap{c \models \forall f . \mathcal A}
    \\
    \text{(def sat)}&\iff \forall \alpha\colon \text{if} & c =& f ; \alpha &\text{ then $\alpha \models \mathcal A$}
    \\
    \text{($f$ iso)}
    &\iff \forall \alpha\colon \text{if} & f^{-1} ; c =&\mkern\medmuskip f^{-1} ; f ; \alpha &\text{ then $\alpha \models \mathcal A$}
    \\
    &\iff \forall \alpha\colon \text{if} & f^{-1} ; c =& \alpha &\text{ then $\alpha \models \mathcal A$}
    \\
    &\iff \forall \alpha\colon \text{if} & f^{-1} ; c =& \alpha &\text{ then $f^{-1} ; c \models \mathcal A$}
    \\
    &\iff &&& \phantom{\text{ and }}\mkern\thinmuskip f^{-1} ; c \models \mathcal A
    \\
    \text{(existential introduction)}&\iff \exists \alpha\colon & f^{-1} ; c =& \alpha &\text{ and $f^{-1} ; c \models \mathcal A$} \\
    &\iff \exists \alpha\colon & f^{-1} ; c =& \alpha &\text{ and $\alpha \models \mathcal A$} \\
    \text{($f$ iso)}&\iff \exists \alpha\colon & f ; f^{-1} ; c =& f ; \alpha &\text{ and $\alpha \models \mathcal A$} \\
    &\iff \exists \alpha\colon & c =& f ; \alpha &\text{ and $\alpha \models \mathcal A$} \\
    \text{(def sat)}&\iff \mathrlap{c \models \exists f . \mathcal A}
  \end{array}\]

\end{proof}
\end{toappendix}

\begin{lemmarep}[Pulling forward isomorphisms]\label{lem:pull-forward-isos}
  Let
  \raisebox{0pt}[\height][\depth+2pt]{$\bigland_{i \in I} \forall
    f_i.\mathcal A_i$} be a universal condition and assume
  for some $p \in I$, $f_p$ is an iso and
  $\mathcal A_p = \biglor_{j \in J} \exists g_j.\mathcal B_j$. Then
  $f_p$ can be \emph{pulled forward}:
  \[
    \bigland_{i \in I} \forall f_i.\mathcal A_i
    \:\equiv\:
    \exists f_p.\biglor_{j \in J} \exists g_j.\Bigg(
        B_j \land
        \Big( \bigland_{m \in I \setminus \{p\}} \forall f_m.\mathcal
        A_m \Big)_{\downarrow f_p;g_j}
      \Bigg)
  \]
\end{lemmarep}

\begin{proof}
  \def\landMneP{\bigland_{m \in I \setminus \{p\}}}
  \def\allFmAm{\forall f_m.\mathcal A_m}
  \def\lorJinJ{\biglor_{j \in J}}
  The following calculations rely on \Cref{prop:laws-conditions}.
  \[
  \def\MatchesInDefBelow#1#2{\color{black!12}\underbracket{\color{black}\textstyle #1}_{\begingroup\color{black!40}#2\endgroup}}
  \def\MatchesInDefAbove#1#2{\color{black!12}\overbracket{\color{black}\textstyle #1}^{\begingroup\color{black!40}#2\endgroup}}
  \renewcommand{\arraystretch}{1.6}%
  \setlength{\arraycolsep}{0.3pt}%
  \begin{array}{rlll}
      \left( \bigland_{i \in I} \forall f_i.\mathcal A_i \right)
    &=
      \forall f_p.\mathcal A_p
      &\land\;{}& \Big( \landMneP \allFmAm \Big)
    \\&=
      \Big(\forall f_p.\big(\lorJinJ \exists g_j.\mathcal B_j \big)\Big)
      \;{}&\land& \Big( \landMneP \allFmAm \Big)
    \\ \text{(\Cref{all-iso-equiv-exists-iso})} &\equiv
      \MatchesInDefBelow{ \Big(\exists f_p.\big(\lorJinJ \exists g_j.\mathcal B_j \big)\Big) }{\exists f.\mathcal B}
      &\land& \MatchesInDefBelow{ \Big( \landMneP \allFmAm \Big) }{\mathllap{\land\quad \mathcal A}}
    \\ \text{(\Cref{lem:exists-and-shift})} &\equiv
      \MatchesInDefAbove{ \exists f_p.\bigg(
        \lorJinJ \exists g_j.\mathcal B_j }{\exists f.(\mathcal B}
        &\land& \MatchesInDefAbove{ \Big(\landMneP (\allFmAm)\Big)_{\downarrow f_p}
      \bigg) }{\mathllap{\land\quad \mathcal A_{\downarrow f}})}
    \\ \text{(distributivity)} &\equiv
      \exists f_p.\lorJinJ \bigg(
        \exists g_j.\mathcal B_j
        &\land& \landMneP (\allFmAm)_{\downarrow f_p}
      \bigg)
    \\ \text{(\Cref{lem:exists-and-shift})} &\equiv
      \exists f_p.\lorJinJ \exists g_j . \Big(
        \mathcal B_j
        &\land& \landMneP (\allFmAm)_{\downarrow f_p \downarrow g_j}
      \Big)
    \\ &\equiv
      \exists f_p.\lorJinJ \exists g_j . \Big(
        \mathcal B_j
        &\land& \landMneP (\allFmAm)_{\downarrow f_p ; g_j}
      \Big)
    \end{array}\] Note that if
  $\left( \bigland_{i \in I} \forall f_i.\mathcal A_i \right)$ is
  alternating, then so is the condition after $\exists f_p$ in the
  last line above.
\end{proof}


\begin{definition}[SatCheck tableau construction rules]\label{satcheck.new}
  Given an alternating condition $\mathcal{A}$, we give rules for the
  construction of a tableau for $\mathcal{A}$ that has conditions as
  nodes, $\mathcal A$ as root node, and edges ($\to$) labeled with arrows.
  The tableau is extended at its leaf nodes as
  follows:
  \[\begin{array}{ll|ll}
    \text{For every $p \in I$:}
    &&&
    \text{For \emph{one} $p \in I$ such that $f_p$ is iso and
      $\mathcal A_p = \biglor\nolimits_{j \in J} \exists g_j . \mathcal B_j$:
    }
    \\
    \displaystyle\biglor_{i \in I} \exists f_i . \mathcal A_i
    \:\xrightarrow{f_p}\:
    \mathcal A_p
    &&&
    \def\addExtraDepthTo#1#2{\raisebox{0pt}[\height][\depth+#1]{#2}}
    \displaystyle\bigland_{i \in I} \forall f_i . \mathcal A_i
    \:\xrightarrow{f_p}\:
    \underbrace{%
    \biglor_{j \in J} \exists g_j . \addExtraDepthTo{2pt}{\bigg(}
      \mathcal B_j
    }_{\mathcal A_p}
      \land
    \underbrace{%
      \Big( \bigland_{\mkern-8mu m \in I \setminus \{p\} \mkern-12mu} \forall f_m . \mathcal A_m \Big)_{\downarrow f_p ; g_j}
    \vphantom{\addExtraDepthTo{2pt}{\bigg)}}
    }_{\text{\clap{other children, shifted to include $g_j$}}}
    \mkern-4mu
    \bigg)
  \end{array}\]
  \begin{itemize}
  \item For existential conditions, for \emph{each}(!) child condition
    $\exists f_p.\mathcal A_p$, add a new descendant.
  \item For universal conditions, non-deterministically pick
    \emph{one}(!)  child condition $\forall f_p.\mathcal A_p$ that can
    be pulled forward ($f_p$ is an iso), pull it forward (cf.\ %
    \Cref{lem:pull-forward-isos}), and add the result as its (only)
    descendant. If a universal condition contains no isos, then add no
    descendant.
  \end{itemize}

  A \emph{branch} of a tableau is a (potentially infinite) path
  $\mathcal A_0 \xrightarrow{u_1} \mathcal{A}_1 \xrightarrow{e_1}
  \mathcal{A}_2 \xrightarrow{u_2} \dots$ starting from the root node.
  A finite branch is \emph{extendable} if one of the tableau
  construction rules is applicable at its leaf node and would result
  in new nodes (hence, a branch where the leaf is an empty existential
  or a universal without isomorphisms is not extendable).  A branch is
  \emph{closed} if it ends with an empty existential, otherwise it is
  \emph{open}.
\end{definition}

Due to \Cref{lem:pull-forward-isos} each universal condition is (up to
iso $f_p$) equivalent to its (unique) descendant (if one exists),
while an existential condition is equivalent to the disjunction of its
descendants prefixed with existential quantifiers.

The labels along one branch of the tableau are arrows between the root
objects of the conditions. Their composition corresponds to the prefix
of a potential model being constructed step by step.
%
Finite paths represent a model if they are open and not extendable.
For infinite paths, we need an additional property to make sure that
the procedure does not ``avoid'' a possibility to show
unsatisfiability of a condition.
%

To capture that, we introduce the notion of fairness, meaning that all
parts of a condition are eventually used in a proof and are not postponed indefinitely (a related concept
is saturation, see e.g.\ %
\cite{lo:tableau-graph-properties}, though the definition deviates due to a different setup).  For this we first need to track how pulling forward
one child condition changes the other children by shifting.  We define
a \emph{successor relation} that, for each pair of $\forall$-condition
and one of its $\forall$-grandchildren, relates child conditions of
the $\forall$-condition to their shifted counterparts (the \emph{successors}) in the
$\forall$-condition of the second-next nesting level. The successor
relation is similar in spirit to the one used
in~\cite{lo:tableau-graph-properties}.
In this work, saturation is given in a more descriptive way and has to account for nesting levels in the tableau, a complication that we were able to avoid in the present paper.

\begin{definition}[Successor relation]
  \label{lts-asai-successor-relation}
  Assume in the construction of a tableau we have a path
  \[ \bigland_{i\in I} \forall f_i.\mathcal A_i
    \:\xrightarrow{f_p}\:
    \mathcal{C}
    \:\xrightarrow{g_j}\:
    \mathcal B_j \land \Big( \bigland_{\mkern-8mu m\in I\setminus\{p\} \mkern-8mu}
    \forall f_m.\mathcal A_m \Big)_{\downarrow f_p;g_j}
    =
    \mathcal B_j \land
    \bigland_{m\in I\setminus\{p\}}
    \bigland_{(\alpha,\beta) \in \kappa(f_m,\; f_p;g_j )} \forall
    \beta.({\mathcal A_m})_{\downarrow \alpha} \] where $\mathcal{C}$
  is the existential condition given in \Cref{satcheck.new}.  Then
  for each $m \in I \setminus \{p\}$,
  each $\forall \beta.({\mathcal A_m})_{\downarrow \alpha}$ where
  $(\alpha,\beta) \in \kappa(f_m,\; f_p;g_j)$ is a \emph{successor} of
  $\forall f_m.\mathcal A_m$. The transitive closure of the successor
  relation induces the \emph{indirect successor relation}.




\end{definition}


\begin{definition}[Fairness]
  An infinite branch of a tableau is \emph{fair} if for each universal
  condition $\mathcal A$ on the branch and each child condition
  $\forall f_i.\mathcal A_i$ of $\mathcal A$ where $f_i$ is an iso, it
  holds that some indirect successor of $\forall f_i.\mathcal A_i$ is
  eventually pulled forward.
\end{definition}

\begin{remark}[Fairness strategies]
  \label{rem:fairness-strategies}
  One possible strategy that ensures fairness is to maintain for each
  incomplete branch a queue of child conditions for which a successor
  must be pulled forward. Then the first entry in this queue is
  processed. Note that by the assumption on $\kappa$ made earlier at the end of \Cref{sec:representative-squares-shift}, each iso in a universal condition that is not
  pulled forward has exactly one successor and the queue is modified
  by replacing each condition accordingly and adding newly generated
  child conditions with isos at the end. 
  %
\end{remark}

\subsection{Up-To Techniques, Fair Branches and Models}

While showing soundness of the tableau method is relatively
straightforward, the crucial part of the completeness proof is to show
that every infinite and fair branch of the tableau corresponds to a
model. The proof strategy is the following: given such a branch, we
aim to construct a witness for this model, by pairing conditions on
this path with the suffix consisting of the sequence of arrows
starting from this condition. If one could show that the set
$P \subseteq \Seq \times \Cond$ of pairs so obtained is a
post-fixpoint of the satisfaction function $s$ defined in
\Cref{prop:satisfaction-function} ($P\subseteq s(P)$), we could
conclude, as the satisfaction relation $\models$ is the greatest
fixpoint of $s$ (\Cref{prop:satisfaction-function}) and hence above
any post-fixpoint.

However, $P$ is in general not a post-fixpoint, which is mainly due to
the fact that universal conditions are treated ``sequentially'' one
after another and are ``pulled forward'' only if they become
isos. Hence, if we want to show that for a chain $[a_1,a_2,\tdots]$
the universal condition of the form
$\bigland_i \forall f_i.\mathcal{A}_i$ is satisfied, we have to prove
for every child $\forall f_i.\mathcal{A}_i$ that whenever
$a_1;\tdots;a_n = f_i;g$ it holds that
$[g,a_{n+1},\tdots]\models \mathcal{A}_i$. If
$\forall f_i.\mathcal A_i$ actually is the child that is pulled
forward in the next tableau step, $P$ contains the tuple required by
$s$.  If not, there is a ``delay'' and intuitively that means that we
can not guarantee that $P$ is indeed a post-fixpoint.

However it turns out that it is a post-fixpoint up-to
($P\subseteq s(u(P))$), where $u$ is a combination of one or more
suitable up-to functions.
We first explore several such up-to functions:
$\uptoConj(P)$ obtains new conditions by non-deterministically removing parts of conjunctions;
with $\uptoRecomp(P)$ we can arbitrarily recompose (decompose and compose) the arrows in a potential model;
with $\uptoShift(P)$ we can undo a shift;
and $\uptoIso(P)$ allows replacing conditions with isomorphic conditions.
For each, we show their $s$-compatibility (i.e., $u(s(P)) \subseteq s(u(P))$).

\begin{toappendix}
\begin{lemma}[Up-to conjunction removal]
  \label{thm:upto-conjunction-removal}
  Let $P \subseteq \Seq \times \Cond$, i.e.\ a set of tuples of
  potential model and condition. We inductively define a relation
  $\uptoConjCond$ containing a pair of conditions
  $(\mathcal A, \mathcal T)$ iff $\mathcal T$ is the same as
  $\mathcal A$ but with some conjunctions removed. That is,
  $\uptoConjCond$ contains
  \[
  \renewcommand{\arraystretch}{1.6}%
  \setlength{\arraycolsep}{0.3pt}%
  \begin{array}{rccll}
    \displaystyle \bigland_{i \in I} \forall f_i.\mathcal A_i
    &\mkern18mu \mathrel{\uptoConjCond} \mkern12mu
    &\displaystyle \bigland_{j \in J \subseteq I}
    &\forall f_j.\mathcal T_j
    &\text{\quad whenever $(\mathcal A_j, \mathcal T_j) \in
      \uptoConjCond$ for all $j\in J$}
    \\
    \displaystyle \biglor_{i \in I} \exists f_i.\mathcal A_i
    &\mkern18mu \mathrel{\uptoConjCond} \mkern12mu
    &\displaystyle \biglor_{i \in I}
    &\exists f_i.\mathcal T_i
    &\text{\quad whenever $(\mathcal A_i, \mathcal T_i) \in
      \uptoConjCond$ for all $i\in I$}
  \end{array}\]
  Also, define
  \mbox{$\uptoConj(P) = \{ (\bar c, \mathcal T) \mid (\bar c, \mathcal A) \in P,\ (\mathcal A, \mathcal T) \in \uptoConjCond \}$}.
  %
  The function $\uptoConj$ is $s$-compatible.
\end{lemma}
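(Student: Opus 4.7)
The plan is to unfold the definition of $s$-compatibility, $\uptoConj(s(P)) \subseteq s(\uptoConj(P))$, by taking an arbitrary $(\bar c, \mathcal T) \in \uptoConj(s(P))$, picking an $\mathcal A$ with $(\bar c, \mathcal A) \in s(P)$ and $(\mathcal A, \mathcal T) \in \uptoConjCond$, and splitting on the top-level shape of $\mathcal A$ (which by the inductive definition of $\uptoConjCond$ is also the shape of $\mathcal T$).

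In the universal case, write $\mathcal A = \bigland_{i \in I} \forall f_i.\mathcal A_i$ and $\mathcal T = \bigland_{j \in J} \forall f_j.\mathcal T_j$ with $J \subseteq I$ and $(\mathcal A_j, \mathcal T_j) \in \uptoConjCond$ for each $j \in J$. To witness $(\bar c, \mathcal T) \in s(\uptoConj(P))$, I would fix any $j \in J$, any $g$, and any $n$ with $c_1;\dots;c_n = f_j;g$. Since $j$ also lies in $I$, membership of $(\bar c, \mathcal A)$ in $s(P)$ yields $([g, c_{n+1}, \dots], \mathcal A_j) \in P$; pairing this with $(\mathcal A_j, \mathcal T_j) \in \uptoConjCond$ gives $([g, c_{n+1}, \dots], \mathcal T_j) \in \uptoConj(P)$, as required. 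The key point is simply that shrinking the index set weakens the universal, so fewer obligations appear on the right-hand side than are guaranteed on the left.

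The existential case is symmetric but uses the same index set $I$: from $\mathcal A = \biglor_{i \in I} \exists f_i.\mathcal A_i$ and $(\bar c, \mathcal A) \in s(P)$ I extract the existential witness $(i, g, n)$ with $c_1;\dots;c_n = f_i;g$ and $([g, c_{n+1}, \dots], \mathcal A_i) \in P$. The inductive clause for $\uptoConjCond$ provides $(\mathcal A_i, \mathcal T_i) \in \uptoConjCond$, so the same triple $(i, g, n)$ works for $\mathcal T = \biglor_{i \in I} \exists f_i.\mathcal T_i$, placing $(\bar c, \mathcal T)$ into $s(\uptoConj(P))$.

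I do not expect any real obstacle: the definition of $\uptoConjCond$ is designed to match the shape of $s$ clause by clause, and the only asymmetry (the $J \subseteq I$ in the universal case) happens to go in the direction that makes the implication easier, not harder. The only point deserving care is a brief remark that the base cases $\condtrue$ and $\condfalse$ are covered automatically -- $\condtrue \mathrel{\uptoConjCond} \condtrue$ and $\condfalse \mathrel{\uptoConjCond} \condfalse$ are instances of the two displayed clauses with $J = I = \emptyset$, and the corresponding conditions in $s(P)$ and $s(\uptoConj(P))$ match trivially.
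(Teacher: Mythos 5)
Your proposal is correct and follows essentially the same route as the paper's proof: unfold $s$-compatibility as $\uptoConj(s(P))\subseteq s(\uptoConj(P))$, split on whether the condition is universal or existential, and in each case transfer the witness/obligation through $\uptoConjCond$, with the universal case relying on $J\subseteq I$ weakening the obligations exactly as the paper does. The extra remark about the empty-conjunction/disjunction base cases is fine but not needed, since they are just instances of the two clauses.
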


\begin{proof}
  %
  We show that $(\bar c, \mathcal T) \in \uptoConj(s(P))$ implies
  $(\bar c, \mathcal T) \in s(\uptoConj(P))$.

  Let $(\bar c, \mathcal T) \in \uptoConj(s(P))$.
  $\mathcal T$ is obtained from some $\mathcal A$ by forgetting elements of a conjunction,
  i.e., $(\mathcal A, \mathcal T) \in \uptoConjCond$,
  and $(\bar c, \mathcal A) \in s(P)$.
  There are two cases:

  \begin{itemize}
    \proofPart{$\mathcal T = \biglor_{i \in I} \exists f_i.\mathcal T_i$, i.e.,
      existential}
    In this case we know that
    $\mathcal{A} = \biglor_{i \in I} \exists f_i.\mathcal A_i$ and
    $(\mathcal{A}_i,\mathcal{T}_i)\in \uptoConjCond$ for all $i\in
    I$.

    Since $(\bar c, \mathcal A) \in s(P)$, there exists some $i\in I$
    and some $g,n$ such that $c_1;\dots;c_n = f_i ; g$ and
    $([g, c_{n+1}, \dots], \mathcal A_i) \in P$.  We know that
    $(\mathcal{A}_i,\mathcal{T}_i)\in \uptoConjCond$, which implies
    that $([g, c_{n+1}, \dots], \mathcal T_i) \in \uptoConj(P)$.

    This however is exactly what is required for
    $(\bar c, \mathcal T) \in s(\uptoConj(P))$ to hold, according to
    the definition of $s$.

  \proofPart{$\mathcal T = \bigland_{j \in J} \forall f_j.\mathcal T_j$, i.e., universal}
  In this case we know that
  $\mathcal{A} = \bigland_{j \in J\subseteq I} \forall f_i.\mathcal
  A_i$ and $(\mathcal{A}_i,\mathcal{T}_i)\in \uptoConjCond$ for all
  $i\in J$.

  Since $(\bar c, \mathcal A) \in s(P)$, for all $i\in I$ and all
  $g,n$, it holds that if $c_1;\dots;c_n = f_i ; g$, then
  $([g, c_{n+1}, \dots], \mathcal A_i) \in P$.
  In these situations, from
  $([g, c_{n+1}, \dots], \mathcal A_i) \in P$ and
  $(\mathcal{A}_i,\mathcal{T}_i)$ (for $i\in J \subseteq I$), it
  follows that $([g, c_{n+1}, \dots], \mathcal T_i) \in \uptoConj(P)$
  for all $i\in J$.

  This however is exactly what is required for
  $(\bar c, \mathcal T) \in s(\uptoConj(P))$ to hold, according to the
  definition of $s$.  \qedhere
  \end{itemize}
\end{proof}

\begin{lemma}[Up-to recomposition]
  \label{thm:compat-uptorecomp}
  Let $P \subseteq \Seq \times \Cond$.
  Then, $\uptoRecomp(P)$ allows splitting and merging parts of the \mbox{sequence} of arrows, i.e.,
  $\uptoRecomp(P) = \{
  ([b_1, \dots, b_\ell, \bar c ], \mathcal A) \mid
  ([a_1, \dots, a_k, \bar c], \mathcal A) \linebreak\in P,\ %
  a_1;\dots;a_k = b_1;\dots;b_\ell
  \}$.
  %
  The function $\uptoRecomp$ is $s$-compatible.
\end{lemma}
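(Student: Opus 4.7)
The goal is to establish $\uptoRecomp(s(P)) \subseteq s(\uptoRecomp(P))$. Unpacking the definitions, I would start from an arbitrary $(\bar b, \mathcal A) \in \uptoRecomp(s(P))$ with $\bar b = [b_1,\dots,b_\ell, c_1, c_2, \tdots]$, together with a witnessing pair $([a_1,\dots,a_k, c_1, c_2, \tdots], \mathcal A) \in s(P)$ such that $a_1;\tdots;a_k = b_1;\tdots;b_\ell$. The task is to verify directly that $\bar b$ satisfies the requirements of $s$ applied to $\uptoRecomp(P)$, splitting on whether $\mathcal A$ is existential or universal.

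For the existential case $\mathcal A = \biglor_{i\in I}\exists f_i.\mathcal A_i$, the membership in $s(P)$ yields some $i$, some $n\in\natzero$, and some $g$ with $a_1;\tdots;a_n = f_i;g$ and $([g,a_{n+1},\tdots,a_k,c_1,c_2,\tdots],\mathcal A_i)\in P$. I would case-split on $n$: if $n\le k$, I set $n' = \ell$ and $g' = g;a_{n+1};\tdots;a_k$, so that $b_1;\tdots;b_\ell = a_1;\tdots;a_k = f_i;g'$, and the residual sequence $[g',c_1,c_2,\tdots]$ arises from $[g,a_{n+1},\tdots,a_k,c_1,c_2,\tdots]$ by merging its first $k-n+1$ arrows, giving membership in $\uptoRecomp(P)$; if $n > k$, I set $n' = \ell + (n-k)$ and $g' = g$ and the residual is literally the same sequence, hence already in $P \subseteq \uptoRecomp(P)$.

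The universal case $\mathcal A = \bigland_{i\in I}\forall f_i.\mathcal A_i$ is dual. I would fix an arbitrary $i$, $n'\in\natzero$ and $g$ with the first $n'$ arrows of $\bar b$ composing to $f_i;g$, and construct a matching decomposition of $[a_1,\dots,a_k,c_1,c_2,\tdots]$: if $n'\le \ell$, take $n = k$ and $g' = g;b_{n'+1};\tdots;b_\ell$, so $a_1;\tdots;a_k = f_i;g'$; by hypothesis $([g',c_1,c_2,\tdots], \mathcal A_i)\in P$, and then the original residual $[g,b_{n'+1},\tdots,b_\ell,c_1,c_2,\tdots]$ lies in $\uptoRecomp(P)$ by the same merging. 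If $n' > \ell$, take $n = k + (n'-\ell)$ and $g' = g$, and the residual coincides with what the hypothesis provides.

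The argument involves no categorical subtleties: it is a routine induction-free case analysis that exploits the fact that equality of composites in $\catC$ is insensitive to how a composite is split. The main obstacle is purely bookkeeping — keeping track of whether the chosen splitting index falls within the ``recomposed'' prefix of length $k$ (respectively $\ell$) or beyond it, and absorbing any leftover arrows into the continuation morphism $g$. Both sub-cases reduce to one of two operations (merge a tail into $g$, or leave the tail alone), so I expect the write-up to be short.
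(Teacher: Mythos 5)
Your proposal is correct and follows essentially the same route as the paper's proof: the same case split on existential versus universal conditions, and within each the same split on whether the chosen factorization index falls inside the recomposed prefix (absorbing the leftover arrows $a_{n+1};\tdots;a_k$ resp.\ $b_{n'+1};\tdots;b_\ell$ into the continuation morphism) or beyond it (where the residual tuple is already in $P \subseteq \uptoRecomp(P)$). Only the treatment of the boundary index differs trivially ($\le$ versus $<$), which changes nothing.
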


\begin{proof}
  %
  We show that
  $([ b_1, \dots, b_\ell, \bar c], \mathcal A) \in \uptoRecomp(s(P))$ implies
  $([ b_1, \dots, b_\ell, \bar c], \mathcal A) \in s(\uptoRecomp(P))$.

  Since $([ b_1, \dots, b_\ell, \bar c], \mathcal A) \in \uptoRecomp(s(P))$,
  there must be some $([ a_1, \dots, a_k, \bar c], \mathcal A) \in s(P)$
  such that $a_1 ; \dots ; a_k = b_1 ; \dots ; b_\ell$.
  Furthermore, let $\bar c = [ c_1, c_2, \dots ]$.

  To show that $([ b_1, \dots, b_\ell, \bar c], \mathcal A) \in s(\uptoRecomp(P))$,
  we consider the following cases:

  \begin{itemize}
    \proofPart{$\mathcal A$ is existential ($\exists$)} Since
    $([ a_1, \dots, a_k, \bar c], \mathcal A) \in s(P)$ $(*)$, there
    exists some child $\exists h_i.\mathcal A_i$ of $\mathcal A$, some
    $g$ and $n$ that satisfy the conditions of $s$.  We consider the
    following subcases, depending on the value of $n$ given:

    \begin{itemize}
    \proofPart{$n \geq k$, i.e., $h_i ; g$ ``points to somewhere'' in $\bar c$}
      \begin{tikzpicture}
        \def\clen{1.5}
        \def\dlen{0.75}
        \def\bypos{0.3}
        \def\aypos{-0.3}
        \node (ab0) at (0,0) {};
        \node (b1) at (1*\clen,\bypos) {};
        \node (b2) at (1*\clen+1*\dlen,\bypos) {};
        \node (a1) at (1*\clen,\aypos) {};
        \node (a2) at (1*\clen+1*\dlen,\aypos) {};
        \node (c0) at (2*\clen+1*\dlen,0) {};
        \node (c1) at (3*\clen+1*\dlen,0) {};
        \node (c2) at (3*\clen+2*\dlen,0) {};
        \node (c3) at (4*\clen+2*\dlen,0) {};
        \node (c4) at (5*\clen+2*\dlen,0) {};
        \node (c5) at (5*\clen+3*\dlen,0) {};

        \draw[->] (ab0) -- node[above]{$b_1$} (b1);
        \draw[->] (b2) -- node[above]{$b_\ell$} (c0);
        \draw[->] (ab0) -- node[below]{$a_1$} (a1);
        \draw[->] (a2) -- node[below]{$a_k$} (c0);
        \node at ($(b1)!0.5!(b2)$) {$\dots$};
        \node at ($(a1)!0.5!(a2)$) {$\dots$};

        \draw[->] (c0) -- node[above]{$c_1$} (c1);
        \node at ($(c1)!0.5!(c2)$) {$\dots$};
        \draw[->] (c2) -- node[above]{$c_{n-k}$} (c3);
        \draw[->] (c3) -- node[above]{$c_{n-k+1}$} (c4);
        \node at ($(c4)!0.5!(c5)$) {$\dots$};

        \node (fitarg) at (3.5*\clen+2*\dlen,-2) {};

        \draw[->] (ab0) to[out=-80, in=170, looseness=0.4] node[below]{$h_i$} (fitarg);
        \draw[->] (fitarg) to node[right]{$g$} (c3);


        \node[condtri,shape border rotate=0] at (ab0.west) {$\mathcal A$};
        \node[condtri,shape border rotate=180,rotate around={15:(fitarg.center)}] at (fitarg.east) {\raisebox{-1pt}[2pt][0pt]{$\mathcal A_i$}};
      \end{tikzpicture}

      Then we set $g = g'$ and $n' = n + \ell - k$ is the length of
      the prefix of $[b_1,\dots,b_\ell,\bar{c}]$ that is factored. It
      holds that:
 
      \begin{enumerate}
      \item
        $h_i ; g' = h_i ; g \stackrel{(1)}{=} a_1 ; \dots ; a_k ; c_1
        ; \dots ; c_{n-k} \stackrel{(2)}{=} b_1 ; \dots ; b_\ell ; c_1
        ; \dots ; c_{n'-\ell} $, where $(1)$ follows from $(*)$, and
        $(2)$ by substituting $a_1;\dots;a_k = b_1;\dots;b_\ell$.

      \item $([g, c_{n-k+1}, \dots], \mathcal A_i) \in P$ (with $(*)$)
        implies
        $([g', c_{n'-\ell+1}, \dots], \mathcal A_i) \in P \subseteq
        \uptoRecomp(P)$. This yields
        $([ b_1, \dots, b_\ell, \bar c], \mathcal A) \in
        s(\uptoRecomp(P))$, as desired.
      \end{enumerate}

    \proofPart{$n < k$, i.e., $h_i ; g$ ``points to somewhere'' in $a_1 \dots a_k$}
      \begin{tikzpicture}
        \def\clen{1.5}
        \def\dlen{0.75}
        \def\bypos{0.3}
        \def\aypos{-0.3}
        \node (ab0) at (0,0) {};
        \node (b1) at (1.5*\clen+1*\dlen,\bypos) {};
        \node (b2) at (2.5*\clen+1*\dlen,\bypos) {};
        \node (a1) at (1*\clen,\aypos) {};
        \node (a2) at (1*\clen+1*\dlen,\aypos) {};
        \node (a3) at (2*\clen+1*\dlen,\aypos) {};
        \node (a4) at (3*\clen+1*\dlen,\aypos) {};
        \node (a5) at (3*\clen+2*\dlen,\aypos) {};
        \node (c0) at (4*\clen+2*\dlen,0) {};
        \node (c1) at (5*\clen+2*\dlen,0) {};
        \node (c2) at (5*\clen+3*\dlen,0) {};

        \draw[->] (ab0) -- node[above]{$b_1$} (b1);
        \draw[->] (b2) -- node[above]{$b_\ell$} (c0);
        \draw[->] (ab0) -- node[below]{$a_1$} (a1);
        \draw[->] (a2) -- node[below]{$a_n$} (a3);
        \draw[->] (a3) -- node[below]{$a_{n+1}$} (a4);
        \draw[->] (a5) -- node[below]{$a_k$} (c0);
        \node at ($(b1)!0.5!(b2)$) {$\dots$};
        \node at ($(a1)!0.5!(a2)$) {$\dots$};
        \node at ($(a4)!0.5!(a5)$) {$\dots$};

        \draw[->] (c0) -- node[above]{$c_1$} (c1);
        \node at ($(c1)!0.5!(c2)$) {$\dots$};

        \node (fitarg) at (1.5*\clen+1*\dlen,-2) {};

        \draw[->] (ab0) to[out=-80, in=170, looseness=0.4] node[below]{$h_i$} (fitarg);
        \draw[->] (fitarg) to node[right]{$g$} (a3);
        \draw[->,dashed] (fitarg) to[bend right=25] node[below]{$g'$} (c0);

        \node[condtri,shape border rotate=0] at (ab0.west) {$\mathcal A$};
        \node[condtri,shape border rotate=180,rotate around={25:(fitarg.center)},scale=0.9] at (fitarg.east) {\raisebox{-1pt}[1.8pt][0pt]{$\mathcal A_i$}};
      \end{tikzpicture}

      We do not generally have an exact matching $b_j$ to point our
      $g'$ to.  Instead, we choose $g' = g ; a_{n+1}; \dots ; a_k$ and
      $n' = \ell$ as the length of the prefix that is factored.  It
      holds that:

      \begin{enumerate}
      \item
        $ h_i ; g' = h_i ; g ; a_{n+1} ; \dots ; a_k = a_1 ; \dots ;
        a_n ; a_{n+1} ; \dots ; a_k = b_1 ; \dots ; b_\ell = b_1 ;
        \dots ; b_{n'} $
      \item $([g, a_{n+1}, \dots, a_k, \bar c], \mathcal A_i) \in P$
        (with $(*)$) implies (by recomposing the first $k-n+1$ arrows)
        $([g ; a_{n+1} ; \dots ; a_k, \bar c], \mathcal A_i) = ([g',
        \bar c], \mathcal A_i) \in \uptoRecomp(P)$. This yields
        $([ b_1, \dots, b_\ell, \bar c], \mathcal A) \in
        s(\uptoRecomp(P))$.
      \end{enumerate}
    \end{itemize}

    \proofPart{$\mathcal A$ is universal ($\forall$)} Since
    $([ a_1, \dots, a_k, \bar c], \mathcal A) \in s(P)$ $(*)$, all
    children $\forall h_i.\mathcal A_i$ of $\mathcal A$, all $g$ and
    $n$ satisfy the conditions of $s$.  We need to show the same for
    all children $\forall h_i.\mathcal A_i$, all $g'$ and all $n'$
    (where $n'$ is again the length of the prefix being factored) to
    establish
    $([ b_1, \dots, b_\ell, \bar c], \mathcal A) \in
    s(\uptoRecomp(P))$.  We consider the following subcases,
    depending on the value of $n'$:

    \begin{itemize}
      \proofPart{$n' \geq \ell$ and
        $b_1 ; \dots ; b_\ell ; c_1 ; \dots ; c_{n'-\ell} = h_i ; g'$}
      Since $b_1 ; \dots ; b_\ell = a_1 ; \dots ; a_k$, we have that
      $a_1 ; \dots ; a_k ; c_1 ; \dots ; c_{n'-\ell} = h_i ; g'$,
      which (with $(*)$) implies that
      $([g', c_{n-k+1}, \dots], \mathcal A_i) \in P \subseteq
      \uptoRecomp(P)$.


    \proofPart{$n' < \ell$, and $b_1 ; \dots ; b_{n'} = h_i ; g'$}
      \begin{tikzpicture}
        \def\clen{1.5}
        \def\dlen{0.75}
        \def\aypos{0.3}
        \def\bypos{-0.3}
        \node (ab0) at (0,0) {};
        \node (a1) at (1.5*\clen+1*\dlen,\aypos) {};
        \node (a2) at (2.5*\clen+1*\dlen,\aypos) {};
        \node (b1) at (1*\clen,\bypos) {};
        \node (b2) at (1*\clen+1*\dlen,\bypos) {};
        \node (b3) at (2*\clen+1*\dlen,\bypos) {};
        \node (b4) at (3*\clen+1*\dlen,\bypos) {};
        \node (b5) at (3*\clen+2*\dlen,\bypos) {};
        \node (c0) at (4*\clen+2*\dlen,0) {};
        \node (c1) at (5*\clen+2*\dlen,0) {};
        \node (c2) at (5*\clen+3*\dlen,0) {};

        \draw[->] (ab0) -- node[above]{$a_1$} (a1);
        \draw[->] (a2) -- node[above]{$a_k$} (c0);
        \draw[->] (ab0) -- node[below]{$b_1$} (b1);
        \draw[->] (b2) -- node[below]{$b_{n'}$} (b3);
        \draw[->] (b3) -- node[below]{$b_{n'+1}$} (b4);
        \draw[->] (b5) -- node[below]{$b_\ell$} (c0);
        \node at ($(a1)!0.5!(a2)$) {$\dots$};
        \node at ($(b1)!0.5!(b2)$) {$\dots$};
        \node at ($(b4)!0.5!(b5)$) {$\dots$};

        \draw[->] (c0) -- node[above]{$c_1$} (c1);
        \node at ($(c1)!0.5!(c2)$) {$\dots$};

        \node (fitarg) at (1.5*\clen+1*\dlen,-2) {};

        \draw[->] (ab0) to[out=-80, in=170, looseness=0.4] node[below]{$h_i$} (fitarg);
        \draw[->] (fitarg) to node[right]{$g'$} (a3);
        \draw[->,dashed] (fitarg) to[bend right=25] node[below]{$g$} (c0);

        \node[condtri,shape border rotate=0] at (ab0.west)
        {$\mathcal A$}; \node[condtri,shape border rotate=180,rotate
        around={25:(fitarg.center)},scale=0.9] at (fitarg.east)
        {\raisebox{-1pt}[1.8pt][0pt]{$\mathcal A_i$}};
      \end{tikzpicture}

      This yields that
      $b_1 ; \dots ; b_\ell = h_i ; g'; b_{n'+1}; \dots ; b_\ell$,
      and with $(*)$ we can infer that
      $([g'; b_{n'+1}; \dots ; b_\ell, \bar{c}], \mathcal A_i) \in P$.
      Hence
      $([g', b_{n'+1}, \dots , b_\ell, \bar{c}], \mathcal A_i) \in
      \uptoRecomp(P)$.
    \end{itemize}
    Combining both cases, we obtain
    $([b_1,\dots,b_\ell,\bar{c}],\mathcal{A})\in s(\uptoRecomp(P))$,
    as desired.  \qedhere
  \end{itemize}
\end{proof}




\begin{lemma}[Up-to shift]
  \label{thm:compat-uptoshift}
  Let $P \subseteq \Seq \times \Cond$. Then,
  $\uptoShift(P)$ is its closure under shift, that is,
  $\uptoShift(P) = \{ ([ (c;c_1), c_2, \dots ], \mathcal B) \mid
  ([c_1, c_2, \dots], \mathcal B_{\downarrow c}) \in P \}$.
  The function $\uptoShift$ is $s$-compatible.
\end{lemma}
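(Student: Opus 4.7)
The plan is to show the compatibility $\uptoShift(s(P)) \subseteq s(\uptoShift(P))$ directly by unfolding the definitions. Take an arbitrary element of $\uptoShift(s(P))$, which by definition has the form $([(c;c_1), c_2, \tdots], \mathcal B)$ where $([c_1, c_2, \tdots], \mathcal B_{\downarrow c}) \in s(P)$. Then case-split on whether $\mathcal B$ is universal or existential, and in each case use \Cref{def:shift} to spell out $\mathcal B_{\downarrow c}$ as a conjunction (resp.\ disjunction) indexed by representative squares in $\kappa(f_i, c)$ for the top-level children $f_i$ of $\mathcal B$.

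For the existential case $\mathcal B = \biglor_i \exists f_i.\mathcal B_i$, the hypothesis gives us some $i$, some $(\alpha,\beta) \in \kappa(f_i,c)$, and some $g, n$ such that $c_1;\tdots;c_n = \beta;g$ and $([g, c_{n+1}, \tdots], (\mathcal B_i)_{\downarrow \alpha}) \in P$. Setting $g' = \alpha;g$ and $n' = n$, the commuting square $f_i;\alpha = c;\beta$ from $\kappa(f_i,c)$ yields $(c;c_1);c_2;\tdots;c_{n'} = c;\beta;g = f_i;\alpha;g = f_i;g'$, while the definition of $\uptoShift$ gives $([\alpha;g, c_{n+1}, \tdots], \mathcal B_i) = ([g', c_{n'+1}, \tdots], \mathcal B_i) \in \uptoShift(P)$. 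This is exactly what is required for $([(c;c_1), c_2, \tdots], \mathcal B) \in s(\uptoShift(P))$.

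For the universal case $\mathcal B = \bigland_i \forall f_i.\mathcal B_i$, we must show that for every child $\forall f_i.\mathcal B_i$ and every $g', n'$ with $c;c_1;\tdots;c_{n'} = f_i;g'$, we have $([g', c_{n'+1}, \tdots], \mathcal B_i) \in \uptoShift(P)$. Here is where representative squares are essential: the commuting square $f_i;g' = c;(c_1;\tdots;c_{n'})$ can be factored via some $(\alpha,\beta) \in \kappa(f_i, c)$ and a mediating arrow $\gamma$ with $g' = \alpha;\gamma$ and $c_1;\tdots;c_{n'} = \beta;\gamma$. Applying the hypothesis to this $(\alpha,\beta)$, to $g = \gamma$, and to $n = n'$ gives $([\gamma, c_{n+1}, \tdots], (\mathcal B_i)_{\downarrow \alpha}) \in P$, which by the definition of $\uptoShift$ yields $([\alpha;\gamma, c_{n+1}, \tdots], \mathcal B_i) = ([g', c_{n'+1}, \tdots], \mathcal B_i) \in \uptoShift(P)$, as required.

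The main obstacle is the universal case: rather than being handed the factorization, we must produce one for an arbitrary extension $g'$ of a given square, and this is precisely what the representative-squares property buys us. The existential direction is then essentially dual and only requires pushing the chosen square through. Once the factorizations are in place, the verification reduces to a chain of equalities using the commutativity of the chosen square in $\kappa(f_i,c)$ and the definition of $\uptoShift$.
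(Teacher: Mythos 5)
Your proof is correct and takes essentially the same route as the paper's: you unfold $\mathcal B_{\downarrow c}$ via the squares in $\kappa(f_i,c)$, settle the existential case by keeping the chosen square and setting $g'=\alpha;g$, $n'=n$, and settle the universal case by factoring the given commuting square through a representative square with mediating arrow $\gamma$ --- exactly the paper's argument, up to renaming (the paper calls the children of $\mathcal B$ $h_j$ and works with the named condition $\mathcal A=\mathcal B_{\downarrow c}$, which is the same bookkeeping you do directly). The only blemish, which you share with the paper's own proof, is that both computations tacitly read the length-$n'$ prefix of $[(c;c_1),c_2,\dots]$ as $c;(c_1;\dots;c_{n'})$, which presupposes $n'\ge 1$; the degenerate case $n=0$, where that prefix is an identity, is not treated separately in either argument.
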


\begin{proof}
  We show that
  $([(c;c_1), c_2, \dots], \mathcal B) \in \uptoShift(s(P))$ implies
  $([(c;c_1), c_2, \dots], \mathcal B) \in s(\uptoShift(P))$.  Let the
  children of $\mathcal B$ be $\mathcal Q h_j.\mathcal B_j$ for
  $j \in J$ and $\mathcal Q \in \{\forall,\exists\}$.

  Since $([(c;c_1), c_2, \dots], \mathcal B) \in \uptoShift(s(P))$, we
  know there exists some arrow $c$ such that
  $([c_1, c_2, \dots], \mathcal B_{\downarrow c}) \in s(P)$.  We will
  call that condition $\mathcal B_{\downarrow c} = \mathcal A$ and its
  children $\mathcal Q f_i.\mathcal A_i$ for $i \in I$.  As we know it
  results from a shift ($\mathcal A = \mathcal B_{\downarrow c}$),
  hence each child $\mathcal Q f_i.\mathcal{A}_i$ of $\mathcal{A}$ is
  derived from a child $\mathcal Q h_j.\mathcal B_j$ of $\mathcal B$,
  that is $\mathcal{A}_i = (\mathcal{B}_j)_{\downarrow c_i'}$, where
  $(c_i', f_i) \in \kappa(h_j, c)$ is a representative square.  Then
  we make the following case distinction:

  \begin{itemize}
    \proofPart{$\mathcal A$ is existential ($\mathcal Q = \exists$)}
    \smallskip
    \begin{tikzpicture}

      \def\clen{1.5}
      \foreach \i in {0,1,2,3,4,5,6,7} {
        \node (c\i) at (\i*\clen,0) {};
      }

      \draw[->] (c0) -- node[above]{$c$} (c1);
      \draw[->] (c1) -- node[above]{$c_1$} (c2);
      \draw[->] (c2) -- node[above]{$c_2$} (c3);
      \node at ($(c3)!0.5!(c4)$) {$\dots$};
      \draw[->] (c4) -- node[above]{$c_n$} (c5);
      \draw[->] (c5) -- node[above]{$c_{n+1}$} (c6);
      \node at ($(c6)!0.5!(c7)$) {$\dots$};

      \node (fitarg) at (4.3*\clen,-2) {};
      \node (hjtarg) at (0,-2) {};

      \node at ($(hjtarg)!0.5!(c1)$) {$\kappa$};

      \draw[->] (c1) to[bend right=15] node[above]{$f_i$} (fitarg);
      \draw[->] (fitarg) to node[right]{$g$} (c5);
      \draw[->] (c0) to node[left]{$h_j$} (hjtarg);
      \draw[->] (hjtarg) to node[above,pos=0.3]{$c_i'$} (fitarg);

      \node[condtri,shape border rotate=270] at (c1.north) {$\mathcal A$};
      \node[condtri,shape border rotate=270] at (c0.north) {$\mathcal B$};
      \node[condtri,shape border rotate=90] at (hjtarg.south) {$\mathcal B_j$\kern-0.5mm};
      \node[condtri,shape border rotate=90] at (fitarg.south) {$\mathcal A_i$\kern-0.5mm};
    \end{tikzpicture}

    Since $([c_1,c_2,\dots], \mathcal A) \in
    s(P)$, 
    by definition of $s$ there exists some child
    $\exists f_i.\mathcal A_i$ of $\mathcal{A}$ and $g,n$ such that
    $c_1;\dots;c_n = f_i ; g$ and
    $([g, c_{n+1}, \dots], \mathcal A_i) \in P$.  $(*_1)$

    Because of
    $([c_1,c_2,\dots], \mathcal A) = ([c_1,c_2,\dots], \mathcal
    B_{\downarrow c})$, we know that child $\exists f_i.\mathcal A_i$
    has been created from some specific child
    $\exists h_j.\mathcal B_j$ of $\mathcal B$ $(*_3)$ such that
    $\mathcal A_i = {\mathcal B_j}_{\downarrow c_i'}$ $(*_2)$ and
    $(c_i',f_i) \in \kappa(h_j,c)$ (which in particular means that
    $h_j;c_i' = c;f_i$). $(*_4)$

    Using that, we can show
    $([ (c;c_1),c_2,\dots ], \mathcal B) \in s(\uptoShift(P))$:
    \begin{itemize}
    \item \emph{there exists a child $\exists h_j.\mathcal B_j$ of
        $\mathcal B$, and $g'$, $n'$}: Implied by $(*_3)$.  We simply
      choose the same child $\exists h_j.\mathcal B_j$ and set
      $g' = c_i' ; g$.  Also, we simply use $n'=n$.
      \item \emph{such that $(c;c_1) ; c_2 ; \dots ; c_{n'} = h_j ; g'$}:
        \begin{alignat*}{2}
          \text{(by $(*_1)$)}
          && c_1 ; \dots ; c_n &= f_i ; g
          \\
          \text{($n=n'$)}
          && c_1 ; \dots ; c_{n'} &= f_i ; g
          \\
          &\implies\quad& c ; c_1 ; \dots ; c_{n'} &= c ; f_i ; g
          \\
          \text{(by $(*_4)$)}
          && &= h_j ; c_i' ; g
          \\
          \text{(choice of $g'$)}
          && &= h_j ; g'
        \end{alignat*}
      \item \emph{and
          $([g', c_{n'+1}, \dots], \mathcal B_j) \in \uptoShift(P)$}:
        \begin{align*}
          (*_1) \phantom{\iff}
          & ([ g, c_{n+1}, \dots], \mathcal A_i) \in P
          \\
          (*_2) \implies
          & ([ g, c_{n+1}, \dots], ({\mathcal B_j})_{\downarrow c_i'}) \in P
          \\
          \text{(Def. $\uptoShift$)} \implies
          & ([\underbrace{c'_i;g}_{g'}, \underbrace{c_{n+1}}_{=c_{n'+1}}, \dots], \mathcal B_j) \in \uptoShift(P)
        \end{align*}
    \end{itemize}
  \proofPart{$\mathcal A$ is universal ($\mathcal Q = \forall$)}%
  Since $([c_1,c_2,\dots], \mathcal A) \in s(P)$, by definition of $s$
  for each child $\forall f_i.\mathcal A_i$ of $\mathcal A$ and $g,n$
  such that $c_1;\dots;c_n = f_i ; g$, we have
  $([g, c_{n+1}, \dots], \mathcal A_i) \in P$.

  By
  $([c_1,c_2,\dots], \mathcal A) = ([c_1,c_2,\dots], \mathcal
  B_{\downarrow c})$, we know that each such
  $\forall f_i.\mathcal A_i$ has been created from some specific child
  $\forall h_j.\mathcal B_j$ of $\mathcal B$ such that
  $\mathcal A_i = {\mathcal B_j}_{\downarrow c_i'}$ and
  $(c_i',f_i) \in \kappa(h_j,c)$.

  We will now show
  $([ (c;c_1),c_2,\dots ], \mathcal B) \in s(\uptoShift(P))$ as
  follows.  Let some child $\forall h_j.\mathcal B_j$ of $\mathcal B$,
  $g'$ and $n'$ be given, and assume that
  $(c;c_1) ; c_2 ; \dots ; c_{n'} = h_j ; g'$.  We need to show that
  $([g', c_{n'+1}, \dots], \mathcal B_j) \in \uptoShift(P)$.

    Consider the square shown below on the left.
    According to the assumption from above and by associativity of composition,
    it is a commuting square.

    \begin{center}
      \begin{tikzpicture}[baseline={(0,-1)}]
        \node (c0) at (0,0) {};
        \node (c1) at (2,0) {};
        \node (hjtarg) at (0,-2) {};
        \node (cn) at (2,-2) {};

        \draw[->] (c0) -- node[above]{$c$} (c1);
        \draw[->] (c1) -- node[right]{$c_1;\dots;c_n$} (cn);

        \draw[->] (hjtarg) to node[below]{$g'$} (cn);
        \draw[->] (c0) to node[left]{$h_j$} (hjtarg);
      \end{tikzpicture}%
      \hskip 2.5cm%
      \begin{tikzpicture}[baseline={(0,-1)}]
        \node (c0) at (0,0) {};
        \node (c1) at (2.25,0) {};
        \node (hjtarg) at (0,-2.25) {};
        \node (cn) at (3,-3) {};
        \node (fitarg) at (1.75,-1.75) {};

        \draw[->] (c0) -- node[above]{$c$} (c1);
        \draw[->] (c1) -- node[right]{$c_1;\dots;c_n$} (cn);

        \draw[->] (hjtarg) to node[below]{$g'$} (cn);
        \draw[->] (c0) to node[left]{$h_j$} (hjtarg);
        \draw[->] (c1) -- node[left]{$f_i$} (fitarg);
        \draw[->] (hjtarg) -- node[above]{$c_i'$} (fitarg);
        \draw[->] (fitarg) -- node[above right,pos=0.3]{\kern-1mm$g$} (cn);

        \node at ($(c0)!0.5!(fitarg)$) {$\kappa$};
      \end{tikzpicture}
    \end{center}

    Furthermore, it can be reduced to a representative square as shown above on the right.

    Since the construction of the condition $\mathcal A$ involves the same set of squares $\kappa(h_j, c)$,
    we can associate to our given child $\forall h_j.\mathcal B_j$ of $\mathcal B$ and $g'$
    a particular child $\forall f_i.\mathcal A_i$ of $\mathcal A$ and $g$ as shown in the diagram above.

    We now evaluate the definition of $([c_1, c_2, \dots], \mathcal A) \in s(P)$
    for that $\forall f_i.\mathcal A_i$, $g$, and $n' = n$.
    According to that, we have:
    if $c_1;\dots;c_n = f_i ; g$, then $([g, c_{n+1}, \dots], \mathcal A_i) \in P$.
    As can be seen in the representative square diagram above,
    the premise is indeed true,
    hence $([g, c_{n+1}, \dots], \mathcal A_i) \in P$ holds as well.

    By construction of $\mathcal A$, we have
    $\mathcal A_i = {\mathcal B_j}_{\downarrow c_i'}$ and therefore
    $([g, c_{n+1}, \dots], \mathcal A_i) \in P$ implies
    $([g', c_{n+1}, \dots], \mathcal B_j) = ([c_i' ; g, c_{n+1}, \dots], \mathcal B_j)
    \in \uptoShift(P)$,
    which was to be shown.
    \qedhere
  \end{itemize}
\end{proof}


\begin{lemma}[Up-to iso]
  \label{thm:compat-uptoiso}
  Let $P \subseteq \Seq \times \Cond$. Let $\uptoIso(P)$ be
  its closure under isomorphic \linebreak conditions, i.e.,
  $\uptoIso(P) = \{ ([ (h;c_1), c_2, \dots ], \mathcal B) \mid
  \text{$([c_1, c_2, \dots], \mathcal A) \in P,\ \mathcal A \condiso
    \mathcal B$ with iso} \linebreak \text{%
    $h \colon \RO(\mathcal B) \to \RO(\mathcal A)$} \}$.
  The function $\uptoIso$ is $s$-compatible.
\end{lemma}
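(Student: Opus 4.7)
The plan is to prove $\uptoIso(s(P)) \subseteq s(\uptoIso(P))$ by unfolding the definition and case-splitting on the outermost quantifier. Assume $([(h;c_1),c_2,\dots],\mathcal B) \in \uptoIso(s(P))$. By definition of $\uptoIso$, there exists $([c_1,c_2,\dots],\mathcal A) \in s(P)$ together with an iso $h\colon\RO(\mathcal B)\to\RO(\mathcal A)$ such that $\mathcal A \condiso \mathcal B$ via $h$. Note that the definition of $\condiso$ forces $\mathcal A$ and $\mathcal B$ to have the same outer quantifier, so a clean case split is possible: write $\mathcal A = \biglor_i \mathcal Q f_i.\mathcal A_i$ and $\mathcal B = \biglor_j \mathcal Q g_j.\mathcal B_j$ (or both as conjunctions of universals), and exploit the bijective-up-to-isomorphism correspondence between children supplied by $\condiso$: for each $i$ there is some $j$ (and vice versa) and an iso $h_{j,i}\colon\RO(\mathcal B_j)\to\RO(\mathcal A_i)$ with $h;f_i = g_j;h_{j,i}$ and $\mathcal A_i \condiso \mathcal B_j$ via $h_{j,i}$.

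For the existential case, take the witnessing $i$, $n$, $g$ provided by $([c_1,\dots],\mathcal A)\in s(P)$ (so that $c_1;\dots;c_n = f_i;g$ and $([g,c_{n+1},\dots],\mathcal A_i)\in P$). Pick the corresponding child $\exists g_j.\mathcal B_j$ of $\mathcal B$, set $n' := n$ and $g' := h_{j,i};g$. Then $(h;c_1);c_2;\dots;c_{n'} = h;f_i;g = g_j;h_{j,i};g = g_j;g'$, which is the factorisation required by $s$ on $\mathcal B$. Finally, $\mathcal A_i \condiso \mathcal B_j$ via $h_{j,i}$ applied to $([g,c_{n+1},\dots],\mathcal A_i)\in P$ gives $([h_{j,i};g,c_{n+1},\dots],\mathcal B_j) = ([g',c_{n+1},\dots],\mathcal B_j) \in \uptoIso(P)$, as needed.

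For the universal case, fix an arbitrary child $\forall g_j.\mathcal B_j$ of $\mathcal B$, together with $g'$ and $n$ satisfying $(h;c_1);c_2;\dots;c_n = g_j;g'$. Using the isomorphism correspondence, pick the matching child $\forall f_i.\mathcal A_i$ of $\mathcal A$ and the iso $h_{j,i}$ with $h;f_i = g_j;h_{j,i}$. Since $h$ is an iso, the relation $h;(c_1;\dots;c_n) = g_j;g'$ gives $c_1;\dots;c_n = h^{-1};g_j;g' = f_i;h_{j,i}^{-1};g'$, so setting $g := h_{j,i}^{-1};g'$ yields $c_1;\dots;c_n = f_i;g$. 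Then $([c_1,\dots],\mathcal A)\in s(P)$ provides $([g,c_{n+1},\dots],\mathcal A_i)\in P$, and applying the isomorphism $\mathcal A_i \condiso \mathcal B_j$ via $h_{j,i}$ yields $([h_{j,i};g,c_{n+1},\dots],\mathcal B_j)\in \uptoIso(P)$. Because $h_{j,i};g = h_{j,i};h_{j,i}^{-1};g' = g'$, this is precisely $([g',c_{n+1},\dots],\mathcal B_j)\in\uptoIso(P)$.

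The only conceptual obstacle is bookkeeping the direction of the component isos $h_{j,i}$: in the universal case one must use $h_{j,i}^{-1}$ to ``pull'' the given factorisation back across $h$ so that the hypothesis $([c_1,\dots],\mathcal A)\in s(P)$ applies, and then ``push'' the resulting witness forward again through $h_{j,i}$ to land in $\uptoIso(P)$. Everything else is straightforward equational manipulation, and the base cases $\condtrue,\condfalse$ are subsumed by the empty-index versions of the two cases above.
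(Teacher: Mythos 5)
Your proposal is correct and follows essentially the same route as the paper's proof: the same case split on the (shared) outer quantifier, the same use of the child correspondence and component isos $h_{j,i}$ supplied by $\condiso$, the same choices $g' = h_{j,i};g$ (existential) and $g = h_{j,i}^{-1};g'$ (universal), and the same final push through $h_{j,i}$ into $\uptoIso(P)$. No gaps.
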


\begin{proof}
  We show that
  $([(h;c_1), c_2, \dots], \mathcal B) \in \uptoIso(s(P))$
  implies
  $([(h;c_1), c_2, \dots], \mathcal B) \in s(\uptoIso(P))$.  Let
  the children of $\mathcal B$ be $\mathcal Q g_j.\mathcal B_j$ for
  $j \in J$ and $\mathcal Q \in \{\forall,\exists\}$.

  Since $([(h;c_1), c_2, \dots], \mathcal B) \in \uptoIso(s(P))$, we
  know there exists some condition $\mathcal A$, having children
  $\mathcal Q f_i.\mathcal A_i$ for $i \in I$, such that
  $\mathcal A \condiso \mathcal B$ wrt.\ $h$ and
  $([c_1, c_2, \dots], \mathcal A) \in s(P)$.

  As $\mathcal A \condiso \mathcal B$ wrt.\ $h$, for each $i \in I$
  there is $j \in J$ such that $\mathcal A_i \condiso \mathcal B_j$
  wrt.\ an iso $h_{j,i}$ and $h;f_i = g_j;h_{j,i}$; and vice versa
  (for each $j \in J$ \dots).

  Now we can show $([(h;c_1), c_2, \dots], \mathcal B) \in
  s(\uptoIso(P))$ via the following case distinction:

  \begin{itemize}
    \proofPart{$\mathcal A$ is existential ($\mathcal Q = \exists$)}
    Since $([c_1,c_2,\dots], \mathcal A) \in s(P)$, by definition of
    $s$ there exists some child $\exists f_i.\mathcal A_i$ of
    $\mathcal{A}$ and $g,n$ such that $c_1;\dots;c_n = f_i ; g$ and
    $([g, c_{n+1}, \dots], \mathcal A_i) \in P$.

    Then we can show that
    $([(h;c_1), c_2, \dots], \mathcal B) \in s(\uptoIso(P))$:
    \begin{itemize}
    \item \emph{there exists a child $\exists g_j.\mathcal B_j$ of
        $\mathcal B$, and $g', n'$}: As
      $\mathcal A \condiso \mathcal B$ wrt.\ $h$, there is some child
      $\exists g_j.\mathcal B_j$ of $\mathcal{B}$ such that
      $\mathcal B_j \condiso \mathcal A_i$ for some iso
      $h_{j,i} \colon \RO(\mathcal B_j) \to \RO(\mathcal A_i)$ such
      that $h;f_i = g_j;h_{j,i}$.

        Let $g' \defeq h_{j,i};g$ and $n' \defeq n$.
      \item \emph{such that $h;c_1;\dots;c_{n'} = g_j;g'$}:
        \[ h;c_1;\dots;c_n \stackrel{\mathclap{(1)}}{=} h;f_i;g
          \stackrel{\mathclap{(2)}}{=} g_j;h_{j,i};g = g_j;g' \]
        ($(1)$ from $c_1;\dots;c_n = f_i ; g$, $(2)$ from
        $h;f_i = g_j;h_{j,i}$).
      \item \emph{and
          $([g', c_{n+1}, \dots], \mathcal B_j) \in \uptoIso(P)$}: As
        $\mathcal A_i \condiso \mathcal B_j$ with iso $h_{j,i}$,
        hence $([g, c_{n+1}, \dots], \mathcal A_i) \in P$ implies
        $([(h_{j,i};g), c_{n+1}, \dots], \mathcal B_j) \in
        \uptoIso(P)$.
      \end{itemize}
      \medskip
    \begin{tikzpicture}[x=1cm,y=1cm]
      \node (ro-a) at (0,0) {};
      \node (ro-ai) at (2,0) {};
      \node (ro-b) at (0,-1.75) {};
      \node (ro-bj) at (2,-1.75) {};
      \node (c1) at (1.25,1) {};
      \node (c2) at (2.75,1) {};
      \node (cn) at (4,0) {};
      \node (cnp1) at (6,0) {};
      \node (cnp2) at (7,0) {};
      \draw[->] (ro-a) -- node[above]{$f_i$} (ro-ai);
      \draw[->] (ro-b) -- node[left]{$h$} (ro-a);
      \draw[->] (ro-bj) -- node[right]{$h_{j,i}$} (ro-ai);
      \draw[->] (ro-b) -- node[below]{$g_j$} (ro-bj);

      \draw[->] (ro-ai) -- node[above]{$g$} (cn);
      \draw[->] (ro-bj) to[out=15,in=-105] node[below right]{$g'$} (cn.south);

      \draw[->] (ro-a.north) to[out=60,in=-165] node[above left]{$c_1$} (c1.west);
      \draw[->] (c2.east) to[out=-15,in=120] node[above right]{$c_n$} (cn.north);
      \draw[->] (cn) -- node[above]{$c_{n+1}$} (cnp1);
      \node at ($(c1)!0.5!(c2)$) {$\dots$};
      \node at ($(cnp1)!0.5!(cnp2)$) {$\dots$};

      \node[condtri,shape border rotate=0, rotate around={-10:(ro-a.center)}, label={[rotate=-10,anchor=east,label distance=1pt]left:{$\mathcal A$}}] at (ro-a.west) {\kern7pt};
      \node[condtri,shape border rotate=0, rotate around={10:(ro-b.center)}, label={[rotate=10,anchor=east,label distance=1pt]left:{$\mathcal B$}}] at (ro-b.west) {\kern7pt};
      \node[condtri,shape border rotate=180, rotate around={-25:(ro-ai.center)}, label={[rotate=-10,anchor=west,label distance=1pt]right:{$\mathcal A_i$}}] at (ro-ai.-20) {\kern7pt};
      \node[condtri,shape border rotate=180, rotate around={-25:(ro-bj.center)}, label={[rotate=-10,anchor=west,label distance=1pt]right:{$\mathcal B_j$}}] at (ro-bj.-20) {\kern7pt};
    \end{tikzpicture}
    \proofPart{$\mathcal A$ is universal ($\mathcal Q = \forall$)}%
    We need to show for all children $\forall g_j.\mathcal B_j$ of
    $\mathcal{B}$ and $g',n'$ that: if $h;c_1;\dots;c_{n'} = g_j;g'$
    then $([g', c_{n+1}, \dots], \mathcal B_j) \in \uptoIso(P)$.

    Let such a child, $g',n'$ be given and let $h;c_1;\dots;c_{n'} = g_j;g'$.

    From $\mathcal A \condiso \mathcal B$ wrt.\ $h$ we know there is a
    corresponding child $\forall f_i.\mathcal A_i$ of $\mathcal{A}$,
    and isos $h \colon \RO(\mathcal B) \to \RO(\mathcal A)$ and
    $h_{j,i} \colon \RO(\mathcal B_j) \to \RO(\mathcal A_i)$ such that
    $h ; f_i = g_j ; h_{j,i}$.

    Now since $h;c_1;\dots;c_{n'} = g_j;g'$, then also
    $c_1;\dots;c_{n} = h^{-1};h;c_1;\dots;c_{n'} = h^{-1};g_j;g' =
    f_i;h_{j,i}^{-1};g'$.

    As $([c_1, c_2, \dots], \mathcal A) \in s(P)$, we know that for
    all children and all $g,n$ (in particular, for
    $g = h_{j,i}^{-1};g'$ and $n = n'$) we have: if
    $c_1;\dots;c_n = f_i;g$ then
    $([g, c_{n+1}, \dots], \mathcal A_i) \in P$.

    Finally, from
    $([(h_{j,i}^{-1};g'), c_{n+1}, \dots], \mathcal A_i) \in P$, using
    $h_{j,i}$, we obtain
    $([g', c_{n'+1}, \dots], \mathcal B_j) =
    ([(h_{j,i};h_{j,i}^{-1};g'), c_{n+1}, \dots], \mathcal B_j) \in
    \uptoIso(P)$, which implies
    $([(h;c_1), c_2, \dots], \mathcal B) \in s(\uptoIso(P))$.
    \qedhere
  \end{itemize}
\end{proof}
\end{toappendix}

\begin{theoremrep}[Up-to techniques]
  \label{thm:compat-upto}
  Let $P \subseteq \Seq \times \Cond$, i.e.\ tuples of
  potential model and condition. Then the following four up-to functions are
  $s$-compatible:
  \begin{itemize}
    \item Conjunction removal:
      We inductively define a relation
      $\uptoConjCond$ containing a pair of conditions
      $(\mathcal A, \mathcal T)$ iff $\mathcal T$ is the same as
      $\mathcal A$ but with some conjunctions removed. That is,
      $\uptoConjCond$ contains
      \[
      \renewcommand{\arraystretch}{1.2}%
      \setlength{\arraycolsep}{0.3pt}%
      \begin{array}{rlll}
        \big(\bigland_{i \in I} \forall f_i.\mathcal A_i,
        &\ \bigland_{j \in J \subseteq I}
        &\forall f_j.\mathcal T_j\big)
        &\text{\quad whenever $(\mathcal A_j, \mathcal T_j) \in
          \uptoConjCond$ for all $j\in J$}
        \\
        \big(\biglor_{i \in I} \exists f_i.\mathcal A_i,
        &\ \biglor_{i \in I}
        &\exists f_i.\mathcal T_i\big)
        &\text{\quad whenever $(\mathcal A_i, \mathcal T_i) \in
          \uptoConjCond$ for all $i\in I$}
      \end{array}\]
      Then define: \mbox{$\uptoConj(P) = \{ (\bar c, \mathcal T) \mid (\bar c, \mathcal A) \in P,\ (\mathcal A, \mathcal T) \in \uptoConjCond \}$}
    \item
      Recomposition:
      $\uptoRecomp(P) = \{
      ([b_1, \tdots, b_\ell, \bar c ], \mathcal A) \mid
      ([a_1, \tdots, a_k, \bar c], \mathcal A) \in P,\ %
      a_1;\tdots;a_k = b_1;\tdots;b_\ell
      \}$
    \item
      Shift:
      $\uptoShift(P) = \{ ([ (c;c_1), c_2, \tdots ], \mathcal B) \mid
      ([c_1, c_2, \tdots], \mathcal B_{\downarrow c}) \in P \}$
    \item
      Isomorphic condition:
      $\uptoIso(P) = \{ ([ (h;c_1), c_2, \tdots ], \mathcal B) \mid
      \text{$([c_1, c_2, \tdots], \mathcal A) \in P,\ \mathcal A \condiso
        \mathcal B$ with} \linebreak \text{%
        iso $h \colon \RO(\mathcal B) \to \RO(\mathcal A)$} \}$
  \end{itemize}
\end{theoremrep}

\begin{proof}
  See proofs of \Cref{thm:upto-conjunction-removal,thm:compat-uptorecomp,thm:compat-uptoshift,thm:compat-uptoiso} (in the appendix).
\end{proof}

\noindent%
Note however that up-to equivalence $u_\equiv$ is \emph{not} a valid
up-to technique: let $\mathcal U$ be an unsatisfiable condition and
let $P = \{ ([\id,\tdots], \forall \id. \mathcal U) \}$.  As
$\mathcal U \equiv \forall \id. \mathcal U$, then also
$([\id,\tdots], \mathcal U) \in u_\equiv(P)$ and hence
$P \subseteq s(u_\equiv(P))$.  If the technique were correct, this
would imply $id \models \mathcal U$.


A convenient property of compatibility is that it is preserved by
various operations, in particular, composition, union and iteration
($f^\omega = \bigcup_{i\in\natzero} f^i$). This can be used to combine
multiple up-to techniques into a new one that also has the
compatibility property. \cite{pous:thesis}

\begin{lemma}[combining up-to techniques]\label{upto-techniques-combined}
  Let
  $u = (\uptoRecomp \cup \uptoConj \cup \uptoShift \cup
  \uptoIso)^\omega$ be the iterated application of the up-to
  techniques from \Cref{thm:compat-upto}, then $u$ is $s$-compatible.  
\end{lemma}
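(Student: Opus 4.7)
The plan is to invoke the standard closure properties of $f$-compatible monotone functions on a complete lattice, as developed in \cite{pous:thesis,p:complete-lattices-up-to}. Specifically, I will rely on two facts: (i) if $u_1, u_2$ are monotone and $f$-compatible, then so is their pointwise union $u_1 \cup u_2$; and (ii) if $u$ is monotone and $f$-compatible, then so is the iterate $u^\omega = \bigcup_{i \in \natzero} u^i$.

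First, I would observe that each of the four up-to functions $\uptoRecomp$, $\uptoConj$, $\uptoShift$, $\uptoIso$ is monotone. Each is defined by a set-builder comprehension in which the argument $P$ occurs only positively on the right-hand side, so a larger input yields a larger output; this is entirely routine and I would dispatch it with a single remark.

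Next, using the $s$-compatibility of each of the four functions established in \Cref{thm:compat-upto} together with closure property (i), I would conclude that the union $u_1 = \uptoRecomp \cup \uptoConj \cup \uptoShift \cup \uptoIso$ is monotone and $s$-compatible. Finally, applying closure property (ii) to $u_1$ yields that $u = u_1^\omega$ is monotone and $s$-compatible, as required.

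I do not expect significant obstacles, since the heavy lifting has been done in \Cref{thm:compat-upto} and the closure properties are well-known. The only point requiring a moment's care is confirming that the union and iteration are interpreted \emph{pointwise on functions} (so that $(u_1 \cup u_2)(P) = u_1(P) \cup u_2(P)$ and $u^\omega(P) = \bigcup_i u^i(P)$), which is precisely the setting in which the closure theorems of \cite{pous:thesis} are formulated. Hence no further calculation is needed.
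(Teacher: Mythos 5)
Your proof is correct and follows essentially the same route as the paper, which also obtains the result directly from the closure of $s$-compatibility under union, composition and iteration (citing \cite{pous:thesis}) together with the compatibility of the four individual functions from \Cref{thm:compat-upto}. Your explicit remarks on monotonicity and on the pointwise reading of union and iteration are fine but add nothing beyond the paper's argument.
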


We are now able to prove the central theorem needed for showing
completeness.

\begin{theoremrep}[Fair branches are models]
  \label{thm:fair-branch-model}
  Let $\mathcal{A}_0$ be an alternating condition.  Let a fixed
  tableau constructed by the rules of \Cref{satcheck.new} be given.
  Let
  $\mathcal A_0 \xrightarrow{b_1} \mathcal A_1 \xrightarrow{b_2}
  \mathcal A_2 \xrightarrow{b_3} \dots$ be a branch of the tableau
  that is either not extendable and ends with a universal
  quantification (i.e., it is open), or is infinite and fair.  For such a branch, we
  define
  $ P = \{ (\bar b, \mathcal A_i) \mid i \in \natzero,\ \bar b = [
    b_{i+1}, b_{i+2}, \tdots ] \} \subseteq \Seq \times \Cond$,
  i.e., the relation $P$ pairs suffixes of the branch with the corresponding
  conditions.
  Finally, let $u$ be the combination of up-to techniques defined
  in \Cref{upto-techniques-combined}.  Then, $P \subseteq s(u(P))$,
  which implies that $P\subseteq\; \models$.  In other words, every
  such branch in a tableau of \Cref{satcheck.new} corresponds to a
  model of $\mathcal A_0$.
\end{theoremrep}

\def\varsat#1{#1}
\def\fsat#1{d_{#1}}
\def\frisat#1{r_{#1}}
\def\gsat#1{g_{#1}}
\def\Asat#1{\mathcal D_{#1}}
\def\varpf#1{\ell}
\def\fpf#1{f_{\ell}}
\def\Apf#1{\mathcal A_{\ell}}

\begin{proofsketch}
  Let $([ c_1, c_2, \tdots ], \mathcal C_0) \in P$, which corresponds
  to a suffix 
  $\mathcal{C}_0 \xrightarrow{c_1} \mathcal C_1 \xrightarrow{c_2}
  \mathcal{C}_2 \xrightarrow{c_3} \dots$ of the chosen branch. We show that
  $([ c_1, c_2, \tdots ], \mathcal C_0) \in s(u(P))$:
  \begin{itemize}
    \item \textbf{$\mathcal C_0$ is existential:} The next label on the
    branch is the arrow of some child $\exists c_1.\mathcal{C}_1$ of
    $\mathcal{C}_0$ and $([ c_2, c_3, \tdots ], \mathcal C_1) \in P$ implies 
    $([ c_1, c_2, \tdots ], \mathcal C_0) \in s(P)\subseteq s(u(P))$.
    
    \item \textbf{$\mathcal C_0$ is universal and contains no
      iso:} The branch ends at this point, so the sequence
    of arrows in the tuple is empty and represents $\id$, where
    $(\id,\mathcal{C}_0)\in s(u(P))$: no $f_i$ is an iso, therefore
    $f_i ; g = \id$ is never true and hence the universal condition
    is trivially satisfied.

    \item \textbf{$\mathcal C_0$ is universal and contains at least one
      iso:} By definition of $s$, we need to be able to satisfy any
    given child $\forall \fsat0.\Asat0$, i.e., show that whenever
    $c_1;\tdots;c_n = \fsat0;g$ for some $g,n$, then
    $([g, c_{n+1}, \tdots], \Asat0) \in u(P)$.

    Fairness guarantees that an indirect successor
    $\forall \fsat q.\Asat q$ of $\forall \fsat0.\Asat0$ is pulled
    forward, which results (after up-to conjunction removal) in a
    tuple $([ c_m, \tdots ], \Asat q) \in u(P)$.  The intermediate
    steps on the branch, where other children are pulled forward
    instead, allow expressing $\Asat q$ as
    $(\Asat0)_{\downarrow \alpha_1 \downarrow \dots \downarrow
      \alpha_q}$.  Use up-to shift to transform to
    $([ \alpha_1 ; \tdots ; \alpha_q ; c_m, \tdots ], \Asat0) \in
    u(P)$, then up-to recomposition to the required
    $([\gsat0, c_{n+1}, \tdots], \Asat0) \in u(P)$.  \qedhere
  \end{itemize}
\end{proofsketch}

\begin{proof}
  Let $([ c_1, c_2, \dots ], \mathcal C_0) \in P$, which corresponds
  to a (suffix of the chosen) branch
  $\mathcal A = \mathcal{C}_0 \xrightarrow{c_1} \mathcal{C}_1
  \xrightarrow{c_2} \mathcal{C}_2 \xrightarrow{c_3} \dots$ be
  of the tableau. We show that
  $([ c_1, c_2, c_3, \dots ], \mathcal A) \in s(u(P))$.
\begin{proofparts}
  \proofPart{$\mathcal A$ is existential} Let
  $\mathcal A = \biglor_{i \in I} \exists f_i . \mathcal A_i$.  Hence
  the first step of our branch ($c_1$) corresponds to one of the
  transitions
  $\biglor_{i \in I} \exists f_i . \mathcal A_i \xrightarrow{f_\ell}
  \mathcal A_p = \mathcal{C}_1$, where $c_1 = f_\ell$, for some
  $\ell \in I$.  This implies that
  $([ c_2, \dots ], \mathcal A_\ell) \in P$.

  For
  $([ c_1, c_2, \dots ], \biglor_{i \in I} \exists f_i . \mathcal A_i)
  \in s(u(P))$ to hold, by definition of $s$ there must exist some
  $i \in I$ (i.e., a child $\exists f_i.\mathcal A_i$), an arrow $g$
  and a number $n$ such that $c_1 ; \dots ; c_n = f_i ; g$ and
  $([ g , c_{n+1}, \dots ], \mathcal A_i) \in u(P)$.  This definition
  can be satisfied with $n = 1$, $i=\ell$ and $g = \id$: indeed
  $c_1 = f_\ell = f_\ell ; \id$ and
  $([ c_2, \dots ], \mathcal A_\ell) \in P$ implies (up-to
  recomposition)
  $([g, c_{1+1}, \dots], \mathcal A_\ell) = ([ \id, c_2, \ \dots ],
  \mathcal A_\ell) \in u(P)$.

  Note that the first step ($c_1$) is guaranteed to exist since we
  assume that the branch ends with a universal quantification (if it
  ends at all).

  \proofPart{$\mathcal A$ is universal and has no isomorphisms} If
  $\mathcal A$ is an empty universal (i.e., $\condtrue$),
  $(\bar c, \mathcal A) \in s(u(P))$ trivially holds, since the
  definition of $s$ contains a universal quantification over the set
  of child conditions, and empty universals are always true.

  In the following, let
  $\mathcal A = \bigwedge_i \forall f_i . \mathcal A_i$.  If no $f_i$
  is an isomorphism, the tableau does not have any child nodes for
  $\mathcal A$, so the sequence $[ c_1, \dots ]$ is empty, and thus
  its composition equals $\id$.  Then,
  $([\,], \mathcal A) \in s(u(P))$ holds because there is no arrow $g$
  that makes $\id = f_i ; g$ true ($\id$ is an isomorphism and hence
  can only split in two other isomorphisms, however, $f_i$ is not an
  isomorphism).  Thus the implication is trivially true.

  \proofPart{$\mathcal A$ is universal and has at least one isomorphism}
  We now assume that at least one $f_i$ is an isomorphism.

  For
  $([ c_1, c_2, \dots ], \bigland_{i \in I} \forall f_i . \mathcal
  A_i) \in s(u(P))$ to hold, by definition of $s$, for all
  $\forall f_i.\mathcal A_i$, all arrows~$g$ and all $n \in \natzero$,
  we need to show: if $c_1 ; \dots ; c_n = f_i ; g$, then
  $([ g, c_{n+1}, \dots ], \mathcal A_i) \in u(P)$.

  Hence let $\forall \fsat0.\Asat0$ (where $\fsat0 = f_j$,
  $\Asat0 = \mathcal{A}_j$ for some index $j$) be such a child and
  assume some $n$, $\gsat0$ such that
  $c_1 ; \dots ; c_n = \fsat0 ; \gsat0$. We need to show that
  \mbox{$([ \gsat0, c_{n+1}, \dots ], \Asat0) \in u(P)$}.

  Since at least one child is an isomorphism, in the next step on the
  branch, some $\forall \fpf0.\Apf0$ ($\fpf0$ isomorphism) is pulled
  forward, so the next condition on the path has the shape
  $\biglor_j \exists h_j.(\mathcal B_j \land \dots)$, and it is (up-to
  conjunction removal) the same as the child
  $\Apf0 = \biglor_j \exists h_j.\mathcal B_j$ that was pulled
  forward.  If the child to be satisfied is the one that was pulled
  forward ($\forall \fsat0.\Asat0 = \forall \fpf0.\Apf0$), we would
  thus immediately obtain the desired result
  $([\dots], \Asat0) \in u(P)$.

  However, the $\forall \fpf0.\Apf0$ that is pulled forward at the
  current step might be a different child than
  $\forall \fsat0.\Asat0$.  While the fairness property would
  guarantee that every isomorphism is eventually pulled forward at
  some point in the future, $\fsat0$ might not even be an isomorphism.

  So assume that instead of $\fsat 0$, for the initial steps of the
  path, other isomorphisms are pulled forward instead.  Since the
  condition is alternating and the tableau rules preserve this
  property, we refer to $c_1, c_2, c_3, c_4, \ldots$ as
  $u_1, e_1, u_2, e_2, \ldots$, with $u_i, e_i$ corresponding to the
  labels of steps from a universal or existential condition,
  respectively (i.e., $u_1$ was pulled forward first). That is
  $u_i = c_{2i-1}$, $e_i = c_{2i}$.

  For each child not yet pulled forward (such as
  $\forall \fsat0.\Asat0$), the condition $\mathcal{C}_2$ at the next
  universal step contains successors (such as the children of
  $(\forall \fsat0.\Asat0)_{\downarrow u_1;e_1}$), which are possibly
  ``closer'' to an isomorphism than $\fsat 0$ was.


  %

  We now claim that there exists a finite sequence
  $\forall \fsat0.\Asat0, \forall \fsat1.\Asat1, \dots,
  \allowbreak\forall \fsat p.\Asat p, \dots, \allowbreak\forall \fsat
  q.\Asat q$ of children of the universal conditions
  $\mathcal C_0, \mathcal C_2, \dots$ on the branch (i.e., every
  second one), with each element of the sequence being a (direct)
  successor of the previous element, such that:

  \begin{enumerate}
    \item
      After $p$ steps, $0 \leq 2p \leq n$,
      $\fsat p$ is an isomorphism,
    \item
      for $k>p$, $\fsat k$ is an isomorphism as well,
    \item
      after $q$ steps, $p \leq q < \infty$,
      $\fsat q$ is pulled forward,
      resulting in $([ e_{q+1}, \dots ], \Asat q) \in u(P)$.
  \end{enumerate}
  Afterwards, we can transform that to $([ \gsat0, c_{n+1}, \dots ], \Asat0) \in u(P)$ (as required by the satisfaction function $s$) by applying up-to techniques,
  thereby showing that the path actually describes a model.

  Note that $p=0$ or $p=q$ are possible as well.  The initial steps
  are depicted in the following commuting diagram, with the chain
  $u_1,e_1,\dots$, $n$, $\fsat0,\Asat0$ and $\gsat0$ given. The
  remaining conditions and arrows will now be constructed.

  \begin{tikzpicture}[x=0.90cm]
    \foreach \i in {0,...,9} {
      \node (top\i) at (\i,0) {};
      \node (bot\i) at (\i,-1.75) {};
    }

    \draw[->] (top0) edge node[above,align=center]{$c_1$\\[-2pt]$u_1$} (top1)
              (top1) edge node[above,align=center]{$c_2$\\[-2pt]$e_1$} (top2)
              (top2) edge node[above,align=center]{$c_3$\\[-2pt]$u_2$} (top3)
              (top3) edge node[above,align=center]{$c_4$\\[-2pt]$e_2$} (top4);
    \node at ($(top4)!0.5!(top5)$) {$\dots$};
    \draw[->] (top5) edge node[above]{$c_{n-1}$} (top6)
              (top6) edge node[above]{$c_n$} (top7)
              (top7) edge node[above]{$c_{n+1}$} (top8);
    \node at ($(top8)!0.5!(top9)$) {$\dots$};

    \draw[->] (bot0) edge node[below]{$\alpha_1$} (bot2);
    \draw[->] (bot2) edge node[below]{$\alpha_2$} (bot4);
    \node at ($(bot4)!0.5!(bot5)$) {$\dots$};

    \draw[->] (top0) -- node[left]{$\fsat0$} (bot0);
    \draw[->] (top2) -- node[left]{$\fsat1$} (bot2);
    \draw[->] (top4) -- node[left]{$\fsat2$} (bot4);

    \draw[->,rounded corners=4pt] (bot0) -- +(1.0,-1.3) -- node[below,pos=0.7828]{$\gsat0$} +(5.9,-1.3) -- (top7.-60);
    \draw[->,rounded corners=4pt] (bot2) -- +(1.0,-1.2) -- node[above,pos=0.7]{$\gsat1$} +(3.8,-1.2) -- (top7.-120);

    \node[condtri,dart tip angle=30,shape border rotate=270,rotate around={10:(top0.center)}, label={[rotate=10,anchor=south,label distance=1pt]above:{$\mathcal{C}_0$}}] at (top0.north) {\kern5pt};
    \node[condtri,dart tip angle=20,shape border rotate=270,rotate around={0:(top2.center)}, label={[rotate=0,anchor=south,label distance=1pt]above:{$\mathcal{C}_1$}}] at (top2.north) {\kern3.0pt};
    \node[condtri,dart tip angle=20,shape border rotate=270,rotate around={-5:(top4.center)}, label={[rotate=-5,anchor=south,label distance=1pt]above:{$\mathcal{C}_2$}}] at (top4.north) {\kern3.0pt};
    \node[condtri,shape border rotate=90, rotate around={-10:(bot0.center)}, label={[rotate=-10,anchor=north,label distance=1pt]below:{$\Asat 0$}}] at (bot0.south) {\kern7pt};
    \node[condtri,shape border rotate=90, rotate around={-10:(bot2.center)}, label={[rotate=-10,anchor=north,label distance=1pt]below:{$\Asat 1$}}] at (bot2.south) {\kern7pt};
  \end{tikzpicture}

  The proof objective is to construct the sequence of all further
  $\forall \fsat{k+1}.\Asat{k+1}$ and associated $\alpha_{k+1}$,
  $\gsat{k+1}$.

  \begin{enumerate}
  \item We construct the aforementioned sequence by repeatedly
    choosing a successor of $\forall \fsat k.\Asat k$ as the next
    element ($\forall \fsat{k+1}.\Asat{k+1}$) of the sequence.  We
    then show that for some $p$ with $2p \leq n$, $\fsat p$ must be an
    isomorphism.

    Hence, consider the following situation:

    \medskip

      \begin{tikzpicture}[x=0.90cm]
        \foreach \i in {0,...,6} {
          \node (top\i) at (\i,0) {};
          \node (bot\i) at (\i,-1.75) {};
        }

        \draw[->] (top0) edge node[above]{$u_{k+1}$} (top1)
                  (top1) edge node[above]{$e_{k+1}$} (top2);
        \node at ($(top2)!0.5!(top3)$) {$\dots$};
        \draw[->] (top3) edge node[above]{$c_n$} (top4)
                  (top4) edge node[above]{$c_{n+1}$} (top5);
        \node at ($(top5)!0.5!(top6)$) {$\dots$};

        \draw[->] (top0) -- node[right]{$\fsat k$} (bot0);

        \draw[->,rounded corners=4pt] (bot0) -- +(0.5,-0.5) -| node[below,pos=0.3]{$\gsat{k}$} (top4);

        \node[condtri,shape border rotate=90, rotate around={-10:(bot0.center)}, label={[rotate=-10,anchor=north,label distance=1pt]below:{$\Asat k$}}] at (bot0.south) {\kern12pt};
        \node[condtri,shape border rotate=0,rotate around={-10:(top0.center)},scale=0.9] at (top0.west) {$*$};
      \end{tikzpicture}

      The condition labeled $*$ on the top left corresponds to
      $\mathcal{C}_{2k}$.  It is of the form
      $\Big( \bigwedge_{m \ne \varsat j} \forall f_m . \mathcal A_m
      \Big) \land \forall \fsat k . \Asat k$, i.e., one of its
      children is the current element
      $\forall \fsat k . \Asat k = \forall f_j.\mathcal{A}_j$ of the
      sequence, and further children $\forall f_m . \mathcal A_m$.

      If $\fsat k$ is an isomorphism, we are done, set $p = k$ and we
      do not need to choose a next element. We will show that this
      point is reached eventually and that $2p \leq n$ later in the proof.

      So assume that $\fsat k$ is not an isomorphism.
      So some other arrow $\fpf{k+1}$ was pulled forward in this step instead.

      \begin{itemize}

        \item
          Assume $u_{k+1}$ is not $c_n$ and there is at least one more arrow $e_{k+1}$ before $c_n$ is reached.

          Let $\forall \fpf{k+1}.\Apf{k+1}$ be the child that was
          chosen in the tableau to be pulled forward now (i.e.,
          $\fpf{k+1} = u_{k+1}$ is an isomorphism).  Then, the
          condition $\mathcal{C}_{2k}$ at the current node (marked $*$
          in the diagram above) is
          $\Big( \bigwedge_{m \ne \varpf{k+1}} \forall f_m . \mathcal
          A_m \Big) \land \forall \fpf{k+1} . \Apf{k+1}$, with
          $\forall \fsat k . \Asat k$ being among the
          $\forall f_m . \mathcal A_m$, $m\ne \ell$.  Assume the
          condition that was pulled forward has the structure
          $\Apf{k+1} = \bigvee_j \exists h_j . \mathcal B_j$, which
          results in a step:

          \[
          \begin{array}{l}\displaystyle
            \Big( \bigwedge_{\kern10pt\mathclap{m \ne \varpf{k+1}}\kern10pt}
            \forall f_m . \mathcal A_m \Big)
            \land \forall \fpf{k+1} . \Apf{k+1}
          \xrightarrow[= u_{k+1}]{\fpf{k+1}}
            \underbrace{\bigvee_j \exists h_j . \Big(
              \mathcal B_j}_{\Apf{k+1}\quad} \land
              \underbrace{
              \Big( \bigwedge_{\kern10pt\mathclap{m \ne \varpf{k+1}}\kern10pt}
              \forall f_m . \mathcal A_m \Big)%
              _{\downarrow \fpf{k+1} ; h_j}
              }_{\mathclap{\text{contains successors of $\forall \fsat k . \Asat k$}}}
            \Big)
          \end{array}
          \]

          Since the next condition is an existential, it has outgoing transitions for each $h_j$, with one of them corresponding to the $e_{k+1}$ that is next on the path:

          \[
            \bigvee_j \exists h_j . \Big(
              \mathcal B_j \land
              \Big( \bigwedge_{\kern10pt\mathclap{m \ne \varpf{k+1}}\kern10pt}
              \forall f_m . \mathcal A_m \Big)%
              _{\downarrow \fpf{k+1} ; h_j}
            \Big)
          \xrightarrow[= e_{k+1}]{h_j}
            \mathcal B_j \land
            \underbrace{
            \Big( \bigwedge_{\kern10pt\mathclap{m \ne \varpf{k+1}}\kern10pt}
            \forall f_m . \mathcal A_m \Big)%
            _{\downarrow \fpf{k+1} ; h_j}
            }_{\mathclap{\text{contains successors of $\forall \fsat k . \Asat k$}}}
          \]

          We now reinterpret the diagram from before as a square, shown below in (i).
          The right side of square (i) is the composition of all remaining arrows until $c_n$. In case $e_{k+1}$ is actually $c_n$ and there are no further arrows left, the right side is simply $\id$.

          \begin{tikzpicture}[x=1.10cm,y=1.10cm]
            \def\sqw{2.2}
            \def\sqh{2.2}
            \def\sqhsp{0.3} 
            \def\sqhspi{0.4} 
            \def\sqhsr{0.25} 
            \def\sqhsrr{0.25} 

            \begin{scope}[shift={(0,0)}]
              \node (a) at (0,0) {};
              \node (b) at (\sqw-\sqhsr,0) {};
              \node (c) at (0,-\sqh+\sqhsrr) {};
              \node (ds) at (\sqw-\sqhsr,-\sqh+\sqhsrr) {};
              \draw[->] (a) -- node[above,align=center]{$\fpf{k+1} ; h_j$\\$= u_{k+1} ; e_{k+1}$} (b);
              \draw[->] (a) -- node[left]{$\fsat k$} (c);

              \draw[->] (b) -- node[right]{$u_{k+2} ; \dots ; c_n$} (ds);
              \draw[->] (c) -- node[below]{$\gsat k$} (ds);

              \node at (0.5*\sqw-0.5*\sqhsr,-\sqh -0.7) {(i)};
            \end{scope}

            \node at (\sqw+2,-0.5*\sqh) {$\rightarrow$};

            \begin{scope}[shift={(\sqw+3.5,0)}]
              \node (a) at (0,0) {};
              \node (b) at (\sqw-\sqhsp,0) {};
              \node (c) at (0,-\sqh+\sqhsp) {};
              \node (d) at (\sqw-\sqhsp-\sqhspi,-\sqh+\sqhsp+\sqhspi) {};
              \node (ds) at (\sqw,-\sqh) {};
              \draw[->] (a) -- node[above,align=center]{$\fpf{k+1} ; h_j$\\$= u_{k+1} ; e_{k+1}$} (b);
              \draw[->] (a) -- node[left]{$\fsat k$} (c);
              \draw[->] (b) -- node[left,pos=0.25]{$\beta$} (d);
              \draw[->] (c) -- node[above,pos=0.3]{$\alpha$} (d);

              \draw[->] (d.center)+(5pt,-5pt) -- (ds);
              \node at ($(d.center)+(12pt,-4pt)$) {$\gamma$};
              \draw[->] (b) -- node[right]{$u_{k+2} ; \dots ; c_n$} (ds);
              \draw[->] (c) -- node[below]{$\gsat k$} (ds);

              \node at (0.5*\sqw-0.5*\sqhsr,-\sqh -0.7) {(ii)};
            \end{scope}
          \end{tikzpicture}%

          Being a commuting square, (i) can be reduced to some representative square
          $(\alpha, \beta) \in \kappa(\fsat k, (u_{k+1} ; e_{k+1})) = \kappa(\fsat k, (\fpf{k+1};h_j))$
          (shown above in square (ii) on the right).
          Depending on $\kappa$, there might be several pairs of $\alpha,\beta$ that make (ii) commute, in which case an arbitrary one can be chosen.

          The shift that happens in the current condition (in particular that of $\forall \fsat k . \Asat k$) refers to the same set of squares:

          \[ (\forall \fsat k . \Asat k)_{\downarrow \fpf{k+1};h_j} =
          \bigwedge_{(\alpha,\beta) \in \kappa(\fsat k, (\fpf{k+1};h_j))}
          \forall \beta . ({\Asat k})_{\downarrow \alpha} \]

        This means that one of the successors of
        $\forall \fsat k . \Asat k$ actually uses the representative
        square of (ii) above.  Now let $\alpha, \beta$ be the pair of
        arrows that close that square and $\gamma$ be the mediating
        morphism.  Then rename $\alpha_{k+1} = \alpha$,\
        $\fsat{k+1} = \beta$,\ \mbox{$\gsat{k+1} = \gamma$},\
        $\Asat{k+1} = (\Asat{k})_{\downarrow \alpha_{k+1}}$ (note that
        $(\Asat k)_{\downarrow \alpha_{k+1}}$ is one of the successors
        of $\Asat k$) and extend our original diagram:

          \begin{tikzpicture}[x=0.90cm]
            \foreach \i in {0,...,7} {
              \node (top\i) at (\i,0) {};
              \node (bot\i) at (\i,-1.75) {};
            }

            \draw[->] (top0) edge node[above]{$u_{k+1}$} (top1)
                      (top1) edge node[above]{$e_{k+1}$} (top2)
                      (top2) edge node[above]{$u_{k+2}$} (top3);
            \node at ($(top3)!0.5!(top4)$) {$\dots$};
            \draw[->] (top4) edge node[above]{$c_n$} (top5)
                      (top5) edge node[above]{$c_{n+1}$} (top6);
            \node at ($(top6)!0.5!(top7)$) {$\dots$};

            \draw[->] (bot0) edge node[below]{$\alpha_{k+1}$} (bot2);

            \draw[->] (top0) -- node[right]{$\fsat k$} (bot0);
            \draw[->] (top2) -- node[right]{$\fsat{k+1}$} (bot2);

            \draw[->,rounded corners=4pt] (bot0) -- +(1,-1) -| node[below,pos=0.3]{$\gsat{k}$} ($(top5)+(0.5,-0.5)$) -- (top5);
            \draw[->,rounded corners=4pt] (bot2) -- +(0.25,-0.25) -| node[above,pos=0.3]{$\gsat{k+1}$} (top5);

            \node[condtri,shape border rotate=90, rotate around={-10:(bot0.center)}, label={[rotate=-10,anchor=north,label distance=1pt]below:{$\Asat k$}}] at (bot0.south) {\kern12pt};
            \node[condtri,shape border rotate=90, rotate around={-5:(bot2.center)}, label={[rotate=-5,anchor=west,label distance=-1pt]45:{$\Asat{k+1}$}}] at (bot2.south) {\kern10pt};
          \end{tikzpicture}

          Now start over from the beginning, with $\fsat{k+1}, \gsat{k+1}$ instead of $\fsat k, \gsat k$.
          (Since the chain $u_{k+2}, \dots c_n$ is now two elements shorter than before, this process will not continue endlessly.)

        \item
          Consider the case that $c_n = u_{k+1}$: 

          \begin{tikzpicture}[x=0.90cm]
            \foreach \i in {0,...,6} {
              \node (top\i) at (\i,0) {};
              \node (bot\i) at (\i,-1.75) {};
            }

            \draw[->] (top0) edge node[above]{$u_{k+1} = c_n$} (top4)
                      (top4) edge node[above]{$c_{n+1}$} (top5);
            \node at ($(top5)!0.5!(top6)$) {$\dots$};

            \draw[->] (top0) -- node[right]{$\fsat k$} (bot0);

            \draw[->,rounded corners=4pt] (bot0) -- +(0.5,-0.5) -| node[below,pos=0.3]{$\gsat{k}$} (top4);

            \node[condtri,shape border rotate=90, rotate around={-10:(bot0.center)}, label={[rotate=-10,anchor=north,label distance=1pt]below:{$\Asat k$}}] at (bot0.south) {\kern12pt};
          \end{tikzpicture}

          Labels of transitions from a universal condition (such as $u_{k+1}$) are always isomorphisms.
          Hence $u_{k+1} = \fsat k ; \gsat k$ can only split into isomorphisms,
          so $\fsat k$ must be an isomorphism as well.
        \item Consider the case that $c_n = e_{k+1}$ or $n=0$: In both
          cases the remaining sequence $c_{n+1},\dots,c_n$ is empty,
          representing the identity $\id$. Hence
          $\fsat{k+1};\gsat{k+1} = \id$, which implies that
          $\fsat{k+1}$ is an iso.

          (This subcase and the previous one ensure that an
          isomorphism is obtained after finitely many -- at most $p$
          with $2p\le n$ -- steps.)
      \end{itemize}

    \item Summarizing the situation so far, the original morphism
      $\fsat 0$ has been ``turned into an iso'' $\fsat q$, more
      precisely there is a successor $\fsat{p}.\Asat{p}$ of
      $\fsat{0}.\Asat{0}$ that is an iso:

      \begin{tikzpicture}[x=0.90cm]
        \foreach \i in {0,...,9} {
          \node (top\i) at (\i,0) {};
          \node (bot\i) at (\i,-1.75) {};
        }

        \draw[->] (top0) edge node[above]{$u_1$} (top1)
                  (top1) edge node[above]{$e_1$} (top2)
                  (top2) edge node[above]{$u_2$} (top3)
                  (top3) edge node[above]{$e_2$} (top4);
        \node at ($(top4)!0.5!(top5)$) {$\dots$};
        \draw[->] (top5) edge node[above]{$u_p$} (top6)
                  (top6) edge node[above]{$e_p$} (top7)
                  (top7) edge node[above]{$u_{p+1}$} (top8);
        \node at ($(top8)!0.5!(top9)$) {$\dots$};

        \draw[->] (bot0) edge node[below]{$\alpha_1$} (bot2);
        \draw[->] (bot2) edge node[below]{$\alpha_2$} (bot4);
        \node at ($(bot4)!0.5!(bot5)$) {$\dots$};
        \draw[->] (bot5) edge node[below]{$\alpha_p$} (bot7);

        \draw[->] (top0) -- node[right]{$\fsat0$} (bot0);
        \draw[->] (top2) -- node[right]{$\fsat1$} (bot2);
        \draw[->] (top7) -- node[right]{$\fsat p$} (bot7);

        \node[condtri,shape border rotate=90, rotate around={-10:(bot0.center)}, label={[rotate=-10,anchor=north,label distance=1pt]below:{$\Asat 0$}}] at (bot0.south) {\kern12pt};
        \node[condtri,shape border rotate=90, rotate around={-10:(bot2.center)}, label={[rotate=-10,anchor=north,label distance=1pt]below:{$\Asat 1$}}] at (bot2.south) {\kern12pt};
        \node[condtri,shape border rotate=90, rotate around={-10:(bot7.center)}, label={[rotate=-10,anchor=north,label distance=1pt]below:{$\Asat p$}}] at (bot7.south) {\kern12pt};
      \end{tikzpicture}

      To be able to pull an indirect successor of $\fsat p$ forward,
      it has to be an isomorphism.  Generally, for $k \geq p$, not all
      direct successors of $\forall \fsat k.\Asat k$ might have
      isomorphisms (depending on the class $\kappa$ in use).  However,
      shifting isomorphisms always results in at least one isomorphism
      again (cf.\ \Cref{rsq-preserve-sections}), so at least one
      successor of $\forall \fsat k.\Asat k$ contains an isomorphism.
      Hence for $\forall \fsat{k+1}.\Asat{k+1}$ we choose an arbitrary
      one of these, and thereby ensure that all elements
      $\forall \fsat{p+1}.\Asat{p+1}, \allowbreak\forall
      \fsat{p+2}.\Asat{p+2}, \dots$ of the sequence do in fact have
      isomorphisms.

      Furthermore by construction we have that
      $u_{i+1};e_{i+1};d_{i+1} = d_k;\alpha_{i+1}$ for all $i$ which
      means that the row of squares in the diagram above
      commutes. From the triangles in the representative square
      diagram we obtain that $g_i = \alpha_{i+1};g_{i+1}$ and
      $d_{i+1};g_{i+1} = u_{i+2};\dots;c_n$. Whenever $c_n$ is a
      morphism derived from an existential step ($c_n = e_{k+1}$), we
      have that $d_{k+1};g_{k+1} = \id$ (see above). And whenever
      $c_n$ is a morphism derived from a universal step
      ($c_n = u_{k+1}$), we know that $d_k;g_k = u_{k+1}$ and
      $\alpha_{k+1} = g_k;c_{n+1};d_{k+1}$. The latter is true since
      $d_k;\alpha_{k+1} = u_{k+1};e_{k+1};d_{k+1} =
      d_k;g_k;e_{k+1};d_{k+1} = d_k;g_k;c_{n+1};d_{k+1}$ and $d_k$ is
      an iso as argued above.

    \item Since $\fsat p$ is an isomorphism, it must eventually be
      pulled forward.  We continue the construction of the morphisms,
      potentially beyond $c_n$, and choose the representative squares
      such that the arrows $\fsat k$ with $k\ge p$ are also isos
      (cf.\ the assumption at the end of \Cref{sec:representative-squares-shift}).

      If the path is infinite, the fairness constraint guarantees the
      existence of an index $q$ where an iso $d_q$ which is a
      successor of $d_p$ is pulled forward.  If the path is finite,
      the length of the path is an upper bound for the value of $q$,
      and some indirect successor of $\fsat p$ must have been pulled
      forward by then: the only way that a branch can be unextendable
      is that there are no isomorphisms that could be pulled forward;
      however, for $k \geq p$, $\fsat k$ \emph{is} an isomorphism (as
      shown in the previous item), so the path cannot end before one
      of these $\fsat k$ has actually been pulled forward.

      This pull-forward step eventually happens at condition
      $\mathcal{C}_{2q}$, being marked by $*$ in the diagram. Both
      cases $2q \le n$ and $2q>n$ are possible. We assume that
      $\mathcal{C}_{2q} = \bigwedge_{i\in I} \forall f_i . \mathcal
      A_i$.

      \begin{tikzpicture}[x=0.90cm]
        \foreach \i in {0,...,13} {
          \node (top\i) at (\i,0) {};
          \node (bot\i) at (\i,-1.75) {};
        }

        \draw[->] (top0) edge node[above]{$u_1$} (top1)
                  (top1) edge node[above]{$e_1$} (top2);
        \node at ($(top2)!0.5!(top3)$) {$\dots$};
        \draw[->] (top3) edge node[above]{$u_p$} (top4)
                  (top4) edge node[above]{$e_p$} (top5)
                  (top5) edge node[above]{$u_{p+1}$} (top6)
                  (top6) edge node[above]{$e_{p+1}$} (top7);
        \node at ($(top7)!0.5!(top8)$) {$\dots$};
        \draw[->] (top8) edge node[above]{$u_q$} (top9)
                  (top9) edge node[above]{$e_q$} (top10);
        \draw[->] (top10) -- node[right]{$\fsat{q}$} (bot10);

        \draw[->] (top10) -- node[above]{$u_{q+1}$} (top11);
        \draw[->] (top11) -- node[above]{$e_{q+1}$} (top12);
        \node at ($(top12)!0.5!(top13)$) {$\dots$};

        \draw[->] (bot0) edge node[below]{$\alpha_1$} (bot2);
        \node at ($(bot2)!0.5!(bot3)$) {$\dots$};
        \draw[->] (bot3) edge node[below]{$\alpha_p$} (bot5);
        \draw[->] (bot5) edge node[below]{$\alpha_{p+1}$} (bot7);
        \node at ($(bot7)!0.5!(bot8)$) {$\dots$};
        \draw[->] (bot8) edge node[below]{$\alpha_{q}$} (bot10);

        \draw[->] (top0) -- node[right]{$\fsat0$} (bot0);
        \draw[->] (top2) -- node[right]{$\fsat1$} (bot2);
        \draw[->] (top5) -- node[right]{$\fsat p$} (bot5);
        \draw[->] (top7) -- node[right]{$\fsat{p+1}$} (bot7);

        \node[condtri,shape border rotate=90, rotate around={-10:(bot0.center)}, label={[rotate=-10,anchor=north,label distance=1pt]below:{$\Asat 0$}}] at (bot0.south) {\kern12pt};
        \node[condtri,shape border rotate=90, rotate around={-10:(bot5.center)}, label={[rotate=-10,anchor=north,label distance=1pt]below:{$\Asat p$}}] at (bot5.south) {\kern12pt};
        \node[condtri,shape border rotate=90, rotate around={-10:(bot10.center)}, label={[rotate=-10,anchor=north,label distance=1pt]below:{$\Asat q$}}] at (bot10.south) {\kern12pt};
        \node[condtri,shape border rotate=270,rotate around={10:(top10.center)},scale=0.9] at (top10.north) {$*$};
      \end{tikzpicture}

      Hence at step $q$ a successor $\forall \fsat{q}.\Asat{q}$ of
      $\forall \fsat p . \Asat p$ is finally pulled forward, i.e., the
      next label $u_{q+1}$ equals $\fsat q$. Assume that
      $\forall \fsat{q}.\Asat{q} = \forall
      f_\ell.\mathcal{A}_\ell$. Let
      $\Asat q = \bigvee_j \exists h_j \mathcal B_j$, then the next
      step is:

      \[
        \Big( \bigwedge_{\kern10pt\mathclap{m \ne \ell}\kern10pt}
        \forall f_m . \mathcal A_m \Big)
        \land \forall \fsat q . \Asat q
      \xrightarrow[= u_{q+1}]{\fsat q}
      \overbrace{
        \underbrace{\bigvee_j \exists h_j . \Big(
          \mathcal B_j}_{\Asat q} \land
          \Big( \bigwedge_{\kern10pt\mathclap{m \ne \varsat{q}}\kern10pt}
          \forall f_m . \mathcal A_m \Big)%
          _{\downarrow \fsat q ; h_j}
        \Big)
      }^{\mathcal E}
      \]

      \begin{tikzpicture}[x=0.90cm]
        \foreach \i in {0,...,10} {
          \node (top\i) at (\i,0) {};
          \node (bot\i) at (\i,-1.75) {};
        }
        \foreach \i in {11,12,13} {
          \node (tbj\i) at (\i,-1.25) {};
        }

        \draw[->] (top0) edge node[above]{$u_1$} (top1)
                  (top1) edge node[above]{$e_1$} (top2);
        \node at ($(top2)!0.5!(top3)$) {$\dots$};
        \draw[->] (top3) edge node[above]{$u_p$} (top4)
                  (top4) edge node[above]{$e_p$} (top5)
                  (top5) edge node[above]{$u_{p+1}$} (top6)
                  (top6) edge node[above]{$e_{p+1}$} (top7);
        \node at ($(top7)!0.5!(top8)$) {$\dots$};
        \draw[->] (top8) edge node[above]{$u_q$} (top9)
                  (top9) edge node[above]{$e_q$} (top10);
        \draw[->] (top10) -- node[right,pos=0.1]{$\fsat q = u_{q+1}$} (tbj11);

        \draw[->] (tbj11) -- node[below]{$e_{q+1}$} (tbj12);
        \node at ($(tbj12)!0.5!(tbj13)$) {$\dots$};

        \draw[->] (bot0) edge node[below]{$\alpha_1$} (bot2);
        \node at ($(bot2)!0.5!(bot3)$) {$\dots$};
        \draw[->] (bot3) edge node[below]{$\alpha_p$} (bot5);
        \draw[->] (bot5) edge node[below]{$\alpha_{p+1}$} (bot7);
        \node at ($(bot7)!0.5!(bot8)$) {$\dots$};
        \draw[->] (bot8) edge node[below]{$\alpha_q$} (tbj11);

        \draw[->] (top0) -- node[right]{$\fsat0$} (bot0);
        \draw[->] (top5) -- node[right]{$\fsat p$} (bot5);

        \node[condtri,shape border rotate=90, rotate around={-10:(bot0.center)}, label={[rotate=-10,anchor=north,label distance=1pt]below:{$\Asat 0$}}] at (bot0.south) {\kern12pt};
        \node[condtri,shape border rotate=90, rotate around={-10:(tbj11.center)}, label={[rotate=-10,anchor=north,label distance=1pt]below:{$\Asat q$}}] at (tbj11.south) {\kern12pt};
        \node[condtri,shape border rotate=180, rotate around={25:(tbj11.center)}, label={[rotate=0,anchor=west,label distance=-1pt]35:{$\mathcal E$}}] at (tbj11.north east) {\kern7pt};
      \end{tikzpicture}

      As a result, we have: $( [ e_{q+1},\ u_{q+2}, e_{q+2}, \dots ], \mathcal{E}) \in P$.

      Note that the subsequence $u_{q+2}, e_{q+2}, \dots$ may or may
      not be empty.  (However, $e_{q+1}$ exists because the path
      cannot end at an empty existential.)  Furthermore
      $d_q;g_q = u_{q+1};e_{q+1};\dots;c_n = d_q;e_{q+1};\dots;c_n$,
      which implies that $g_q = e_{q+1};\dots;c_n$, since $d_q$ is an
      iso. 

  \end{enumerate}

  We now apply our up-to techniques to obtain the originally desired
  result (given $\forall \fsat0.\Asat0$, $n$ and $\gsat0$ such that
  $c_1 ; \dots ; c_n = \fsat0 ; \gsat0$, show that
  $([ \gsat0, c_{n+1}, \dots ], \Asat0) \in u(P)$).

      Observe that the root objects of $\Asat q$ and $\mathcal E$ \coincide.
      They are not the same condition,
      however, $\Asat q$ is ``contained'' in $\mathcal E$,
      in the sense that the path condition $\mathcal E$ is a (nested) conjunction of $\Asat q$ and additional conditions.
      Hence we first apply up-to conjunction removal (\Cref{thm:upto-conjunction-removal}) such that our condition is not $\mathcal E$ (``$\Asat q$ with conjunctions''), but only $\Asat q$:

      \[ \implies
      \Big( [ e_{q+1}, u_{q+1}, e_{q+2}, \dots ],
      \bigvee_j \exists h_j . \mathcal B_j
      \Big) =
      \Big( [ e_{q+1}, u_{q+2}, e_{q+2}, \dots ], \Asat q \Big) \in u(P) \]

      Since $\Asat q = (\Asat{q-1})_{\downarrow \alpha_q} = \ldots = (\Asat0)_{\downarrow \alpha_1 \downarrow \dots \downarrow \alpha_q}$,
      we can now apply a series of up-to shift operations (\Cref{thm:compat-uptoshift}) to obtain a tuple with $\Asat0$ as the condition:

      \begin{align*}
        ([ e_{q+1}, u_{q+2}, e_{q+2}, \dots ], &\ {\Asat0}_{\downarrow \alpha_1 \downarrow \dots \downarrow \alpha_q} ) \in u(P) \\
        \implies
        ([ \alpha_q ; e_{q+1}, u_{q+2}, e_{q+2}, \dots ], &\ {\Asat0}_{\downarrow \alpha_1 \downarrow \dots \downarrow \alpha_{q-1}} ) \in u(P) \\
        \implies \dots \implies
        ([ \alpha_1 ; \dots ; \alpha_q ; e_{q+1}, u_{q+2}, e_{q+2}, \dots ], &\ \Asat0) \in u(P)
      \end{align*}

      Now we use up-to recomposition (\Cref{thm:compat-uptorecomp}) to rewrite the initial part of the chain, depending on where $c_n$ is in the path, relative to $\fsat q$:

      \begin{proofparts}
      \proofPart{$2q\le n$, i.e., $c_n$ is after $\fsat q$}
        \begin{tikzpicture}[x=0.90cm]
          \foreach \i in {0,...,5} {
            \node (top\i) at (\i,0) {};
            \node (bot\i) at (\i,-1.75) {};
          }
          \foreach \i in {6,...,11} {
            \node (tbj\i) at (\i,-1.25) {};
          }

          \draw[->] (top0) edge node[above]{$u_1$} (top1)
                    (top1) edge node[above]{$e_1$} (top2);
          \node at ($(top2)!0.5!(top3)$) {$\dots$};
          \draw[->] (top3) edge node[above]{$u_q$} (top4)
                    (top4) edge node[above]{$e_q$} (top5);
          \draw[->] (top5) -- node[right,pos=0.3]{$\fsat q = u_{q+1}$} (tbj6);

          \draw[->] (tbj6) -- node[below]{$e_{q+1}$} (tbj7);
          \node at ($(tbj7)!0.5!(tbj8)$) {$\dots$};
          \draw[->] (tbj8) -- node[below]{$c_n$} (tbj9);
          \draw[->] (tbj9) -- node[below]{$c_{n+1}$} (tbj10);
          \node at ($(tbj10)!0.5!(tbj11)$) {$\dots$};

          \draw[->] (bot0) edge node[below]{$\alpha_1$} (bot2);
          \node at ($(bot2)!0.5!(bot3)$) {$\dots$};
          \draw[->] (bot3) edge node[below]{$\alpha_q$} (tbj6);

          \draw[->] (top0) -- node[right]{$\fsat0$} (bot0);

          \draw[->,rounded corners=4pt] (bot0) -- +(0.5,-0.5) -| node[below,pos=0.3]{$\gsat 0$} (tbj9);
        \end{tikzpicture}

        As indicated in the diagram, we have constructed arrows in
        such a way that
        \[
          \gsat0 = \alpha_1;g_1 = \dots = \alpha_1;\dots;\alpha_q;g_q
          = \alpha_1 ; \dots ; \alpha_q ; e_{q+1} ; u_{q+2} ;
          e_{q+2} ; \dots ; c_n.
        \]
        Hence
        \begin{align*}
        ([ \alpha_1 ; \dots ; \alpha_q ; e_{q+1}, u_{q+2}, e_{q+2}, \dots, c_n, \ c_{n+1}, \dots ], &\ \Asat0) \in u(P) \\
        \implies
        ([ \gsat0, \ c_{n+1}, \dots ], &\ \Asat0) \in u(P)
        \end{align*}
      \proofPart{$2q > n$, i.e., $c_n$ is before $\fsat q$}
        \begin{tikzpicture}[x=0.90cm]
          \foreach \i in {0,...,8} {
            \node (top\i) at (\i,0) {};
            \node (bot\i) at (\i,-1.75) {};
          }
          \foreach \i in {9,10,11} {
            \node (tbj\i) at (\i,-1.25) {};
          }

          \draw[->] (top0) edge node[above]{$u_1$} (top1)
                    (top1) edge node[above]{$e_1$} (top2);
          \node at ($(top2)!0.5!(top3)$) {$\dots$};
          \draw[->] (top3) edge node[above]{$c_n$} (top4)
                    (top4) edge node[above]{$c_{n+1}$} (top5);
          \node at ($(top5)!0.5!(top6)$) {$\dots$};
          \draw[->] (top6) edge node[above]{$u_q$} (top7)
                    (top7) edge node[above]{$e_q$} (top8);
          \draw[->] (top8) -- node[right,pos=0.3]{$\fsat q = u_{q+1}$} (tbj9);

          \draw[->] (tbj9) -- node[below]{$e_{q+1}$} (tbj10);
          \node at ($(tbj10)!0.5!(tbj11)$) {$\dots$};

          \draw[->] (bot0) edge node[below]{$\alpha_1$} (bot2);
          \node at ($(bot2)!0.5!(bot4)$) {$\dots$};
          \node at ($(bot4)!0.5!(bot6)$) {$\dots$};
          \draw[->] (bot6) edge node[below]{$\alpha_q$} (tbj9);

          \draw[->] (top0) -- node[right]{$\fsat0$} (bot0);

          \draw[->,rounded corners=4pt] (bot0) -- +(0.5,-0.5) -| node[below,pos=0.3]{$\gsat 0$} (top4);
        \end{tikzpicture}

        As indicated in the diagram, we can show that
        $\gsat0 ; c_{n+1} ; \dots ; u_{q+1} = \alpha_1 ; \dots ;
        \alpha_q$. In order to prove this equality, we have to make a
        case distinction:
        \begin{itemize}
        \item Whenever $c_n$ originates from a universal step, that is
          $c_n = u_{k+1}$, we know that $c_{n+1}$ stems from an
          existential step, that is $c_{n+1} = e_{k+1}$. In this case
          we use the fact that the $d_\ell$ for $\ell\ge k+1$ are isos:
          \begin{align*}
            \gsat0 ; c_{n+1} ; \dots ; u_{q+1} &=
            \alpha_0;\dots;\alpha_k;g_k;c_{n+1} ; \dots ; u_{q+1} \\
            &=
            \alpha_0;\dots;\alpha_k;\alpha_{k+1};d_{k+1}^{-1};c_{n+2};
            \dots;u_{q+1} \\
            &=
            \alpha_0;\dots;\alpha_k;\alpha_{k+1};\alpha_{k+2};d_{k+2}^{-1};
            c_{n+4}; \dots;u_{q+1} = \dots \\
            &=
            \alpha_1;\dots;\alpha_q;d_q^{-1};u_{q+1} \\
            &=
            \alpha_1;\dots;\alpha_q;d_q^{-1};d_q = \alpha_1;\dots;\alpha_q
          \end{align*}
        \item Whenever $c_n$ originates from an existential step, that
          is $c_n = e_{k+1}$, we know that $c_{n+1}$ stems from a
          universal step, that is $c_n = u_{k+2}$. In this case we
          can infer from $\alpha_{k+1};g_{k+1} = g_k$ that
          $\alpha_{k+1} = \alpha_{k+1};\id =
          \alpha_{k+1};g_{k+1};d_{k+1} = g_k;d_{k+1}$. This implies:
          \begin{align*}
           \gsat0 ; c_{n+1} ; \dots ; u_{q+1} &=
            \alpha_0;\dots;\alpha_k;g_k;c_{n+1} ; \dots ; u_{q+1} \\
            &=
            \alpha_0;\dots;\alpha_k;\alpha_{k+1};d_{k+1}^{-1};c_{n+1};
            \dots;u_{q+1} \\
            &=
            \alpha_0;\dots;\alpha_k;\alpha_{k+1};\alpha_{k+2};d_{k+2}^{-1};
            c_{n+3}; \dots;u_{q+1} = \dots \\
            &=
            \alpha_1;\dots;\alpha_q;d_q^{-1};u_{q+1} \\
            &=
            \alpha_1;\dots;\alpha_q;d_q^{-1};d_q = \alpha_1;\dots;\alpha_q
          \end{align*}
        \end{itemize}
        Hence
        \begin{align*}
        ([ \alpha_1 ; \dots ; \alpha_q ; \ e_{q+1}, u_{q+2}, e_{q+2}, \dots ], &\ \Asat0) \in u(P) \\
        \implies
        ([ \gsat0, c_{n+1}, \dots, u_{q+1}, \ e_{q+1}, u_{q+2}, e_{q+2}, \dots ], &\ \Asat0) \in u(P)
        \end{align*}
      \end{proofparts}
      In both cases we have achieved the desired result. \qedhere
\end{proofparts}
\end{proof}

\subsection{Soundness and Completeness}

We are finally ready to show soundness and completeness of our
method.

As a condition is essentially equivalent to any of its tableaux, which
break it down into existential subconditions, a closed tableau
represents an unsatisfiable condition.

\begin{theoremrep}[Soundness]\label{thm-algo-sound}
  If there exists a tableau $T$ for a condition $\mathcal A$ where all
  branches are closed, then the condition $\mathcal A$ in the root
  node is unsatisfiable.
\end{theoremrep}

\begin{proofsketch}
  By induction over the depth of $T$.  Base case is $\condfalse$
  (obviously unsatisfiable).  Induction step for $\exists$:
  $\bigvee_i \exists f_i.\mathcal A_i$ is unsatisfiable if all
  $\mathcal A_i$ are.  For $\forall$: by construction, the only child
  contains an equivalent condition.
\end{proofsketch}

\begin{proof}
  By induction over the depth of $T$.
  \begin{itemize}
  \item Let the depth of $T$ be $1$, i.e., $T$ consists of exactly one
    node (the root $\mathcal A$).  Since all branches (in this case,
    the only branch) of $T$ are closed and therefore end with an empty
    existential, this node itself, which is also the root node, is an
    empty existential.  An empty existential condition is
    unsatisfiable. Therefore, the statement holds.
    \item Assume the statement holds for tableaux of depth $d$.  Let
      $T$ be a tableau with depth $d+1$.  $T$ has several children as
      created by one of the tableau rules ($\forall$ or $\exists$).
      Each child can be seen as the root of a subtableau $T_i$.  As
      all branches of $T$ are closed, so are all branches of all
      $T_i$, and each $T_i$ has a depth of at most $d$.  Therefore,
      the root $\mathcal A_i$ of each subtableau $T_i$ is
      unsatisfiable.  Now consider the root of $T$:
      \begin{itemize}
      \item Root of $T$ is existential
        ($\mathcal A = \biglor_i \exists f_i.\mathcal A_i$): The
        children of the root of $T$ are, by construction, exactly the
        children of $\mathcal A$.  By definition of satisfaction,
        $\mathcal A$ is satisfiable if and only if at least one
        $\mathcal A_i$ is satisfiable.  However, all $\mathcal A_i$,
        being the roots of closed tableaux $T_i$ of depth $d$, are
        known to be unsatisfiable.  Therefore, $\mathcal A$ is also
        unsatisfiable.
      \item Root of $T$ is universal
        ($\mathcal A = \bigland_i \forall f_i.\mathcal A_i$): Since
        $T$ has depth $d+1$, the root has a single child node $\mathcal A'$
        that was created by pulling forward some isomorphism $f_p$
        from $\mathcal A$.  $\mathcal A'$ is the root of a closed
        subtableau of depth $d$ and therefore unsatisfiable.  Then,
        $\exists f_p.\mathcal A'$ is also unsatisfiable, which by
        \Cref{lem:pull-forward-isos} is equivalent to $\mathcal A$ and
        therefore also unsatisfiable.
        \qedhere
      \end{itemize}
  \end{itemize}
\end{proof}

We now prove completeness, which -- to a large extent -- is a
corollary of \Cref{thm:fair-branch-model}.

\begin{theorem}[Completeness]\label{thm-algo-complete}
  If a condition $\mathcal A$ is unsatisfiable, then every tableau
  constructed by obeying the fairness constraint is a
  finite tableau where all branches are closed. Furthermore,
  at least one such tableau exists.
\end{theorem}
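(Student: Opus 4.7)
The plan is to prove the main statement by contrapositive: I show that if some tableau $T$ for $\mathcal A$ constructed under the fairness constraint contains a branch that is either infinite (hence fair by assumption) or finite-but-open, then $\mathcal A$ has a model, contradicting unsatisfiability. This simultaneously yields that every fair tableau is finite and that all its branches are closed.

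First, I would observe that the tableau is finitely branching. An existential node $\bigvee_{i\in I}\exists f_i.\mathcal A_i$ produces one descendant per disjunct, and $I$ is finite since conditions are finite (\Cref{def:condition-new}); a universal node produces at most one descendant, corresponding to the single chosen $f_p$. Therefore, if $T$ is infinite, König's lemma provides an infinite path starting from the root; by the fairness assumption on $T$, this path is a fair branch.

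Next, I would apply \Cref{thm:fair-branch-model}: for any branch that is either finite, open, and not extendable (which by definition of open ends at a universal without isos, and is thus covered by the hypothesis of that theorem), or infinite and fair, the theorem produces a relation $P \subseteq \Seq \times \Cond$ with $P \subseteq {\models}$ and containing a pair whose condition is $\mathcal A$. The corresponding composable sequence is therefore a model of $\mathcal A$, contradicting unsatisfiability. Taking the contrapositive establishes that for unsatisfiable $\mathcal A$ every fair tableau is finite and has all branches closed.

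For the existence claim, I would exhibit a concrete fair construction, namely the queue-based strategy of \Cref{rem:fairness-strategies}: maintain, at each stage, a FIFO queue of children awaiting pull-forward; always dequeue the first entry, replacing it (in place) by its unique shifted successor when its associated arrow is not yet an iso, and enqueue any newly generated universal children with isos at the tail. Since each $\kappa(\alpha_1,\alpha_2)$ is finite and representative squares preserve sections (\Cref{rsq-preserve-sections}), an indirect successor of every queue entry eventually becomes an iso and is pulled forward, satisfying \Cref{lts-asai-successor-relation}. The main obstacle is justifying that this strategy really realizes fairness uniformly along indirect successor chains; this hinges on the observation that the queue entry tracks the entire chain of successors, so no iso ever gets postponed indefinitely, giving the existence of a fair tableau and completing the proof.
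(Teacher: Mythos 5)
Your proposal is correct and takes essentially the same route as the paper: the contrapositive via \Cref{thm:fair-branch-model} (fair or open-unextendable branches yield models) together with the queue-based strategy of \Cref{rem:fairness-strategies} for the existence of a fair tableau. The only difference is that you make explicit the finite-branching/K\"onig's-lemma step for extracting an infinite (hence fair) branch from an infinite tableau, which the paper leaves implicit.
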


\begin{proof}
  The contraposition follows from \Cref{thm:fair-branch-model}: If the
  constructed tableau is finite with open branches or infinite, then
  $\mathcal A$ is satisfiable. Furthermore, a fair tableau must exist
  and can be constructed by following the strategy described in
  \Cref{rem:fairness-strategies}.
\end{proof}

In the next section we will show how the open branches in a fully
expanded tableau can be interpreted as models, thus giving us a
procedure for model finding. 

\begin{example}[Proving unsatisfiability]\label{exa-prove-unsat}
  \newcommand{\Mor}[1]{[\mkern3mu\fcGraph{n1}{#1}\mkern3mu]}
  \newcommand{\Morsub}[1]{[\mkern3mu\scalebox{0.8}{\fcGraph{n1}{#1}}\mkern3mu]}
  \tikzset{
    old/.style={draw=black!38, color=black!38, line width=0.4pt},
    new/.style={line width=0.7pt},
  }
  \newcommand{\gno}[2]{\node[gninlinable,old] (n#1) at (#2,0) {#1};}
  \newcommand{\gnn}[2]{\node[gninlinable,new] (n#1) at (#2,0) {#1};}
  \newcommand{\gnoZ}{\node[gninlinable,old] (n0) at (0,0) {};}
  \newcommand{\gnnZ}{\node[gninlinable,new] (n0) at (0,0) {};}
  \newcommand{\gnoP}[1]{\node[gninlinable,old] (nP) at (#1,0) {$\mathclap+$};}
  \newcommand{\gnnP}[1]{\node[gninlinable,new] (nP) at (#1,0) {$\mathclap+$};}
  \newcommand{\gnoX}[1]{\node[gninlinable,old] (nX) at (#1,0) {$\mathclap\times$};}
  \newcommand{\gnnX}[1]{\node[gninlinable,new] (nX) at (#1,0) {$\mathclap{\boldsymbol\times}$};}
  \newcommand{\gnoA}[1]{\node[gninlinable,old] (nA) at (#1,0) {A};}
  \newcommand{\gnnA}[1]{\node[gninlinable,new] (nA) at (#1,0) {A};}
  \newcommand{\gnoB}[1]{\node[gninlinable,old] (nB) at (#1,0) {B};}
  \newcommand{\gnnB}[1]{\node[gninlinable,new] (nB) at (#1,0) {B};}
  \newcommand{\gnnL}[1]{\node[gninlinable,new] (nL) at (#1,0) {};}
  \newcommand{\geo}[2]{\draw[gedge,old] (#1) to (#2);}
  \newcommand{\gen}[2]{\draw[gedge,new] (#1) to (#2);}
  \newcommand{\gebo}[2]{\draw[gedge,old] (#1) to[bend left=10] (#2);}
  \newcommand{\gebn}[2]{\draw[gedge,new] (#1) to[bend left=10] (#2);}
  \newcommand{\gebbn}[2]{\draw[gedge,new] (#1) to[bend left=20] (#2);}
  We work in $\graphfinj$.
  For this example, we use the following shorthand notation for graph morphisms:
  $\Mor{ \gno11 \gnn22 \gen{n1}{n2} }$ means $
    \fcGraph{n1}{\node[gninlinable] (n1) at (0,0) {$1$};}
    \rightarrow \fcGraph{n1}{
      \node[gninlinable] (n1) at (0,0) {$1$};
      \node[gninlinable] (n2) at (1,0) {$2$};
      \draw[gedge] (n1) to (n2);
    }
  $, i.e.,
  the morphism is the inclusion from the light-gray graph elements to the full graph.

  Consider the condition
  $\mathcal A = \forall [\emptyset] .\exists \Mor{\gnn11} .\condtrue \land \forall \Mor{\gnnX1} .\condfalse$,
  meaning (1) there exists a node and (2) no node must exist.
  It is easily seen that these contradict each other and hence $\mathcal A$ is
  unsatisfiable. We obtain a tableau with a single branch for this condition:
  \[
    \mathcal A
    \:\xrightarrow{\smash{[\emptyset]}}\:
    \exists \Mor{\gnn11} .\big( \condtrue \land (\forall \Mor{\gnnX1}.\condfalse)_{\downarrow \Morsub{\gnn11}} \big)
    \:\xrightarrow{\smash{\Morsub{\gnn11}}}\:
    \forall \Mor{\gno11} .\condfalse \land \forall \Mor{\gno11 \gnnX2} .\condfalse
    \:\xrightarrow{\smash{\Morsub{\gno11}}}\:
    \condfalse
  \]
  In the first step, $\mathcal A$ is universal with an isomorphism $\emptyset \to \emptyset$, which is pulled forward.
  Together with its (only) existential child, this results in the partial model 
  $\Mor{\gnn11}$ and another universal condition with the meaning:
  (1)~the just created node $\,\fcGraph{n1}{\gnn11}\,$ must not exist, and
  (2)~no additional node $\fcGraph{nX}{\gnnX1}$ may exist either.

  \pagebreak
  This condition includes another isomorphism ($\Mor{\gno11}$) to be
  pulled forward.  Its child
  \mbox{$\mathcal A_p = \biglor_{j \in J} \exists g_j.\mathcal B_j =
    \condfalse$} is an empty disjunction, so the tableau rule for
  universals adds an empty disjunction ($\condfalse$) as the only
  descendant.  This closes the (only) branch, hence the initial
  condition $\mathcal A$ can be recognized as unsatisfiable.
\end{example}

\subsection{Model Finding}

We will now discuss the fact that the calculus not only searches for a
logical contradiction to show unsatisfiability, but at the same time
tries to generate a (possibly infinite) model.  We can show that
\emph{every} finitely decomposable (i.e., ``finite'') model (or a
prefix thereof) can be found after finitely many steps in a
\emph{fully expanded} tableau, i.e., a tableau where all branches are
extended whenever possible, including infinite branches.  This is a
feature that distinguishes it from other known calculi for first-order
logic.

The following lemma shows that an infinite branch always makes
progress towards approximating the infinite model.

\begin{lemmarep}\label{tableau-nonisos-every-now-and-then}
  Let $\mathcal A$ be a condition and $T$ be a fully
  expanded tableau for
  $\mathcal A$.  Then, for each branch it holds that it either is
  finite, or that there always eventually is another
  \mbox{non-isomorphism} on the branch.
\end{lemmarep}

\begin{proofsketch}
  We define the size of a condition and show that it decreases if an
  iso occurs on a path. This means that eventually there will
  always occur another non-iso on a path.
%
%
\end{proofsketch}

\begin{proof}
  We show that every iso on the branch is followed by finitely many
  further isos and eventually by a non-iso, or the branch ends.
  Equivalently: for each subtableau $T'$ of $T$, every branch is
  finite, or begins with finitely many isos followed by a non-iso.  We
  do this by showing that it is impossible for a path to begin with
  infinitely many isos.

  Assume we are starting with a universal condition $\mathcal A$
  (for an existential condition, simply proceed with the universal conditions in its children).

  $\mathcal A$ either has no isos as children, resulting in no
  pull-forward step and hence a finite branch of length zero, which
  validates the statement.  Otherwise, assume some
  $\forall f_p.\mathcal A_p$ was pulled forward, resulting in a single
  existential child $\biglor_j \exists g_j.\mathcal C_j$.

  Of this existential condition, every child where $g_j$ is a
  non-isomorphism immediately validates the statement, as, starting
  from the root, only one iso ($f_p$) precedes a non-iso $g_j$.

  Consider a child where $g_j$ is an isomorphism, and the next node on
  the branch is hence a universal condition $\mathcal C_j$.  Being a
  result of a pull-forward step, its shape is
  $\mathcal C_j = \mathcal B_j \land \bigland_{m \ne p} (\forall
  f_m.\mathcal A_m)_{\downarrow f_p;g_j}$, where
  $\mathcal{A}_p = \biglor_{\ell} \exists g_\ell.\mathcal{B}_\ell$.
  
  We will compare the weights (respectively size) of conditions
  $\mathcal A$ and $\mathcal C_j$, by defining a weight function $w$
  on conditions as follows:
  \[\textstyle
    w\big( \biglor_i \exists f_i.\mathcal A_i \big)
    = w\big( \bigland_i \forall f_i.\mathcal A_i \big)
    = 1 + \sum_i w(\mathcal A_i)
  \]

  (Note that the base cases are conditions with no children with
  $w(\condtrue) = w(\condfalse) = 1$ according to the definition
  above.)

  Assume
  $\mathcal A = \forall f_p.\left( \biglor_{j'} \exists
    g_{j'}.\mathcal B_{j'} \right) \land \bigland_{m \ne p} \forall
  f_m.\mathcal A_m$.  Let $w_p = \sum_{m \ne p} w(\mathcal A_m)$ be
  the combined weight of the children of $\mathcal A$ that were
  \emph{not} pulled forward.  Also, for the condition $B_j$ which is
  part of the conjunction $\mathcal{C}_j$, assume
  that $\mathcal B_j = \bigland_k \forall h_k.\mathcal D_k$.  Then:
  \begin{align*}
    w(\mathcal A)
    &=        \textstyle 1 + w\left( \biglor_{j'} \exists g_{j'}.\mathcal B_{j'} \right) + w_p \\
    &=        \textstyle 1 + 1 + \left( \sum_{j'} w(\mathcal B_{j'}) \right) + w_p    &&\text{(includes weights of all $\exists$ children)} \\
    &\geq     \textstyle 1 + w(\mathcal B_j) + w_p   &&\text{(only the child on the current branch)} \\
    &= \textstyle 1 + \Big( 1 + \sum_k w(\mathcal D_k) \Big) + w_p
  \end{align*}

  Now compute the weight of $\mathcal C_j$:
  \begin{align*}
    w(\mathcal C_j)
    &= \textstyle w\left( \mathcal B_j \land \bigland_{m \ne p} (\forall f_m.\mathcal A_m)_{\downarrow f_p;g_j} \right) \\
    &= \textstyle w\left( \bigland_k \forall h_k.\mathcal D_k \land \bigland_{m \ne p} (\forall f_m.\mathcal A_m)_{\downarrow f_p;g_j} \right) \\
    &=     \textstyle 1 + \Big( \sum_k w(\mathcal D_k) \Big) + \sum_{m \ne p} w\left( {\mathcal A_m}_{\downarrow f_p;g_j} \right) \\
    &\stackrel{\mathclap{(*)}}{=} \textstyle 1 + \Big( \sum_k w(\mathcal D_k) \Big) + \sum_{m \ne p} w(\mathcal A_m) \\
    &=     \textstyle 1 + \Big( \sum_k w(\mathcal D_k) \Big) + w_p
  \end{align*}

  The equality marked with $(*)$ holds because $f_p;g_j$ is an
  isomorphism ($f_p$ because only isos can be pulled forward), and for
  an iso $i$, $w(\mathcal C_{\downarrow i}) = w(\mathcal C)$. Since we
  previously assumed that $\kappa$ has only a single representative
  square when isos are involved, then, by the assumption made at the
  end of \Cref{sec:representative-squares-shift}, when shifting over
  an isomorphism, the resulting condition has the same structure as
  the original one (we obtain exactly one square per child condition,
  and hence the new condition has exactly as many children as the
  original one; and its child conditions are shifted over identities).
  Since the weight function considers only child count and nesting
  depth, shifting over isos preserves the weight.  Clearly
  $w(\mathcal A) > w(\mathcal C_j)$ in that case.


  Hence, if a branch starts at a universal condition, and the next
  arrow on the path is an isomorphism, the weight of the condition is
  reduced by at least $1$.  This excludes having an infinite sequence
  of isomorphisms at the beginning of the branch, since the weight
  cannot be reduced infinitely many times.  Hence every branch
  eventually ends, or a non-iso occurs on it after at most
  $2 \cdot w(\mathcal A)$ leading isomorphisms.
\end{proof}

\begin{theoremrep}[Model Finding]\label{will-find-finite-model}
  Let $\mathcal A$ be a condition, $m$ a finitely decomposable arrow
  such that $m \models \mathcal A$ and let $T$ be a fully expanded
  tableau for $\mathcal A$.

  Then, there exists an open and unextendable branch with arrows
  $c_1,\tdots,c_n$ in $T$, having condition $\mathcal R$ in the leaf
  node, where $m = c_1; \tdots; c_n;r$ for some $r$ with
  $r \models \mathcal R$. Furthermore the finite prefix is itself a
  model for $\mathcal A$ (i.e.,
  $[c_1, \tdots, c_n] \models \mathcal A$).
\end{theoremrep}

\begin{proof}
  First, we observe that by the tableau construction rules of
  \Cref{satcheck.new}, if $m \models \mathcal A$, then the tableau $T$
  for $\mathcal{A}$ contains at least one branch
  $\mathcal A = \mathcal{C}_0 \xrightarrow{c_1} \mathcal C_1 \xrightarrow{c_2} \mathcal C_2 \xrightarrow{c_3} \dots$
  such for each node $\mathcal C_i$ on the branch, it holds that
  $m = c_1 ; \dots ; c_i ; r_i$ and $r_i \models \mathcal C_i$.
  (This is because the tableau contains only equivalence transformations and existential unfolding.
  The existence of at least one such branch is guaranteed for all
  models, whether they are infinite or finite.)

  This can be shown by induction: assume that we have to constructed
  the path up to index $i$. We distinguish the following cases:

  \begin{itemize}
  \item $\mathcal{C}_i = \biglor_{j\in J} \exists f_j.\mathcal{A}_j$
    is existential: since $r \models \mathcal{C}_i$, there
    exists an index $j\in J$ such that $r_i = f_j;r_{i+1}$ for some
    arrow $r_{i+1}$ with $r_{i+1} \models
    \mathcal{A}_j$. Then choose $c_{i+1} = f_j$,
    $\mathcal{C}_{i+1} = \mathcal{A}_j$. Clearly
    $m = c_1;\dots;c_i;r_i = c_1;\dots;c_i;f_j;r_{i+1} =
    c_1;\dots;c_i;c_{i+1};r_{i+1}$.
  \item $\mathcal{C}_i = \bigland_{j\in J} \forall f_j.\mathcal{A}_j$
    is universal: then there is some (non-deterministically chosen)
    tableau step reaching $\mathcal{C}_{i+1}$, where $c_{i+1}$ is an
    iso and $\mathcal{C}_i \equiv \exists c_{i+1}.\mathcal{C}_{i+1}$
    (see \Cref{lem:pull-forward-isos}). Define
    $r_{i+1} = c_{i+1}^{-1};r_i$, which gives us
    $m = c_1;\dots;c_i;r_i = c_1;\dots;c_i;c_{i+1};r_{i+1}$.
  \end{itemize}

  A result of this observation is that every model (or a prefix thereof) actually occurs in the tableau.
  We will now show that such a prefix will occur at the end of a finite branch, and hence be produced as a positive output of the algorithm.

  Consider a branch that has the property described above.
  If there are multiple branches that satisfy this property, pick an arbitrary one.
  The branch could be either finite or infinite:

  \begin{itemize}
  \item Assume the branch is infinite.  By
    \Cref{tableau-nonisos-every-now-and-then}, the branch must contain
    infinitely many non-isomorphisms $c_i$.  However, $m$ is finitely
    decomposable by assumption, so such a decomposition is impossible.

      Thus, any branch with the property given above cannot be infinite.

    \item Assume the branch is finite and ends with condition
      $\mathcal{R} = \mathcal C_\ell$ in the leaf.  Then,
      $\mathcal{R}$ is not an existential condition, since then it
      would be $\condfalse$ and hence has no model.

      Hence $\mathcal{R}$ is a universal condition that has no isomorphisms.
      Then, $\id \models \mathcal{R}$.
      \qedhere
  \end{itemize}
\end{proof}

Note that this finite branch can be found in finite time, assuming a
suitable strategy for exploration of the tableau such as breadth-first
search or parallel processing.

\begingroup
  \def\gDom#1{%
    \tikz[baseline=(a.base),opacity=0.35]{\node[rectangle,inner sep=0pt] (a) {#1};}%
  }

  \def\nd#1#2{\node[gninlinable] (n#1) at (#2,0) {#1};}
  \def\ndbl#1#2{\node[gninlinable] (n#1) at (#2,0) {};}
  \def\ndsub#1#2{\node[gninlinablesub] (n#1) at (#2,0) {#1};}
  \def\gE#1#2{\draw[gedge] (#1) to (#2);}
  \def\mutualedges#1#2{\draw[gedge] (#1) to[bend left=10] (#2); \draw[gedge] (#2) to[bend left=10] (#1);}
  \newcommand\gI{             \fcGraph{n1}{\nd10}}
  \newcommand\gA{             \fcGraph{nA}{\nd A0}}
  \newcommand\gAtoB{          \fcGraph{nA}{\nd A0 \nd B1               \gE{nA}{nB}}}
  \newcommand\gIsub{          \fcGraph{n1}{\ndsub10}}
  \newcommand\gOtoI{          \fcGraph{n1}{\nd10 \ndbl0{-1}            \gE{n0}{n1}}}
  \newcommand\gItoII{         \fcGraph{n1}{\nd10 \nd21                 \gE{n1}{n2}}}
  \newcommand\gItoIIsub{      \fcGraph{n1}{\ndsub10 \ndsub21           \gE{n1}{n2}}}
  \newcommand\gItoAsub{       \fcGraph{n1}{\ndsub10 \ndsub A1          \gE{n1}{nA}}}
  \newcommand\gIA{            \fcGraph{n1}{\nd10 \nd A{0.75}}}
  \newcommand\gIAsub{         \fcGraph{n1}{\ndsub10 \ndsub A{0.75}}}
  \newcommand\gIAtoB{         \fcGraph{n1}{\nd10 \nd A{0.75} \nd B{1.75}         \gE{nA}{nB}}}
  \newcommand\gIotA{          \fcGraph{n1}{\nd10 \nd A1                \gE{nA}{n1}}}
  \newcommand\gOtoItoII{      \fcGraph{n1}{\ndbl0{-1} \nd10 \nd21      \gE{n0}{n1} \gE{n1}{n2}}}
  \newcommand\gImutII{        \fcGraph{n1}{\nd10 \nd21                 \mutualedges{n1}{n2} }}
  \newcommand\gImutIIsub{     \fcGraph{n1}{\ndsub10 \ndsub21           \mutualedges{n1}{n2} }}
  \newcommand\gItoIIA{        \fcGraph{n1}{\nd10 \nd21 \nd A{1.75}          \gE{n1}{n2}}}
  \newcommand\gItoIIAsub{     \fcGraph{n1}{\ndsub10 \ndsub21 \ndsub A{1.75} \gE{n1}{n2}}}
  \newcommand\gItoIItoIII{    \fcGraph{n1}{\nd10 \nd21 \nd32           \gE{n1}{n2} \gE{n2}{n3} }}
  \newcommand\gItoIItoIIIsub{ \fcGraph{n1}{\ndsub10 \ndsub21 \ndsub32  \gE{n1}{n2} \gE{n2}{n3} }}
  \newcommand\gItoIIAtoB{     \fcGraph{n1}{\nd10 \nd21 \nd A2 \nd B3   \gE{n1}{n2} \gE{nA}{nB} }}
  \newcommand\gItoIIotA{      \fcGraph{n1}{\nd10 \nd21 \nd A2          \gE{n1}{n2} \gE{nA}{n2} }}
  \newcommand\gItoIIAtoI{     \fcGraph{n1}{\nd10 \nd21 \nd A2          \gE{n1}{n2} \draw[gedge] (nA) to[bend left=20] (n1); }}
  \newcommand\gOtoItoIItoIII{ \fcGraph{n1}{\ndbl0{-1} \nd10 \nd21 \nd32\gE{n0}{n1} \gE{n1}{n2} \gE{n2}{n3} }}
  \newcommand\gItoIItoIIItoI{ \fcGraph{n1}{\nd10 \nd21 \nd32           \gE{n1}{n2} \gE{n2}{n3} \draw[gedge] (n3) to[bend left=20] (n1); }}
  \newcommand\gImutIItoIII{   \fcGraph{n1}{\nd10 \nd21 \nd32           \mutualedges{n1}{n2} \gE{n2}{n3} }}
  \newcommand\gItoIItoIIItoIV{\fcGraph{n1}{\nd10 \nd21 \nd32 \nd43     \gE{n1}{n2} \gE{n2}{n3} \gE{n3}{n4} }}
  \newcommand\gItoIImutIII{   \fcGraph{n1}{\nd10 \nd21 \nd32           \mutualedges{n2}{n3}  \gE{n1}{n2} }}
  \newcommand\gItoIItoIIIA{   \fcGraph{n1}{\nd10 \nd21 \nd32 \nd A{2.75} \gE{n1}{n2} \gE{n2}{n3} }}
  \newcommand\gOtoItoIImutIII{\fcGraph{n1}{\ndbl0{-1} \nd10 \nd21 \nd32\mutualedges{n2}{n3}   \gE{n1}{n2} \gE{n0}{n1} }}
  \newcommand\gItoIImutIIItoI{\fcGraph{n1}{\nd10 \nd21 \nd32           \mutualedges{n2}{n3}   \gE{n1}{n2} \draw[gedge] (n3) to[bend left=20] (n1); }}
  \newcommand\gImutIImutIII{  \fcGraph{n1}{\nd10 \nd21 \nd32           \mutualedges{n2}{n3}   \mutualedges{n2}{n1} }}
  \newcommand\gItoIImutIIIA{  \fcGraph{n1}{\nd10 \nd21 \nd32 \nd A{2.75} \mutualedges{n2}{n3}  \gE{n1}{n2} }}
  \newcommand\gImutIIA{       \fcGraph{n1}{\nd10 \nd21 \nd A{1.75}     \mutualedges{n1}{n2} }}


\begin{algorithm}[Satisfiability Check]
  \label{alg:sat-check}
  Given a condition $\mathcal{A}$, we define the procedure
  $\mathit{SAT}(\mathcal{A})$ that may either produce a model
  $c\colon \RO(\mathcal A) \to C$, answer $\mathrm{unsat}$ or does not
  terminate.
  \begin{itemize}
    \item Initialize the tableau with $\mathcal A$ in the root node.
    \item While the tableau still has open branches:
    \begin{itemize}
    \item Select one of the open branches as the current branch, using
      an appropriate strategy that extends each open branch eventually.
    \item If the leaf is a universal condition without isomorphisms,
      terminate and return the labels of the current branch as model.
    \item Otherwise, extend the branch according to the rules of
      \Cref{satcheck.new}, obeying the fairness constraint.
    \end{itemize}
  \item If all branches are closed, terminate and answer $\mathrm{unsat}$.
  \end{itemize}
\end{algorithm}

\noindent%
This procedure has some similarities to the tableau-based reasoning
from \cite{lo:tableau-graph-properties}. The aspect of model
generation was in particular considered in
\cite{slo:model-generation}.
Overall, we obtain the following result:

\begin{theorem}
  There is a one-to-one correspondence between satisfiability of a
  condition ($\mathcal A$ unsatisfiable; $\mathcal A$ has a finitely
  decomposable model; $\mathcal A$ is satisfiable, but has no finitely
  decomposable model) and the output of \Cref{alg:sat-check}
  ($\mathit{SAT}(\mathcal{A})$) (terminates with $\mathrm{unsat}$,
  terminates with a model, does not terminate).
\end{theorem}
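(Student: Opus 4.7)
The plan is to combine the previously established results: Soundness (\Cref{thm-algo-sound}), Completeness (\Cref{thm-algo-complete}), Fair branches are models (\Cref{thm:fair-branch-model}), and Model Finding (\Cref{will-find-finite-model}). The three possibilities on the semantic side (unsatisfiable; has a finitely decomposable model; satisfiable but no finitely decomposable model) are mutually exclusive and jointly exhaustive, and likewise for the algorithmic side (terminates with $\mathrm{unsat}$; terminates with a model; does not terminate). Hence it suffices to prove three implications in one direction, since they automatically yield the converses.

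First I would handle the unsatisfiable case: if $\mathcal A$ is unsatisfiable, \Cref{thm-algo-complete} gives a finite tableau (built using a fair strategy) all of whose branches are closed. Since $\mathit{SAT}$ extends open branches while obeying fairness and a round-robin scheduling over open branches, it terminates after closing all of them and returns $\mathrm{unsat}$. For the finitely decomposable model case, suppose $m \models \mathcal A$ with $m$ finitely decomposable. By \Cref{will-find-finite-model}, the fully expanded tableau contains an open unextendable branch whose labels $c_1,\tdots,c_n$ form a prefix of $m$ that is itself a model. Using a branch-exploration strategy that guarantees each open branch is visited eventually (e.g.\ breadth-first or parallel), $\mathit{SAT}$ reaches such a leaf node in finitely many steps, observes that it is a universal condition without isomorphisms, and returns $[c_1,\tdots,c_n]$.

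For the third case, suppose $\mathcal A$ is satisfiable but has no finitely decomposable model. Then $\mathit{SAT}$ cannot return $\mathrm{unsat}$: by \Cref{thm-algo-sound}, this would entail unsatisfiability, contradicting the assumption. Neither can it terminate with a model: any output $[c_1,\tdots,c_n]$ would be a finite sequence and therefore a finitely decomposable model of $\mathcal A$, again contradicting the assumption. Hence the only remaining possibility is non-termination. Conversely, the converses of each of the three implications follow automatically from the trichotomy on both sides.

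The main delicate point is ensuring that the exploration strategy mentioned in \Cref{alg:sat-check} genuinely guarantees that every open branch is extended eventually in a way compatible with fairness, so that the finite open branch produced by \Cref{will-find-finite-model} is actually reached in finite time even in the presence of other, infinite branches. This is a standard scheduling argument (maintain a queue of open branch prefixes and process one extension step per iteration), but it is the place where the abstract tableau results must be glued to a concrete algorithmic guarantee. The remainder of the proof is a direct bookkeeping argument invoking the cited theorems.
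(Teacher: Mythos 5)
Your proposal is correct and takes essentially the same route as the paper, whose proof is exactly this bookkeeping: \Cref{thm-algo-complete}/\Cref{thm-algo-sound} for the unsat case, \Cref{will-find-finite-model}/\Cref{thm:fair-branch-model} for the finitely decomposable case, and exclusion of the other outcomes for the remaining case. The only cosmetic difference is that you prove three forward implications and get the converses from the trichotomy on both sides, while the paper states each equivalence with both directions explicitly (citing \Cref{thm:fair-branch-model}, and \Cref{tableau-nonisos-every-now-and-then} for the limit-model remark); the content is the same.
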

\begin{proof}
  \mbox{}
  \begin{itemize}
  \item $\mathcal A$ unsatisfiable $\iff$ algorithm outputs unsat:
    $(\Rightarrow)$ \Cref{thm-algo-complete}, $(\Leftarrow)$
    \Cref{thm-algo-sound}
  \item $\mathcal A$ has a finitely decomposable model $\iff$
    algorithm finds finitely decomposable model: $(\Rightarrow)$
    \Cref{will-find-finite-model}, $(\Leftarrow)$
    \Cref{thm:fair-branch-model}
  \item $\mathcal A$ has only models that are not finitely
    decomposable $\iff$ algorithm does not
    terminate: \\
    $(\Rightarrow)$ exclusion of other possibilities for
    non-termination,
    \Cref{tableau-nonisos-every-now-and-then} for model in the limit \\
    $(\Leftarrow)$ \Cref{thm:fair-branch-model}
    \qedhere
  \end{itemize}
\end{proof}

\begin{example}[Finding finite models]\label{exa-find-finite-models}
  \newcommand{\Mor}[1]{[\mkern3mu\fcGraph{n1}{#1}\mkern3mu]}
  \newcommand{\Morsub}[1]{[\mkern3mu\scalebox{0.8}{\fcGraph{n1}{#1}}\mkern3mu]}
  \tikzset{
    old/.style={draw=black!38, color=black!38, line width=0.4pt},
    new/.style={line width=0.7pt},
  }
  \newcommand{\gno}[2]{\node[gninlinable,old] (n#1) at (#2,0) {#1};}
  \newcommand{\gnn}[2]{\node[gninlinable,new] (n#1) at (#2,0) {#1};}
  \newcommand{\gnoZ}{\node[gninlinable,old] (n0) at (0,0) {};}
  \newcommand{\gnnZ}{\node[gninlinable,new] (n0) at (0,0) {};}
  \newcommand{\gnoP}[1]{\node[gninlinable,old] (nP) at (#1,0) {$\mathclap+$};}
  \newcommand{\gnnP}[1]{\node[gninlinable,new] (nP) at (#1,0) {$\mathclap+$};}
  \newcommand{\gnoX}[1]{\node[gninlinable,old] (nX) at (#1,0) {$\mathclap\times$};}
  \newcommand{\gnnX}[1]{\node[gninlinable,new] (nX) at (#1,0) {$\mathclap{\boldsymbol\times}$};}
  \newcommand{\gnoA}[1]{\node[gninlinable,old] (nA) at (#1,0) {A};}
  \newcommand{\gnnA}[1]{\node[gninlinable,new] (nA) at (#1,0) {A};}
  \newcommand{\gnoB}[1]{\node[gninlinable,old] (nB) at (#1,0) {B};}
  \newcommand{\gnnB}[1]{\node[gninlinable,new] (nB) at (#1,0) {B};}
  \newcommand{\gnnL}[1]{\node[gninlinable,new] (nL) at (#1,0) {};}
  \newcommand{\geo}[2]{\draw[gedge,old] (#1) to (#2);}
  \newcommand{\gen}[2]{\draw[gedge,new] (#1) to (#2);}
  \newcommand{\gebo}[2]{\draw[gedge,old] (#1) to[bend left=10] (#2);}
  \newcommand{\gebn}[2]{\draw[gedge,new] (#1) to[bend left=10] (#2);}
  \newcommand{\gebbn}[2]{\draw[gedge,new] (#1) to[bend left=20] (#2);}
  We now work in $\graphfinj$ and use the shorthand notation introduced in \Cref{exa-prove-unsat}.
%
  Let the following condition be given:
  \begin{align*}
    & \forall\ \emptyset \to \emptyset .\exists \emptyset \to \gI  \ . \condtrue
    && \text{\small(there exists a node \gI,}
    \\ \land &
    \forall\ \emptyset \to \gI  \dotEx \gI \to \gItoII  \ . \condtrue
    && \text{\small and every node has an outgoing edge to some other node)}
  \end{align*}

  This condition has finite models, the smallest being the cycle
  \gImutII.
  When running \Cref{alg:sat-check} on this condition, it obtains the model
  in the following way:

  \begin{enumerate}
  \item The given condition is universal with an iso
    $\emptyset \to \emptyset$, which is pulled forward.  Together with
    its (only) existential child, this results in the partial model
    $\Mor{\gnn11}$ and the condition
      \begin{align*}
        & \condtrue
        \land \big( \forall \Mor{\gnn11} \dotEx \Mor{\gno11 \gnn22 \gen{n1}{n2} } . \condtrue \big)_{\downarrow \Morsub{\gnn11}}
        \\ ={}
        & \forall \Mor{\gno11} \dotEx \Mor{\gno11 \gnn22 \gen{n1}{n2} }. \condtrue
        \ \land\  \forall \Mor{\gno11 \gnnA2 } . \big(\, \smash{\overbrace{
          \exists \Mor{\gno11 \gnoA2 \gnnB3 \gen{nA}{nB} } . \condtrue
          \lor
          \exists \Mor{\gno11 \gnoA2 \gen{nA}{n1} } . \condtrue
        }^{\mathcal B}} \,\big)
      \end{align*}
      meaning:
      (1) the just created node $\gI$ must have an outgoing edge;
      (2) and every other node $A$ must also have an outgoing edge
      to either another node or to the existing node.
    \item
      Pull forward iso $\Mor{\gno11}$ and extend the partial model by
      $\Mor{\gno11 \gnn22 \gen{n1}{n2} }$,
      resulting in:
      \begin{align*}
        & \condtrue
        \land \big(
          \forall \Mor{\gno11 \gnnA2} . \mathcal B
        \big)_{\downarrow \Morsub{\gno11 \gnn22 \gen{n1}{n2}}}
        \:=\:
          \forall\Mor{\gno11 \gno22 \geo{n1}{n2} } . \mathcal B_{\downarrow \Morsub{\gno11 \gnoA2 \gen{n1}{nA} }}
          \ \land\ %
          \forall\Mor{\gno11 \gno22 \gnnA3 \geo{n1}{n2} } . \mathcal B_{\downarrow \Morsub{\gno11 \gnoA3 \gnn22 \gen{n1}{n2} }}
        \\ ={} &
          \forall \Mor{\gno11 \gno22 \geo{n1}{n2} } . \big(
            \exists \Mor{\gno11 \gno22 \geo{n1}{n2} \gnn33 \gen{n2}{n3} } . \condtrue
            \lor \exists \Mor{\gno11 \gno22 \gebo{n1}{n2} \gebn{n2}{n1} } . \condtrue
          \big)
          \\ & \land
          \forall \Mor{\gno11 \gno22 \gnnA3 \geo{n1}{n2} }. \big(
            \exists \Mor{\gno11 \gno22 \gnoA3 \gnnB4 \geo{n1}{n2} \gen{nA}{nB} } . \condtrue
            \lor \exists \Mor{\gno11 \gno22 \gnoA3 \geo{n1}{n2} \gen{nA}{n2} } . \condtrue
            \lor \exists \Mor{\gno11 \gno22 \gnoA3 \geo{n1}{n2} \gebbn{nA}{n1} } . \condtrue
          \big)
      \end{align*}
      meaning:
      (1) the second node has an edge to a third node or to the first one;
      (2) and every other node $A$ also has an edge to either another node or to one of the existing~nodes.
    \item
      Next, we pull forward $\Mor{\gno11 \gno22 \geo{n1}{n2} }$ and extend the model by $\Mor{\gno11 \gno22 \gebo{n1}{n2} \gebn{n2}{n1} }$:
      \begin{align*}
        & \condtrue
        \land \big( \forall \Mor{\gno11 \gno22 \geo{n1}{n2} \gnnA3} . \,\dots \big)_{\downarrow \Morsub{\gno11 \gno22 \gebo{n1}{n2} \gebn{n2}{n1}} }
        \ =\ %
        \forall \Mor{\gno11 \gno22 \gebo{n1}{n2} \gebo{n2}{n1} \gnnA3 }. \,\dots
      \end{align*}
    \item This condition does not have any children with isos, so it
      is satisfiable by $\id$.  Hence the composition of the
      partial models so far
      ($\Mor{\gnn11 \gnn22 \gebn{n1}{n2} \gebn{n2}{n1} }$) is a
      model for the original condition.  \ExaEndHere
  \end{enumerate}
\end{example}

\endgroup

\section{Witnesses for infinite models}
\label{sec:witnesses}

If there is no finitely decomposable model for a satisfiable condition
$\mathcal A$ (such as in \Cref{ex-ray-graphs} below), then the
corresponding infinite branch produces a model in the limit.
To detect such models in finite time we introduce an additional use of
\coinductive techniques based on the tableau calculus previously
introduced: We will show that under some circumstances, it is possible
to find some of these infinite models while checking for
satisfiability. Naturally, one can not detect all models, since this
would lead to a decision procedure for an undecidable problem.
We first have to generalize the notion of fairness to finite path
fragments: 

\begin{definition}
  \label{def:fair-path-witness}
  Let
  $\mathcal C_0 \xrightarrow{b_1} \mathcal C_1 \xrightarrow{b_2} \dots
  \xrightarrow{b_r} \mathcal C_r$ be a finite path (also called
  segment) in a tableau (cf.\ \Cref{satcheck.new.general}). Such a
  finite path is called \emph{fair} if for every child of
  $\mathcal{C}_0$ where the morphism is an iso, an indirect successor is pulled forward at some point in the
  path.
\end{definition}

This notion of fairness does not preclude that new isos appear. It
only states that all isos present at the beginning are pulled forward
at some point.

\begin{theoremrep}[Witnesses]\label{thm.witness}
  Let $\mathcal C_0$ be an alternating, universal condition.  Let a
  fixed tableau constructed by the rules of \Cref{satcheck.new} be
  given.  Let
  $\mathcal C_0 \xrightarrow{b_1} \mathcal C_1 \xrightarrow{b_2} \dots
  \xrightarrow{b_r} \mathcal C_r$ be a fair segment of a branch of the
  tableau where $r>0$, and let some arrow $m$ be given such that
  $\mathcal C_0 \condiso (\mathcal C_r)_{\downarrow m} $ for an iso
  $\iota \colon \RO({\mathcal C_r}_{\downarrow m}) \to \RO(\mathcal
  C_0)$.  Then,
  $[b_1, \tdots, b_r, m;\iota]^\omega := [b_1, \tdots, b_r, m;\iota,
  b_1, \tdots, b_r, m;\iota, \tdots]$ is a model for $\mathcal C_0$.
\end{theoremrep}

\begin{proofsketch}
  Construct the relation
  \[\setlength{\arraycolsep}{1.5pt}\begin{array}{rrl}
    P = \{
      ([ & b_1, b_2, \tdots, b_r, m;\iota, & (b_1, \tdots, b_r, m;\iota)^\omega ], \mathcal C_0), \\
      ([ &      b_2, \tdots, b_r, m;\iota, & (b_1, \tdots, b_r, m;\iota)^\omega ], \mathcal C_1),
      \:\dots,
      ([ \quad                    m;\iota,   (b_1, \tdots, b_r, m;\iota)^\omega ], \mathcal C_r)
    \}
  \end{array}\] and show that $P \subseteq s(u(P))$, using an approach
similar to that of \Cref{thm:fair-branch-model}.  Steps $b_1, \dots, b_r$ are handled in the same way as in
\Cref{thm:fair-branch-model}.  For the newly introduced step based on
$m;\iota$, the next element of the sequence of representative squares
and the successor $\fsat i$ are chosen from a child of
${\mathcal C_r}_{\downarrow m}$ instead of from a successor of
$\Asat i$.
%
\end{proofsketch}

\begin{proof}
  Let a fair segment
  $\mathcal C_0 \xrightarrow{u_1} \mathcal C_0' \xrightarrow{e_1}
  \mathcal C_1 \xrightarrow{u_2} \mathcal C_1' \xrightarrow{e_2} \dots
  \xrightarrow{e_r} \mathcal C_r$ be given, where $u_i, e_i$
  correspond to labels of universal and existential steps,
  respectively. That is $u_i = b_{2i-1}$, $e_i = b_{2i}$.

Also let an arrow $m$ be given such that $\mathcal C_0  \condiso
{\mathcal C_r}_{\downarrow m}$ wrt.\ an iso $\iota$.
Note that $\mathcal C_0$ is universal, the condition is alternating,
and ${\mathcal C_r}_{\downarrow m} \condiso \mathcal C_0$ implies that
$\mathcal C_r$ is also universal. Hence $r$ must be even.
We define the relation
  \[\def\arraystretch{1.2}\setlength{\arraycolsep}{1.5pt}\begin{array}{rrl}
    P = \{
      ([ & b_1, b_2, \dots, b_r, m;\iota, & (b_1, \dots, b_r, m;\iota)^\omega ], \mathcal C_0), \\
      ([ &      b_2, \dots, b_r, m;\iota, & (b_1, \dots, b_r, m;\iota)^\omega ], \mathcal C_1), \\
      & \vdots \\
      ([ &                          m;\iota, & (b_1, \dots, b_r, m;\iota)^\omega ], \mathcal C_r)
    \}
  \end{array}\]
and show that $P \subseteq s(u(P))$:
Let $([ c_1, c_2, c_3, \dots ], \mathcal A) \in P$ for some condition $\mathcal A$ on the segment.
We show that $([ c_1, c_2, c_3, \dots ], \mathcal A) \in s(u(P))$.
The general structure of the proof matches that of \Cref{thm:fair-branch-model}. Hence we highlight the differences and refer to \Cref{thm:fair-branch-model} for detailed explanations of the unchanged parts.

Compared to \Cref{thm:fair-branch-model}, arrows on the paths of $P$
can have a step based on $m;\iota$ that does not result from an application of a tableau rule, but serves to close the loop from $\mathcal C_r$ to $\mathcal C_0$ via the shift ${\mathcal C_r}_{\downarrow m}$ (and an isomorphism $\iota$).
This requires changes to the selection of successors of $\forall \fsat0.\Asat0$.
Furthermore, the fairness constraint is guaranteed for the repeated segment, which slightly changes the reasoning for when a successor of $\forall \fsat p.\Asat p$ is pulled forward.
\begin{proofparts}
  \proofPart{$\mathcal A$ is existential}
  Existential conditions only occur in the middle of the segment,
  since the step for $m;\iota$ originates from $\mathcal C_r$ which is universal.
  Hence we can handle this case in exactly the same way as in the proof of \Cref{thm:fair-branch-model}, which refers to the tuple in $P$ that corresponds to the next condition on the path, and satisfies the definition of $s$ using $n=1$ and $g = \id$.

  \proofPart{$\mathcal A$ is universal}
  Let $\mathcal A = \bigland_{i \in I} \forall f_i.\mathcal A_i$.
  For $([ c_1, c_2, \dots ], \bigland_{i \in I} \forall f_i . \mathcal A_i) \in s(u(P))$ to hold,
  by definition of $s$, for all children $\forall f_i.\mathcal A_i$, all arrows~$g$ and all $n \in \natzero$,
  we need to show: if $c_1 ; \dots ; c_n = f_i ; g$, then $([ g, c_{n+1}, \dots ], \mathcal A_i) \in u(P)$.

  Now choose a particular child $\forall \fsat0.\Asat0$ of
  $\mathcal A$ that should be satisfied, some $n$, and some $\gsat0$
  such that $c_1 ; \dots ; c_n = \fsat0 ; \gsat0$, be given, for which
  we now need to show that
  \mbox{$([ \gsat0, c_{n+1}, \dots ], \Asat0) \in u(P)$}.  The
  sequence $c_1, c_2, \dots$ starts at some position $2\ell$ within the
  segment
  ($\mathcal C_{2\ell} \xrightarrow{u_{\ell+1}}
  \xrightarrow{e_{\ell+1}} \mathcal C_{2\ell+2} \dots$ for some
  $\ell$), and is followed by $m;\iota$ and the full segment repeated
  ad infinitum. The morphism $c_n$ can be any of the arrows of this chain, after
  an arbitrary number of repetitions of the segment.  Graphically, the
  situation can be depicted like this:

  \begin{tikzpicture}[x=0.87cm,y=0.5cm]
    \foreach \i in {0,...,14} {
      \node (top\i) at (\i,0) {};
      \node (bot\i) at (\i,-1.75) {};
    }

    \draw[->] (top0) edge node[above,align=center]{$c_1$\\[-2pt]$u_{\ell+1}$} (top1)
              (top1) edge node[above,align=center]{$c_2$\\[-2pt]$e_{\ell+1}$} (top2)
              (top3) edge node[above]{$e_{\sfrac{r}{2}}$} (top4)
              (top4) edge node[above]{$m;\iota$} (top5)
              (top5) edge node[above]{$u_1$} (top6)
              (top7) edge node[above]{$e_{\sfrac{r}{2}}$} (top8)
              (top8) edge node[above]{$m;\iota$} (top9)
              (top9) edge node[above]{$u_1$} (top10)
              (top11) edge node[above]{$c_n$} (top12)
              (top12) edge node[above]{$c_{n+1}$} (top13);
    \node at ($(top2)!0.5!(top3)$) {$\dots$};
    \node at ($(top6)!0.5!(top7)$) {$\dots$};
    \node at ($(top10)!0.5!(top11)$) {$\dots$};
    \node at ($(top13)!0.5!(top14)$) {$\dots$};

    \draw[->] (top0) -- node[left]{$\fsat0$} (bot0);

    \draw[->,rounded corners=4pt] (bot0) -- +(1.0,-1.3) -- node[below,pos=0.7828]{$\gsat0$} +(10.9,-1.3) -- (top12.-60);

    \node[condtri,dart tip angle=30,shape border rotate=270,rotate around={10:(top0.center)}, label={[rotate=10,anchor=south,label distance=1pt]above:{$\mathcal A = \mathcal C_{2\ell}$}}] at (top0.north) {\kern5pt};
    \node[condtri,dart tip angle=20,shape border rotate=270,rotate around={0:(top2.center)}, label={[rotate=0,anchor=south,label distance=1pt]above:{$\mathcal C_{2\ell+2}$}}] at (top2.north) {\kern3.0pt};
    \node[condtri,dart tip angle=20,shape border rotate=270,rotate around={0:(top4.center)}, label={[rotate=0,anchor=south,label distance=1pt]above:{$\mathcal C_r$}}] at (top4.north) {\kern3.0pt};
    \node[condtri,dart tip angle=20,shape border rotate=270,rotate around={0:(top5.center)}, label={[rotate=0,anchor=south,label distance=1pt]above:{$\mathcal C_0 \mathrlap{\, = {\mathcal C_r}_{\downarrow m}}$}}] at (top5.north) {\kern3.0pt};
    \node[condtri,shape border rotate=90, rotate around={-10:(bot0.center)}, label={[rotate=-5,anchor=north,label distance=1pt]below:{$\Asat 0$}}] at (bot0.south) {\kern5pt};
  \end{tikzpicture}

  
  For the purposes of this proof, we need to extend the definition of
  a direct successor to the step based on $m;\iota$, as follows:
  First, shifting $\mathcal C_r$ results in an intermediate condition
  $\mathcal{H} \defeq {\mathcal C_r}_{\downarrow m} = \bigland_{t \in
    T} \forall h_t.\mathcal{H}_t$, with each child of $\mathcal C_r$
  being related to its shifted counterparts of $\mathcal{H}$.  As
  $\mathcal C_0 \condiso {\mathcal C_r}_{\downarrow m} = \mathcal{H}$
  wrt.\ $\iota$, each child $\forall h_t.\mathcal{H}_t$ is again
  related to (at least) one child of $\mathcal C_0$, together with an
  iso $\iota_t$ relating the children.  Then, each child of
  $\mathcal C_0$ is a successor of the corresponding original child of
  $\mathcal C_r$.

  We proceed to construct, as in \Cref{thm:fair-branch-model}, a
  sequence
  $\forall \fsat0.\Asat0, \forall \fsat1.\Asat1, \dots,
  \allowbreak\forall \fsat p.\Asat p, \dots, \allowbreak\forall \fsat
  q.\Asat q$ of children of the universal conditions
  $\mathcal{C}_{2i}$ on the infinite chain of arrows (i.e., the
  conditions where $m$ and all $u_i$ originate), with each element of
  the sequence being a (direct) successor of the previous element,
  such that:

  \begin{enumerate}
  \item After $p$ steps, $0 \leq 2p \leq n$, $\fsat p$ is an
    isomorphism,
  \item for $k>p$, $\fsat k$ is an isomorphism as well,
  \item after $q$ steps, $p \leq q < \infty$, $\fsat q$ is pulled
    forward, resulting in
    $([ e_{q+1}, \dots ], \Asat q) \in u(P)$.
  \end{enumerate}
  Afterwards, we can transform that to
  $([ \gsat0, c_{n+1}, \dots ], \Asat0) \in u(P)$ (as required by the
  satisfaction function $s$) by applying up-to techniques, thereby
  showing that the segment actually describes an infinite model.

  The initial steps are depicted in the following diagram, with the
  chain $u_{\ell+1},e_{\ell+1},\dots$, $n$, $\fsat0,\Asat0$ and
  $\gsat0$ given:

  \begin{tikzpicture}[x=0.90cm]
    \foreach \i in {0,...,9} {
      \node (top\i) at (\i,0) {};
      \node (bot\i) at (\i,-1.5) {};
    }

    \draw[->] (top0) edge node[above,align=center]{$c_1$\\[-2pt]$u_{\ell+1}$} (top1)
              (top1) edge node[above,align=center]{$c_2$\\[-2pt]$e_{\ell+1}$} (top2)
              (top2) edge node[above,align=center]{$c_3$\\[-2pt]$u_{\ell+2}$} (top3)
              (top3) edge node[above,align=center]{$c_4$\\[-2pt]$e_{\ell+2}$} (top4);
    \node at ($(top4)!0.5!(top5)$) {$\dots$};
    \draw[->] (top5) edge node[above]{$c_{n-1}$} (top6)
              (top6) edge node[above]{$c_n$} (top7)
              (top7) edge node[above]{$c_{n+1}$} (top8);
    \node at ($(top8)!0.5!(top9)$) {$\dots$};

    \draw[->] (bot0) edge node[below]{$\alpha_1$} (bot2);
    \draw[->] (bot2) edge node[below]{$\alpha_2$} (bot4);
    \node at ($(bot4)!0.5!(bot5)$) {$\dots$};

    \draw[->] (top0) -- node[left]{$\fsat0$} (bot0);
    \draw[->] (top2) -- node[left]{$\fsat1$} (bot2);
    \draw[->] (top4) -- node[left]{$\fsat2$} (bot4);

    \draw[->,rounded corners=4pt] (bot0) -- +(1.0,-1.2) -- node[below,pos=0.7828]{$\gsat0$} +(5.9,-1.2) -- (top7.-60);
    \draw[->,rounded corners=4pt] (bot2) -- +(1.0,-1.1) -- node[above,pos=0.7]{$\gsat1$} +(3.8,-1.1) -- (top7.-120);

    \node[condtri,dart tip angle=30,shape border rotate=270,rotate around={10:(top0.center)}, label={[rotate=10,anchor=south,label distance=1pt]above:{$\mathcal C_{2\ell}$}}] at (top0.north) {\kern5pt};
    \node[condtri,dart tip angle=20,shape border rotate=270,rotate around={0:(top2.center)}, label={[rotate=0,anchor=south,label distance=1pt]above:{$\mathcal C_{2\ell+2}$}}] at (top2.north) {\kern3.0pt};
    \node[condtri,dart tip angle=20,shape border rotate=270,rotate around={-5:(top4.center)}, label={[rotate=-5,anchor=south,label distance=1pt]above:{$\mathcal C_{2\ell+4}$}}] at (top4.north) {\kern3.0pt};
    \node[condtri,shape border rotate=90, rotate around={-10:(bot0.center)}, label={[rotate=-10,anchor=north,label distance=1pt]below:{$\Asat 0$}}] at (bot0.south) {\kern5pt};
    \node[condtri,shape border rotate=90, rotate around={-10:(bot2.center)}, label={[rotate=-10,anchor=north,label distance=1pt]below:{$\Asat 1$}}] at (bot2.south) {\kern5pt};
  \end{tikzpicture}

  The proof objective is now to construct the sequence of all further $\forall \fsat{k+1}.\Asat{k+1}$ and associated $\alpha_{k+1}$ (and, for $k < p$, $\gsat{k+1}$).

  \begin{enumerate}
    \item
      We construct the aforementioned sequence by repeatedly choosing a successor of $\forall \fsat k.\Asat k$ as the next element $\forall \fsat{k+1}.\Asat{k+1}$ of the sequence.
      We then show that an isomorphism is always obtained after finitely many steps (at most $n$) by showing that for some $p$ with $2p \leq n$, $\fsat p$ must be an isomorphism.

      If $\fsat k$ is an isomorphism, we are done, set $p = k$ and we do not need to choose a next element. We will show that $2p \leq n$ later in the proof.

      So assume that $\fsat k$ is not an isomorphism.
      We consider the next arrow on the chain, which is either $u_{k+1}$ or $m$:

      \begin{itemize}
      \item Assume the next arrow on the chain results from a tableau
        step (i.e., $u_{k+1}$) and neither $u_{k+1}$ nor $e_{k+1}$
        equal $c_n$.

          Then construct the representative square and pick the
          successor as in \Cref{thm:fair-branch-model}.  Afterwards,
          start over again, with $\fsat{k+1}, \gsat{k+1}$
          instead of $\fsat k, \gsat k$.  (Since the chain
          $u_{k+2}, \dots c_n$ is now two elements shorter than
          before, this process will not continue endlessly.)

        \item Consider the case where $c_n = u_{k+1}$ or
          $c_n=e_{k+1}$ or $n=0$.

          As explained in \Cref{thm:fair-branch-model}, in the first case
          $c_n = u_{k+1}$ is an isomorphism, so
          $u_{k+1} = \fsat k ; \gsat k$ can only split into
          isomorphisms, so $\fsat k$ must be an isomorphism as well.
          In the second and third case, we can conclude that
          $\fsat{k+1} ; \gsat{k+1} = \id$, since $\id$ corresponds to
          an empty sequence of arrows and again $\fsat{k+1}$ is an iso.

        \item Consider the case that the next arrow is $m;\iota$, and
          it may or may not be $c_n$.

          To obtain a successor for $\forall \fsat k.\Asat k$, first
          reduce the left square below to a representative square
          (below center), to obtain a corresponding child of
          $\mathcal{H}$, which we call $\forall h_t.\mathcal{H}_t$
          with $h_t = \beta$.  Then, pick a matching isomorphic child
          $\forall \beta'.\mathcal{H}_t'$ of $\mathcal C_0$ and a
          relating iso $\iota_t$ to obtain the diagram on the right.

          \begin{tikzpicture}[x=0.75cm,y=0.8cm]
            \begin{scope}[shift={(0,0)}]
              \node (a) at (0,0) {};
              \node (b) at (1.7,0) {};
              \node (c) at (0,-1.8) {};
              \node (ds) at (1.7,-1.8) {};
              \draw[->] (a) -- node[above]{$m$} (b);
              \draw[->] (a) -- node[left]{$\fsat k$} (c);

              \draw[->] (b) -- node[right,align=left]{$\iota ; u_1 ;$\\$\dots ; c_n$} (ds);
              \draw[->] (c) -- node[below]{$\gsat k$} (ds);
            \end{scope}


            \begin{scope}[shift={(4.6,0)}]
              \node (a) at (0,0) {};
              \node (b) at (2,0) {};
              \node (c) at (0,-2) {};
              \node (d) at (1.6,-1.6) {};
              \node (ds) at (2.5,-2.5) {};
              \draw[->] (a) -- node[above]{$m$} (b);
              \draw[->] (a) -- node[left]{$\fsat k$} (c);
              \draw[->] (b) -- node[left,pos=0.3]{$\beta = h_t$} (d);
              \draw[->] (c) -- node[above,pos=0.3]{$\alpha$} (d);

              \draw[->] (d.center)+(5pt,-5pt) -- (ds);
              \node at ($(d.center)+(10pt,-3pt)$) {$\gamma$};
              \draw[->] (b) -- node[right,align=left]{$\iota ; u_1 ;$\\$ \dots ; c_n$} (ds);
              \draw[->] (c) -- node[below]{$\gsat k$} (ds);

              \node[condtri,dart tip angle=20,shape border
              rotate=270,rotate around={0:(a.center)},
              label={[rotate=0,anchor=south,label
                distance=1pt]above:{$\mathcal C_r$}}] at (a.north)
              {\kern3.0pt}; \node[condtri,dart tip angle=20,shape
              border rotate=270,rotate around={0:(b.center)},
              label={[rotate=0,anchor=south,label
                distance=1pt]above:{${\mathcal C_r}_{\downarrow m} =
                  \mathcal{H}$}}] at (b.north) {\kern3.0pt};
            \end{scope}
            \begin{scope}[shift={(9.7,0)}]
              \node (ro-cr) at (0,0) {};
              \node (ro-cr-m) at (2,0) {};
              \node (ro-c0) at (4,0) {};
              \node (ro-asatr) at (0,-1.8) {};
              \node (ro-asatr-b) at (2,-1.8) {};
              \node (ro-asat0) at (4,-1.8) {};
              \node (cntarg) at (5,-3) {};
              \draw[->] (ro-cr) -- node[above]{$m$} (ro-cr-m);
              \draw[->] (ro-cr-m) -- node[above]{$\iota$} (ro-c0);
              \draw[->] (ro-asatr-b) -- node[above]{$\iota_t$} (ro-asat0);
              \draw[->] (ro-cr) -- node[left]{$\fsat k$} (ro-asatr);
              \draw[->] (ro-cr-m) -- node[left,pos=0.25]{$\beta$} (ro-asatr-b);
              \draw[->] (ro-asatr) -- node[above,pos=0.3]{$\alpha$} (ro-asatr-b);
              \draw[->] (ro-c0) -- node[left,pos=0.25]{$\beta'$} (ro-asat0);

              \draw[->] (ro-asatr) to[bend right=20] node[below]{$\gsat k$} (cntarg.200);
              \draw[->] (ro-asatr-b) to[bend right=20] node[above]{$\gamma$} (cntarg.165);
              \draw[->] (ro-asat0) to[bend right=20] node[above]{$\gamma'$} (cntarg.120);
              \draw[->] (ro-c0) to[bend left=10] node[right,align=left,pos=0.38]{$u_1 ;$\\$ \dots ; c_n$} (cntarg.75);

              \node[condtri,dart tip angle=20,shape border
              rotate=270,rotate around={0:(ro-cr.center)},
              label={[rotate=0,anchor=south,label
                distance=1pt]above:{$\mathcal C_r$}}] at (ro-cr.north)
              {\kern3.0pt}; \node[condtri,dart tip angle=20,shape
              border rotate=270,rotate around={0:(ro-cr-m.center)},
              label={[rotate=0,anchor=south,label
                distance=1pt]above:{${\mathcal C_r}_{\downarrow m} =
                  \mathcal{H}$}}] at (ro-cr-m.north) {\kern3.0pt};
              \node[condtri,dart tip angle=20,shape border
              rotate=270,rotate around={0:(ro-c0.center)},
              label={[rotate=0,anchor=south,label
                distance=1pt]above:{$\mathcal C_0$}}] at (ro-c0.north)
              {\kern3.0pt};
            \end{scope}
          \end{tikzpicture}%

          Hence the successor of $\forall \fsat k.\Asat k$ is a child
          $\forall \fsat{k+1}.\Asat{k+1} \defeq \forall
          \beta'.\mathcal{H}_t'$ of $\mathcal C_0$, and
          $\gsat{k+1} \defeq \gamma' = (\iota_t)^{-1};\gamma$. We note
          that the two triangles in the lower right of the diagram
          above commute: first obviously $\iota_t;\gamma' =
          \iota_t;(\iota_t)^{-1};\gamma = \gamma$ and second
          $\beta';\gamma' = \beta';(\iota_t)^{-1};\gamma =
          \iota^{-1};\beta;\gamma = \iota^{-1};\iota;u_1;\dots;c_n
          = u_1;\dots;c_n$.

          If $m;\iota = c_n$, then $\fsat{k+1};\gsat{k+1} = \id$, and
          hence $\fsat{k+1}$ must be an isomorphism and we have found
          the iso $\fsat p$.  Otherwise, proceed with the remainder of
          the chain ($u_1, e_1, \dots$).
      \end{itemize}

    \item
      Summarizing the situation so far, $\fsat 0$ has been turned into an isomorphism $\fsat p$ by now:

      \begin{tikzpicture}[x=0.90cm]
        \foreach \i in {0,...,9} {
          \node (top\i) at (\i,0) {};
          \node (bot\i) at (\i,-1.75) {};
        }

        \draw[->] (top0) edge node[above]{$u_1$} (top1)
                  (top1) edge node[above]{$e_1$} (top2)
                  (top2) edge node[above]{$u_2$} (top3)
                  (top3) edge node[above]{$e_2$} (top4);
        \node at ($(top4)!0.5!(top5)$) {$\dots$};
        \draw[->] (top5) edge node[above]{$c_{2p}$} (top7)
                  (top7) edge node[above]{$c_{2p+1}$} (top8);
        \node at ($(top8)!0.5!(top9)$) {$\dots$};

        \draw[->] (bot0) edge node[below]{$\alpha_1$} (bot2);
        \draw[->] (bot2) edge node[below]{$\alpha_2$} (bot4);
        \node at ($(bot4)!0.5!(bot5)$) {$\dots$};
        \draw[->] (bot5) edge node[below]{$\alpha_p$} (bot7);

        \draw[->] (top0) -- node[right]{$\fsat0$} (bot0);
        \draw[->] (top2) -- node[right]{$\fsat1$} (bot2);
        \draw[->] (top7) -- node[right]{$\fsat p$} (bot7);

        \node[condtri,shape border rotate=90, rotate around={-10:(bot0.center)}, label={[rotate=-10,anchor=north,label distance=1pt]below:{$\Asat 0$}}] at (bot0.south) {\kern12pt};
        \node[condtri,shape border rotate=90, rotate around={-10:(bot2.center)}, label={[rotate=-10,anchor=north,label distance=1pt]below:{$\Asat 1$}}] at (bot2.south) {\kern12pt};
        \node[condtri,shape border rotate=90, rotate around={-10:(bot7.center)}, label={[rotate=-10,anchor=north,label distance=1pt]below:{$\Asat p$}}] at (bot7.south) {\kern12pt};
      \end{tikzpicture}


      To be able to pull an indirect successor of $\fsat p$ forward, it has to be an isomorphism.
      See the corresponding part of the proof of \Cref{thm:fair-branch-model} on how to ensure that all elements $\forall \fsat{p+1}.\Asat{p+1}, \allowbreak\forall \fsat{p+2}.\Asat{p+2}, \dots$ of the sequence do in fact have isomorphisms. (The argument can be applied to the shift over $m;\iota$ as well.)

    \item
      Since $\fsat p$ is an isomorphism and all its indirect successors are isomorphisms as well, there is an indirect successor that is a direct child of $\mathcal C_0$ and also an iso.
      As the segment is fair, this means that some successor of it must be pulled forward after finitely many further steps (before reaching $\mathcal C_r$ the next time).

      By the same reasoning as in \Cref{thm:fair-branch-model}, for
      some $q\ge p$, we reach a condition
      $\mathcal{C}_t = \bigland_{i\in I} \forall f_i.\mathcal{A}_i$
      that contains
      $\forall f_\ell.\mathcal{A}_\ell = \forall \fsat q.\Asat q$,
      which is an indirect successor of $\forall \fsat p.\Asat p$ and
      which is pulled forward. We assume that
      $\Asat q = \bigvee_j \exists g_j . \mathcal B_j$. Hence we obtain the
      following tuple in $P$:
      \[ \Big( [ e_{q+1},\ u_{q+2}, e_{q+2}, \dots ],
      \biglor\nolimits_j \exists g_j . \Big( \mathcal B_j \land
      \Big( \bigland_{\kern10pt\mathclap{m \ne \ell}\kern10pt} \forall f_m . \mathcal A_m \Big)_{\downarrow \fsat q ; g_j}
      \Big)
      \Big) \in P \]
  \end{enumerate}
  Here $u_{q+1} = f_\ell = d_q$.

  We apply our up-to techniques to obtain the originally desired result
  (which was to satisfy a given $\forall \fsat0.\Asat0$ by showing for a given $n$ and $\gsat0$ such that $c_1 ; \dots ; c_n = \fsat0 ; \gsat0$
  that $([ \gsat0, c_{n+1}, \dots ], \Asat0) \in u(P)$ holds).

  As before, $\Asat q$ is ``contained'' in the previously obtained path condition, and up-to conjunction removal (\Cref{thm:upto-conjunction-removal}) can be used to obtain a tuple with only $\Asat q$:
  \[ \implies
  \Big( [ e_{q+1}, u_{q+1}, e_{q+2}, \dots ],
  \biglor\nolimits_j \exists g_j . \mathcal B_j
  \Big) =
  \Big( [ e_{q+1}, u_{q+2}, e_{q+2}, \dots ], \Asat q \Big) \in u(P) \]

  $\Asat q$ has been derived from $\Asat 0$ by a series of tableau steps (which involve a shift), or $m;\iota$ step (which involves a shift and a condition isomorphism).
  Hence, for $k$ ranging from $q$ to $0$, we apply a series of steps:
  \begin{itemize}
    \item
      If $\Asat{k+1} = {\Asat{k}}_{\downarrow \alpha_{k+1}}$ (tableau step),
      apply up-to shift (\Cref{thm:compat-uptoshift}) to obtain:
      \begin{align*}
        ([ \alpha_{k+2} ; \dots ; \alpha_q ; e_{q+1}, u_{q+2}, e_{q+2}, \dots ], &\ \Asat{k+1}) \in u(P) \\
        \implies ([ \alpha_{k+1} ; \alpha_{k+2} ; \dots ; \alpha_q ;
        e_{q+1}, u_{q+2}, e_{q+2}, \dots ], &\ \Asat{k}) \in u(P)
      \end{align*}
    \item If $\Asat{k+1}$ is the result of a step based on $m;\iota$,
      we know that $\forall \fsat{k}.\Asat{k}$ is a child of
      $\mathcal{C}_r$, while $\forall \fsat{k+1}.\Asat{k+1}$ -- its
      successor -- is a child of $\mathcal{C}_0$.
      
      Now first apply up-to condition isomorphism (\Cref{thm:compat-uptoiso}),
      then apply up-to shift (\Cref{thm:compat-uptoshift}) to undo the shift:
      \begin{align*}
        ([ \alpha_{k+2} ; \dots ; \alpha_q ; e_{q+1}, u_{q+2}, e_{q+2}, \dots ], &\ \Asat{k+1}) \in u(P) \\
        \implies
        ([ \iota_t ; \alpha_{k+2} ; \dots ; \alpha_q ; e_{q+1}, u_{q+2}, e_{q+2}, \dots ], &\ \mathcal{H}_t) \in u(P) \\
        \implies
        ([ \alpha ; \iota ; \alpha_{k+1} ; \dots ; \alpha_q ; e_{q+1}, u_{q+2}, e_{q+2}, \dots ], &\ \Asat{k}) \in u(P)
      \end{align*}
      Note that ${\Asat{k}}_{\downarrow \alpha} = \mathcal{H}_t$.
    \end{itemize}

    We eventually obtain a tuple with $\Asat0$ as the condition.  Now
    we use up-to recomposition (\Cref{thm:compat-uptorecomp}) to
    rewrite the initial part of the chain (see
    \Cref{thm:fair-branch-model} for details). For the step based on
    $m;\iota$ we require commutativity of the two triangles that was
    shown previously. Finally we obtain:
  \[
    ([ \gsat0, \ c_{n+1}, \dots ], \ \Asat0) \in u(P)
    \qedhere
  \]
\end{proofparts}
\end{proof}

\begin{example}[Finding witnesses]
  \label{ex-ray-graphs}
  Consider the following condition:
  \begin{align*}
    &
      \forall\ \emptyset \to \emptyset
      .\exists \emptyset \to \fcGraph{n1}{\node[gninlinable] (n1) at (0,0) {$1$};}
      \dotAll %
        \fcGraph{n1}{\node[gninlinable] (n1) at (0,0) {$1$};}
        \to
        \fcGraph{n1}{
          \node[gninlinable] (n1) at (0,0) {$1$};
          \node[gninlinable] (n0) at (-1,0) {};
          \draw[gedge] (n0) to (n1); }
      \dotFalse
    & \text{\small(there is a node \fcGraph{n1}{\node[gninlinable] (n1) at (0,0) {$1$};} without an incoming edge}  \\
    \land &
      \forall\ \emptyset \to \fcGraph{n1}{\node[gninlinable] (n1) at (0,0) {$\vphantom1$};}
      \dotEx
        \fcGraph{n1}{\node[gninlinable] (n1) at (0,0) {$\vphantom1$};}
        \to
        \fcGraph{n1}{\node[gninlinable] (n1) at (0,0) {$\vphantom1$}; \node[gninlinable] (n2) at (1,0) {$\mathclap+$}; \draw[gedge] (n1) to (n2);}
      \dotTrue
    & \text{\kern-35pt\small and every node has an outgoing edge to some other node} \\
    \land &
      \forall\ \emptyset \to
        \fcGraph{n1}{
          \node[gninlinable] (n0) at (0,0) {};
          \node[gninlinable] (n1) at (-1,0) {$1$};
          \node[gninlinable] (n2) at (1,0) {$2$};
          \draw[gedge] (n1) to (n0);
          \draw[gedge] (n2) to (n0); }
      \dotFalse
     & \text{\small and no node has two incoming edges)}
  \end{align*}

  This condition has an infinite model, namely an infinite path
  (\fcGraph{n1}{
    \node[gninlinable] (n1) at (0.0,0) {$1$};
    \node[gninlinable] (n2) at (1,0) {$2$};
    \node[gninlinable] (n3) at (2,0) {$3$};
    \node[gninlinablefont,inner sep=0.5pt] (d) at (3.1,0) {$\cdots$};
    \draw[gedge] (n1) to (n2);
    \draw[gedge] (n2) to (n3);
    \draw[gedge] (n3) to (d);
  }). It does not have any finite
  model.

  \newcommand{\cospan}[1]{\llbracket\mkern3mu\fcGraph{n1}{#1}\mkern3mu\rrbracket}
  \tikzset{
    old/.style={draw=black!38, color=black!38, line width=0.4pt},
    new/.style={line width=0.7pt},
  }
  \newcommand{\gno}[2]{\node[gninlinable,old] (n#1) at (#2,0) {#1};}
  \newcommand{\gnn}[2]{\node[gninlinable,new] (n#1) at (#2,0) {#1};}
  \newcommand{\gnoZ}{\node[gninlinable,old] (n0) at (0,0) {};}
  \newcommand{\gnnZ}{\node[gninlinable,new] (n0) at (0,0) {};}
  \newcommand{\gnoP}[1]{\node[gninlinable,old] (nP) at (#1,0)
    {$\mathclap+$};} \newcommand{\gnnP}[1]{\node[gninlinable,new] (nP)
    at (#1,0) {$\mathclap+$};}
  \newcommand{\gnoX}[1]{\node[gninlinable,old] (nX) at (#1,0)
    {$\mathclap\times$};} \newcommand{\gnnX}[1]{\node[gninlinable,new]
    (nX) at (#1,0) {$\mathclap{\boldsymbol\times}$};}
  \newcommand{\gnnA}[1]{\node[gninlinable,new] (nA) at (#1,0) {A};}
  \newcommand{\gnnB}[1]{\node[gninlinable,new] (nB) at (#1,0) {B};}
  \newcommand{\gnnL}[1]{\node[gninlinable,new] (nL) at (#1,0) {};}
  \newcommand{\geo}[2]{\draw[gedge,old] (#1) to (#2);}
  \newcommand{\gen}[2]{\draw[gedge,new] (#1) to (#2);}
  \newcommand{\gebo}[2]{\draw[gedge,old] (#1) to[bend left=10] (#2);}
  \newcommand{\gebn}[2]{\draw[gedge,new] (#1) to[bend left=10] (#2);}
  \def\mutualedges#1#2{\draw[gedge] (#1) to[bend left=10] (#2);
    \draw[gedge] (#2) to[bend left=10] (#1);} In order to display a
  witness for this model, we need to consider the condition in the
  category of cospans $\ILC(\graphfinj)$, into which $\graphfinj$ can
  be embedded. We use the following shorthand
  notation for cospans: $\cospan{ \gno11 \gnn22 \gen{n1}{n2} }$ means
  $ \fcGraph{n1}{\node[gninlinable] (n1) at (0,0)
    {$1$};} \rightarrow \fcGraph{n1}{ \node[gninlinable] (n1) at (0,0)
    {$1$}; \node[gninlinable] (n2) at (1,0)
    {$2$}; \draw[gedge] (n1) to (n2); } \leftarrow \fcGraph{n1}{
    \node[gninlinable] (n1) at (0,0)
    {$1$}; \node[gninlinable] (n2) at (1,0)
    {$2$}; \draw[gedge] (n1) to (n2); } $, i.e., the left object
  consists of only the light-gray graph elements, the center and right
  objects consist of the full graph, the left leg is the inclusion and
  the right leg is always the identity.

  If we execute our algorithm on the condition, after one step we obtain a condition $\mathcal C_0$ that is rooted at $\fcGraph{n1}{
    \node[gninlinable] (n1) at (0,0) {$1$};
  }$, and spells out the requirements of the original condition for node \fcGraph{n1}{\node[gninlinable] (n1) at (0,0) {$1$};} and all other nodes separately (\fcGraph{n1}{\node[gninlinable] (n1) at (0,0) {$1$};} has a successor, and all other nodes have successors, and so on).

  After another step, we obtain $\mathcal C_1$, which is rooted at $\fcGraph{n1}{
    \node[gninlinable] (n1) at (0,0) {$1$};
    \node[gninlinable] (n2) at (1,0) {$2$};
    \draw[gedge] (n1) to (n2);}$,
  and does the same for node \fcGraph{n1}{\node[gninlinable] (n1) at (0,0) {$2$};} separately as well.
  (These steps are displayed more concretely in the appendix, \Cref{table-witness-steps}.)

  Now let
  $m = \fcGraph{n1}{\node[gninlinable] (n1) at (0,0)
    {$1$}; \node[gninlinable] (n2) at (1,0)
    {$2$}; \draw[gedge] (n1) to (n2);} \rightarrow
  \fcGraph{n1}{\node[gninlinable] (n1) at (0,0)
    {$1$}; \node[gninlinable] (n2) at (1,0)
    {$2$}; \draw[gedge] (n1) to (n2);} \leftarrow
  \fcGraph{n2}{\node[gninlinable] (n2) at (0,0) {$2$};} $. Then, we
  can compute $(\mathcal C_1)_{\downarrow m}$, which
  essentially ``forgets'' node \fcGraph{n1}{\node[gninlinable] (n1) at
    (0,0) {$1$};} from all subconditions of $\mathcal C_1$.
  (Subconditions that contain edges from or to this node disappear entirely, which is a
  consequence of the way borrowed context diagrams are constructed
  (via pushout complements).)
  Then, $(\mathcal C_1)_{\downarrow m}$ is similar in structure to $\mathcal C_0$, and in fact, using a renaming isomorphism $\iota =
    \fcGraph{n1}{\node[gninlinable] (n1) at (0,0) {$2$};}
    \rightarrow \fcGraph{n1}{\node[gninlinable] (n1) at (0,0) {$\vphantom1$};}
    \leftarrow \fcGraph{n1}{\node[gninlinable] (n1) at (0,0) {$1$};}
  $, it holds that $(\mathcal C_1)_{\downarrow m} \condiso \mathcal C_0$ wrt.\ $\iota$.
\end{example}

\begin{toappendix}
  \newcommand{\cospan}[1]{\llbracket\mkern3mu\fcGraph{n1}{#1}\mkern3mu\rrbracket}
  \tikzset{
    old/.style={draw=black!38, color=black!38, line width=0.4pt},
    new/.style={line width=0.7pt},
  }
  \newcommand{\gno}[2]{\node[gninlinable,old] (n#1) at (#2,0) {#1};}
  \newcommand{\gnn}[2]{\node[gninlinable,new] (n#1) at (#2,0) {#1};}
  \newcommand{\gnoZ}{\node[gninlinable,old] (n0) at (0,0) {};}
  \newcommand{\gnnZ}{\node[gninlinable,new] (n0) at (0,0) {};}
  \newcommand{\gnoP}[1]{\node[gninlinable,old] (nP) at (#1,0)
    {$\mathclap+$};} \newcommand{\gnnP}[1]{\node[gninlinable,new] (nP)
    at (#1,0) {$\mathclap+$};}
  \newcommand{\gnoX}[1]{\node[gninlinable,old] (nX) at (#1,0)
    {$\mathclap\times$};} \newcommand{\gnnX}[1]{\node[gninlinable,new]
    (nX) at (#1,0) {$\mathclap{\boldsymbol\times}$};}
  \newcommand{\gnnA}[1]{\node[gninlinable,new] (nA) at (#1,0) {A};}
  \newcommand{\gnnB}[1]{\node[gninlinable,new] (nB) at (#1,0) {B};}
  \newcommand{\gnnL}[1]{\node[gninlinable,new] (nL) at (#1,0) {};}
  \newcommand{\geo}[2]{\draw[gedge,old] (#1) to (#2);}
  \newcommand{\gen}[2]{\draw[gedge,new] (#1) to (#2);}
  \newcommand{\gebo}[2]{\draw[gedge,old] (#1) to[bend left=10] (#2);}
  \newcommand{\gebn}[2]{\draw[gedge,new] (#1) to[bend left=10] (#2);}
  \def\mutualedges#1#2{\draw[gedge] (#1) to[bend left=10] (#2);
    \draw[gedge] (#2) to[bend left=10] (#1);}

  \begin{table}[h]
  \newcommand{\dFalse}{.\condfalse}
  \newcommand{\dTrue}{.\condtrue}
  \newcommand{\AllC}[1]{\forall\cospan{#1}}
  \newcommand{\ExC}[1]{\exists\cospan{#1}}
  \vspace*{-\baselineskip}
  \[\def\arraystretch{1.0}\setlength{\arraycolsep}{3pt}\begin{array}{rl|rl|rl}
    \multicolumn{2}{l|}{\mathcal C_0}                                      &\multicolumn{2}{l|}{\mathcal C_1}                                                                              &\multicolumn{2}{l}{(\mathcal C_1)_{\downarrow m}} \\ \hline
         &\AllC{ \gnnZ \gno11 \gen{n0}{n1} }\dFalse                        &     &\AllC{ \gnnZ \gno11 \gno22 \geo{n1}{n2} \gen{n0}{n1} }\dFalse \\
         &                                                                 &\land&\AllC{\gno11\gno2{2.3} \gebo{n1}{n2} \gebn{n2}{n1} }\dFalse \\
    \land&\AllC{\gno11\gnnP2}.\big(                                        &\land&\AllC{\gno11\gno22\geo{n1}{n2}\gnnP{2.8}} .\big(                                                         &     &\AllC{\gno22\gnnP{2.8}} .\big(  \\
         &        \mkern30mu    \ExC{\gno11\gnoP2\gnnL3\gen{nP}{nL}}\dTrue &     &    \mkern30mu    \ExC{\gno11\gno22\geo{n1}{n2}\gnoP{2.8}\gnnL{3.8}\gen{nP}{nL} }\dTrue                  &     &    \mkern30mu    \ExC{\gno22\gnoP{2.8}\gnnL{3.8}\gen{nP}{nL}}\dTrue \\
         &        \mkern18mu\lor\ExC{\gno11\gnoP2\gen{nP}{n1}}\dTrue\big)  &     &    \mkern18mu\lor\ExC{\gno11\gno22\geo{n1}{n2}\gnoP{3}  \gen{nP}{n2} }\dTrue                            &     &    \mkern18mu\lor\ExC{\gno22\gnoP{3}\gen{nP}{n2}}\dTrue \big) \\
         &                                                                 &     &    \mkern18mu\lor\ExC{\gno11\gno22\geo{n1}{n2}\gnoP{0}  \gen{nP}{n1} }\dTrue \big) \\
    \land&\AllC{\gno11}.\ExC{\gno11\gnn22\gen{n1}{n2} }\dTrue              &\land&\AllC{\gno11\gno22\geo{n1}{n2}} .\big( \ExC{\gno11 \gno22 \gnn33 \geo{n1}{n2} \gen{n2}{n3}}\dTrue        &\land&\AllC{\gno22} . \ExC{\gno22 \gnn33 \gen{n2}{n3}}\dTrue \\
         &                                                                 &     &                         \mkern83mu\lor\ExC{\gno11 \gno2{2.3} \gebo{n1}{n2} \gebn{n2}{n1} }\dTrue \big) \\
    \land&\AllC{\gno11\gnnA2\gnnX3\gnnB4\gen{nA}{nX}\gen{nB}{nX}}\dFalse   &\land&\AllC{ \gno11 \gno22 \gnnA3 \gnnX4 \gnnB5 \geo{n1}{n2} \gen{nA}{nX} \gen{nB}{nX} }\dFalse                &\land&\AllC{ \gno22 \gnnA3 \gnnX4 \gnnB5 \gen{nA}{nX} \gen{nB}{nX} }\dFalse \\
    \land&\AllC{\gno11\gnnX2\gnnB3\gen{n1}{nX}\gen{nB}{nX}}\dFalse         &\land&\AllC{ \gno11 \gno22  \gnnX3 \gnnB4 \geo{n1}{n2} \gen{n2}{nX} \gen{nB}{nX} }\dFalse                      &\land&\AllC{ \gno22  \gnnX3 \gnnB4 \gen{n2}{nX} \gen{nB}{nX} }\dFalse \\
         &                                                                 &\land&\AllC{ \gno11 \gno22  \gnnB3 \geo{n1}{n2} \gen{nB}{n2} }\dFalse                                          &\land&\AllC{ \gno22  \gnnB3 \gen{nB}{n2} }\dFalse \\
    \land&\ldots                                                           &\land&\ldots                                                                                                   &\land&\ldots
  \end{array}\]
  \caption{Steps for the condition of \Cref{ex-ray-graphs}, showing that a repeating infinite model exists.}%
  \label{table-witness-steps}%
  \vspace*{-\baselineskip}
  \end{table}
\end{toappendix}

In general this witness construction will almost never be applicable
for simple graph categories, we need to work in other categories, such
as cospan categories.

\section{Satisfiability in the General Case}
\label{sec:general-case}

\Cref{sec:satisfiability} heavily depends on the fact that all
sections are isos, i.e., only isos have a right inverse. However, in
the general case we might have conditions of the form
$\forall s.\mathcal{A}$ where $s$ has a right inverse $r$ with
$s;r = \id$ (note that $r$ need not be unique). This would invalidate
our reasoning in the previous sections, since the identity is not
necessarily a model of $\forall s.\mathcal{A}$.


\begin{example}\label{ex-section-motivation}
  We work in the category $\graphf$. Consider the following condition
  $\mathcal A = \forall\ %
  \fGraph{n1}{\node[gninlinable] (n1) at (0,0)
    {$1$};} \to \fGraph{n1}{ \node[gninlinable] (n1) at (0,0)
    {$1$}; \node[gninlinable] (n2) at (0.5,0)
    {$2$}; } \dotEx \fGraph{n1}{ \node[gninlinable] (n1) at (0,0)
    {$1$}; \node[gninlinable] (n2) at (0.5,0)
    {$2$}; } \to \fGraph{n1}{ \node[gninlinable] (n1) at (0,0)
    {$1$}; \node[gninlinable] (n2) at (0.75,0)
    {$2$}; \draw[gedge] (n1) to (n2); } \ . \condtrue $, defined over
  a single node
  $\fcGraph{n1}{\node[gninlinable] (n1) at (0,0) {$1$};}$ as root
  object, which states that the distinguished node has an edge to
  every other node -- including itself, since a non-injective match
  may merge the two nodes. The first morphism of $\mathcal{A}$ is a
  section, while the second is injective, but not a section.

  The identity on the single node is not a model of $\mathcal{A}$, but
  $ \fGraph{n1}{\node[gninlinable] (n1) at (0,0)
    {$1$};} \to \fGraph{n1}{ \node[gninlinable] (n1) at (0,0)
    {$1$}; \draw[gedge] (n1) to[loop right] (); } $ is. However, the
  condition $ \mathcal{A} \land \forall\ %
  \fGraph{n1}{\node[gninlinable] (n1) at (0,0)
    {$1$};} \to \fGraph{n1}{ \node[gninlinable] (n1) at (0,0)
    {$1$}; \draw[gedge] (n1) to[loop right] (); } \ . \condfalse $ is
  unsatisfiable, a fact that would not be detected by
  \Cref{alg:sat-check}, since neither of the universal quantifiers
  contains an iso.
\end{example}


\begin{toappendix}
  We now show some straightforward facts on sections in the context of
  conditions.

  \begin{lemma}
    \label{lem:all-section-implies-exists-section}
    Let $s \colon A \to B$ be a section, where $r \colon B \to A$ is a
    right inverse of $s$, i.e., $s ; r = \id_A$.  Then
    $\forall s.\mathcal A \models \mathcal A_{\downarrow r} \models
    \exists s.\mathcal A$.  However, in general,
    $\exists s.\mathcal A \notmodels \mathcal A_{\downarrow r}
    \notmodels \forall s.\mathcal A$.
  \end{lemma}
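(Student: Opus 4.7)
The plan is to exploit the equation $s;r = \id_A$ to produce a canonical factorization of the first arrow of any composable sequence. For $\bar a = [a_1, a_2, \tdots]$ with $\dom(\bar a) = A$, the computation $a_1 = (s;r);a_1 = s;(r;a_1)$ shows that the pair $(n, g) = (1,\, r;a_1)$ always satisfies the premise $a_1 = s;g$ appearing in the definitions of both $\bar a \models \forall s.\mathcal A$ and $\bar a \models \exists s.\mathcal A$. The only other ingredient is the sequence-level shift property $[d_1, d_2, \tdots] \models \mathcal B_{\downarrow c} \iff [c;d_1, d_2, \tdots] \models \mathcal B$, which generalizes the single-arrow version noted after \Cref{def:shift} and follows by a straightforward induction on $\mathcal B$, matching factorizations of prefixes through the representative squares used in \Cref{def:shift}.

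For $\forall s.\mathcal A \models \mathcal A_{\downarrow r}$, assume $\bar a \models \forall s.\mathcal A$ and instantiate the universal at $(n, g) = (1,\, r;a_1)$ to obtain $[r;a_1,\, a_2, \tdots] \models \mathcal A$; the shift property then gives $\bar a \models \mathcal A_{\downarrow r}$. For $\mathcal A_{\downarrow r} \models \exists s.\mathcal A$, assume $\bar a \models \mathcal A_{\downarrow r}$; the shift property yields $[r;a_1,\, a_2, \tdots] \models \mathcal A$, and the same pair $(1,\, r;a_1)$ now serves as an existential witness.

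For the two non-implications, concrete counterexamples will be given in $\graphf$ with $A$ a single node $\{1\}$, $B = \{1,2\}$, $s$ the inclusion, and $r$ the unique retraction collapsing both nodes of $B$ to the node of $A$. The common observation is that $g = r;a_1$ is just one among many factorizations of $a_1$ through $s$, and $\mathcal A$ may distinguish them. For $\mathcal A_{\downarrow r} \not\models \forall s.\mathcal A$, let $\mathcal A$ express ``the images of the two nodes of $B$ coincide'' (via an existential along the merging quotient $B \to \{*\}$), and take $a_1\colon \{1\} \to \{1, 2'\}$ with $a_1(1) = 1$: then $r;a_1$ sends both nodes of $B$ to the same node, so $\mathcal A_{\downarrow r}$ holds, whereas the factoring $g$ with $g(2) = 2'$ falsifies $\mathcal A$, so $\forall s.\mathcal A$ fails. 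Dually, for $\exists s.\mathcal A \not\models \mathcal A_{\downarrow r}$, let $\mathcal A$ require a self-loop on the image of node $2$ of $B$, and let $a_1$ embed $\{1\}$ into a graph in which $a_1(1)$ has no loop but a disjoint second node does: some factoring $g$ sends node $2$ to the looped node, witnessing $\exists s.\mathcal A$, whereas $r;a_1$ forces node $2$ back to $a_1(1)$ and fails $\mathcal A$. The argument is routine; the only mild point of care is the sequence-level shift property mentioned above, which I will verify explicitly.
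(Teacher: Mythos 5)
Your proposal is correct and takes essentially the same route as the paper's proof: the two implications are obtained by instantiating/witnessing the quantifier with the canonical factorization through $r$ together with the shift equivalence (the paper picks $g = r;c_1;\tdots;c_n$ for general $n$ and then recomposes, where you take $n=1$ directly, which is a harmless simplification), and the non-implications use the same $\graphf$ counterexample setup with the one-node/two-node section and retraction. The only cosmetic difference is that your distinguishing conditions are phrased as positive existentials (loop present / nodes merged) where the paper uses $\forall f.\condfalse$; both choices separate the factorization $r;a_1$ from the other factorizations through $s$ in the same way.
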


\begin{proof}
  Note that
  $\RO(\forall s.\mathcal A) = \RO(\exists s.\mathcal A) =
  \RO(\mathcal A_{\downarrow r}) = A$ and $\RO(\mathcal A) = B$.
  In the following let $\bar{c} = [c_1,c_2,\dots]$ be a composable
  sequence of arrows.

\begin{proofparts}
  \proofPartNoNewline{$\forall s . \mathcal A \models \mathcal
    A_{\downarrow r}$}
  \begin{align*}
    & \bar{c}\models \forall s.\mathcal{A}
    \\
    \text{(Def.\ \ref{def:satisfaction} satisfaction)} \iff& \forall
    n.\forall g.( c_1;\dots;c_n=s;g \implies [g,c_{n+1},\dots] \models
    \mathcal A )
    \\
    \text{(choose $g = r ; c_1 ; \dots c_n$)} \implies & \forall n.
    (c_1;\dots;c_n=s;r;c_1;\dots;c_n \\&\qquad \implies
    [r;c_1;\dots;c_n, c_{n+1}, \dots] \models \mathcal A)
    \\
    \text{(antecedent is true)} \implies& \forall n. [r;c_1;\dots;c_n,
    c_{n+1}, \dots] \models \mathcal A
    \\
    \text{(Def.\ \ref{prop:shift} shift)} \iff&
    [c_1;\dots;c_n,c_{n+1},\dots] \models \mathcal
    A_{\downarrow r} \\
    \text{(recompose)} \iff& \bar{c} \models \mathcal A_{\downarrow r}
  \end{align*}
  Note that the last step is true since $\uptoRecomp(\models)$ equals
  $\models$ (see \Cref{sec:coinductive}) and hence satisfaction is
  preserved by recomposition.
  \proofPartNoNewline{$\mathcal A_{\downarrow r} \models \exists s
    . \mathcal A$}
  \begin{align*}
    & \bar{c} \models \mathcal A_{\downarrow r}
    \\
    \text{(Prop.\ \ref{prop:shift} shift)} \iff& [r;c_1,c_2,\dots]
    \models \mathcal A
    \\
    \text{(ex. intro. using $g = r;c_1$)} \implies& \exists g.( c_1 =
    s;g \land [g,c_2,\dots] \models \mathcal A )
    \\
    \text{(Def.\ \ref{def:satisfaction} satisfaction)} \iff & \bar{c}
    \models \exists s . \mathcal A
  \end{align*}
  \proofPart{$\exists s . \mathcal A \notmodels \mathcal A_{\downarrow
      r} \notmodels \forall s . \mathcal A$} We show this via a
  counterexample using the category $\graphf$ of finite graphs and
  morphisms. Representative squares are given by pushouts.
    
  Here we consider only finite composable sequences, given by a single
  morphism. Let the following graph morphisms and conditions be given,
  where the mappings of the morphisms are indicated in green:

  \def\oneNodeLeft{ \node[gninlinable] (l) at (0,0) {}; }
  \def\twoNodesLeft{ \node[gninlinable] (l1) at (0,0.25) {};
    \node[gninlinable] (l2) at (0,-0.25) {}; } \def\oneNodeRight{
    \node[gninlinable] (r) at (1,0) {}; }
  \def\twoNodesRight{\node[gninlinable] (r1) at (1,0.25) {};
    \node[gninlinable] (r2) at (1,-0.25) {}; } \def\graphLabel#1{
    \node[anchor=mid east] at (-0.2,0) {$#1 = {}$}; }
  \def\toprightLoopAt#1{\draw[gedge] (#1)
    to[loop,out=40,in=-20,distance=4mm] (#1); }
  \def\smoltoprightLoopAt#1{\draw[gedge] (#1)
    to[loop,out=50,in=-10,distance=3mm] (#1); }
  \def\smolbotrightLoopAt#1{\draw[gedge] (#1)
    to[loop,out=-70,in=-20,distance=3mm] (#1); }

    \begin{tikzpicture}
      \pgfmathsetmacro\tabygrid{1.5}

      \node at (-1.55,0)
      {}; 

      \begin{scope}[shift={(0,0)}]
        \begin{scope}[shift={(0,0)}]
          \graphLabel{c_1} \oneNodeLeft \twoNodesRight
          \toprightLoopAt{r1} \draw[graphmorcol] (l) to (r1);
        \end{scope}
        \begin{scope}[shift={(0,-1*\tabygrid)}]
          \graphLabel{c_2} \oneNodeLeft \twoNodesRight
          \toprightLoopAt{r1} \draw[graphmorcol] (l) to (r2);
        \end{scope}
      \end{scope}

      \begin{scope}[shift={(3.6,0)}]
        \begin{scope}[shift={(0,0)}]
          \graphLabel{s} \oneNodeLeft \twoNodesRight
          \draw[graphmorcol] (l) to (r1);
        \end{scope}
        \begin{scope}[shift={(0,-1*\tabygrid)}]
          \graphLabel{r} \twoNodesLeft \oneNodeRight
          \draw[graphmorcol] (l1) to (r); \draw[graphmorcol] (l2) to
          (r);
        \end{scope}
      \end{scope}

      \begin{scope}[shift={(3.6+3.2,0)}]
        \begin{scope}[shift={(0,0)}]
          \graphLabel{f} \twoNodesLeft \twoNodesRight
          \toprightLoopAt{r2} \draw[graphmorcol] (l1) to (r1);
          \draw[graphmorcol] (l2) to (r2);
        \end{scope}
        \begin{scope}[shift={(0,-1*\tabygrid)}]
          \graphLabel{f'} \oneNodeLeft \oneNodeRight
          \toprightLoopAt{r} \draw[graphmorcol] (l) to (r);
        \end{scope}
      \end{scope}

      \begin{scope}[shift={(3.6+3.2+3.5+0.3,0)}]
        \node[anchor=mid east] at (0, 0*\tabygrid)
        {$\mathcal A = {}\mathrlap{\forall f.\condfalse}$};
        \node[anchor=mid east] at (0,-1*\tabygrid)
        {$\mathcal A_{\downarrow r} = {}\mathrlap{\forall
            f'.\condfalse}$};
      \end{scope}
    \end{tikzpicture}

    \begin{itemize}
    \item
      $c_1 \models \exists s.\mathcal A$ because $\exists q_1 \colon c_1 = s;q_1$ and $q_1 \models \forall f.\condfalse$ \\
      \begin{tikzpicture}
        \begin{scope}[every node/.append style=gninlinable]
          \node (al) at (0,0) {}; \node (am1) at (1,-0.4) {}; \node
          (am2) at (1,-0.9) {}; \node (ar1) at (2,0) {}; \node (ar2)
          at (2,-0.5) {};

          \node (bl1) at (4,0) {}; \node (bl2) at (4,-0.5) {}; \node
          (bm1) at (5,-0.4) {}; \node (bm2) at (5,-0.9) {}; \node
          (br1) at (6,0) {}; \node (br2) at (6,-0.5) {};
        \end{scope}
        \smoltoprightLoopAt{ar1} \smolbotrightLoopAt{bm2}
        \smoltoprightLoopAt{br1}

          \begin{scope}[graphmorcol]
            \draw (al) to (am1); \draw (am1) to (ar1); \draw (am2) to
            (ar2); \draw (bl1) to (bm1); \draw (bl2) to (bm2); \draw
            (al) to[bend left=15] (ar1); \draw (bl1) to[bend left=15]
            (br1); \draw (bl2) to[bend left=35] (br2);
          \end{scope}
          \begin{scope}[color=graphmorphismgreen,font=\scriptsize]
            \node at (1,0.3) {$c_1$}; \node at (0.5,-0.5) {$s$}; \node
            at (1.5,-0.5) {$q_1$}; \node at (5,0.3) {$q_1$}; \node at
            (4.5,-0.5) {$f$};
          \end{scope}
          \draw[graphmorcolfail] (5.2,-0.65) to (5.8,-0.25);
        \end{tikzpicture}
      \item
        $c_1 \notmodels \mathcal A_{\downarrow r} = \forall
        f'.\condfalse$
        because $\exists q_2 \colon c_1 = f';q_2$ but $q_2 \notmodels \condfalse$ \\
        \begin{tikzpicture}
          \begin{scope}[every node/.append style=gninlinable]
            \node (al) at (0,0) {}; \node (am) at (1,-0.65) {}; \node
            (ar1) at (2,0) {}; \node (ar2) at (2,-0.5) {};
          \end{scope}
          \smolbotrightLoopAt{am} \smolbotrightLoopAt{ar1}

          \begin{scope}[graphmorcol]
            \draw (al) to (am); \draw (am) to (ar1); \draw (al)
            to[bend left=15] (ar1);
          \end{scope}
          \begin{scope}[color=graphmorphismgreen,font=\scriptsize]
            \node at (1,0.3) {$c_1$}; \node at (0.4,-0.5) {$f'$};
            \node at (1.6,-0.5) {$q_2$};
          \end{scope}
        \end{tikzpicture}
      \item
        $c_2 \models \mathcal A_{\downarrow r}$ because $\nexists q_3 \colon c_2 = f';q_3$ \\
        \begin{tikzpicture}
          \begin{scope}[every node/.append style=gninlinable]
            \node (al) at (0,0) {}; \node (am) at (1,-0.65) {}; \node
            (ar1) at (2,0) {}; \node (ar2) at (2,-0.5) {};
          \end{scope}
          \smolbotrightLoopAt{am} \smolbotrightLoopAt{ar1}

          \begin{scope}[graphmorcol]
            \draw (al) to (am); \draw (al) to[bend left=15] (ar2);
          \end{scope}
          \begin{scope}[color=graphmorphismgreen,font=\scriptsize]
            \node at (1,0.1) {$c_2$}; \node at (0.4,-0.5) {$f'$};
            \node[color=graphmorphismfailred] at (1.8,-0.8) {$q_3$};
          \end{scope}
          \draw[graphmorcolfail] (am) to (ar1); \draw[graphmorcolfail]
          (am) to (ar2);
        \end{tikzpicture}
      \item
        $c_2 \notmodels \forall s.\mathcal A$ because $\exists q_4 \colon c_2 = s;q_4$ but $q_4 \notmodels \forall f.\condfalse$ \\
        \begin{tikzpicture}
          \begin{scope}[every node/.append style=gninlinable]
            \node (al) at (0,0) {}; \node (am1) at (1,-0.4) {}; \node
            (am2) at (1,-0.9) {}; \node (ar1) at (2,0) {}; \node (ar2)
            at (2,-0.5) {};

            \node (bl1) at (4,0) {}; \node (bl2) at (4,-0.5) {}; \node
            (bm1) at (5,-0.4) {}; \node (bm2) at (5,-0.9) {}; \node
            (br1) at (6,0) {}; \node (br2) at (6,-0.5) {};
          \end{scope}
          \smoltoprightLoopAt{ar1} \smolbotrightLoopAt{bm2}
          \smoltoprightLoopAt{br1}

          \begin{scope}[graphmorcol]
            \draw (al) to (am1); \draw (am1) to (ar2); \draw (am2)
            to[out=0,in=-50,looseness=2] (ar1); \draw (bl1) to (bm1);
            \draw (bl2) to (bm2); \draw (al) to[out=0,in=130] (ar2);
            \draw (bl1) to[out=0,in=130] (br2); \draw (bl2) to[bend
            left=20] (br1); \draw (bm1) to (br2); \draw (bm2)
            to[out=0,in=-50,looseness=2] (br1);
          \end{scope}
          \begin{scope}[color=graphmorphismgreen,font=\scriptsize]
            \node at (1,0.2) {$c_2$}; \node at (0.5,-0.5) {$s$}; \node
            at (1.5,-0.75) {$q_4$}; \node at (5,0.2) {$q_4$}; \node at
            (4.5,-0.47) {$f$};
          \end{scope}
        \end{tikzpicture}
        \qedhere
      \end{itemize}
    \end{proofparts}
  \end{proof}
\end{toappendix}

Pulling forward isos is still sound even in the general case as the equivalence of \Cref{lem:pull-forward-isos} still holds, but it is not sufficient for completeness.
Hence we will now adapt the tableau calculus to deal with sections.

\begin{lemmarep}[Pulling forward sections]\label{pull-forward-sections}
  Let
  \raisebox{0pt}[\height][\depth+2pt]{$\bigland_{i \in I} \forall
    f_i.\mathcal A_i$} be a universal condition and assume that $f_p$,
  $p\in I$, is a section, and $r_p$ is a right inverse of $f_p$ (i.e.,
  $f_p ; r_p = \id$). Furthermore let
  \raisebox{0pt}[\height][\depth+2pt]{${\mathcal A_p}_{\downarrow r_p} = \biglor_{j \in J} \exists
  h_j.\mathcal H_j$} be the result of shifting the $p$-th child over
  the right inverse. Then $f_p$ can be \emph{pulled forward}:
  \[
    \bigland_{i \in I} \forall f_i.\mathcal A_i
    \:\equiv\:
      \biglor_{j \in J} \exists h_j.\smash{\Bigg(}
        \mathcal H_j \land
        \Big( \bigland_{i \in I} \forall f_i.\mathcal A_i \Big)_{\downarrow h_j}
      \smash{\Bigg)}
  \]
\end{lemmarep}

\begin{proof}

  \def\landIinI{\bigland_{i \in I}}
  \def\landMneP{\bigland_{m \in I \setminus \{p\}}}
  \def\allFiAi{\forall f_i.\mathcal A_i}
  \def\allFmAm{\forall f_m.\mathcal A_m}
  \def\lorJinJ{\biglor_{j \in J}}
  \begin{align*}
    \landIinI\allFiAi
      &=
    \landMneP \allFmAm \land \forall f_p.\mathcal A_p
      \\ (*) &\equiv
    \landIinI\allFiAi \land {\mathcal A_p}_{\downarrow r_p}
      \\&=
    \landIinI\allFiAi \land \lorJinJ \exists h_j.\mathcal H_j
      \\ \text{(\Cref{lem:exists-and-shift} and distributivity)}&\equiv
    \lorJinJ \exists h_j.\Big(
      \mathcal H_j \land
      \landIinI (\allFiAi)_{\downarrow h_j}
    \Big)
  \end{align*}

  $(*)$: from \Cref{lem:all-section-implies-exists-section} we know
  that
  $\forall f_p.\mathcal A_p \models {\mathcal A_p}_{\downarrow
    r_p}$. Hence, since $\mathcal{C}\models \mathcal{D}$ implies
  $\mathcal{C} \equiv \mathcal{C}\land \mathcal{D}$, we have that 
  $\forall f_p.\mathcal A_p \equiv \forall f_p.\mathcal A_p \land
  {\mathcal A_p}_{\downarrow r_p}$.
\end{proof}

As for the analogous \Cref{lem:pull-forward-isos}, pulling forward
sections produces an equivalent condition.  However, here the child
being pulled forward is still included in the children shifted by
$h_j$. Hence the condition will increase in size, unlike for the
special case.  This is necessary, since $f_p$ might have other
inverses which can be used in pulling forward and might lead to new
results. This leads to the following adapted rules.

\begin{definition}[SatCheck rules, general case]
  \label{satcheck.new.general}
  Let $\mathcal{A}$ be an alternating condition. We can construct a
  tableau for $\mathcal A$ by extending it at its leaf nodes
  as follows:
  \[\begin{array}{ll|ll}
    \text{For every $p \in I$:}
    &&&
    \parbox[t]{0.54\textwidth}{\raggedright
      For \emph{one} $p \in I$ such that $f_p$ is section, $f_p ; r_p = \id$, and
      $\biglor\nolimits_{j \in J} \exists h_j . \mathcal H_j = {\mathcal A_p}_{\downarrow r_p}$:
    }
    \\
    \displaystyle\biglor_{i \in I} \exists f_i . \mathcal A_i
    \:\xrightarrow{f_p}\:
    \mathcal A_p
    &&&
    \displaystyle\bigland_{i \in I} \forall f_i . \mathcal A_i
    \:\xrightarrow{\id}\:
    \biglor_{j \in J} \exists h_j . \bigg(
      \mathcal H_j
      \land
      \Big( \bigland_{i \in I} \forall f_i . \mathcal A_i \Big)_{\downarrow h_j}
    \bigg)
  \end{array}\]
  \begin{itemize}
  \item For existential conditions, for \emph{each}(!) child condition
    $\exists f_p.\mathcal A_p$, add a new descendant.
  \item For universal conditions, pick \emph{one}(!) child condition
    $\forall f_p.\mathcal A_p$ that can be pulled forward in the sense
    of \Cref{pull-forward-sections} and add the result as its (only)
    descendant.
  \end{itemize}
\end{definition}

\noindent
The rules are similar to those of the specialized case
(\Cref{satcheck.new}) and as in the special case, we need to define a
successor relation on children with sections, that in
addition tracks the corresponding right-inverses. The
definition of the successor relation is slightly more complex than in
the previous case (\Cref{lts-asai-successor-relation}).

\begin{definition}[Successor relation and tracking right-inverses]
  \label{lts-section-successor-relation}
  We define a relation on pairs of $(f_i,r_i)$, where $f_i$ is a
  morphism of a child of a universal condition and $r_i$ is one of its
  right-inverses.  Assume that in the construction of a tableau we
  have a path
  \[ \bigland_{i\in I} \forall f_i.\mathcal A_i \:\xrightarrow{\id}\:
    \mathcal{C} \:\xrightarrow{h_j}\: \mathcal H_j \land
    \Big( \bigland_{i\in I} \forall f_i.\mathcal A_i \Big)_{\downarrow h_j} =
    \mathcal H_j \land \bigland_{i\in I} \bigland_{(\alpha,\beta) \in
      \kappa(f_i,h_j )} \forall \beta.({\mathcal A_i})_{\downarrow
      \alpha} \]
  \floatingpicspaceright{3.2cm}
  \begin{floatingpicright}{2.9cm}%
    \hfill\begin{tikzpicture}[x=0.9cm,y=0.9cm]
      \node (a) at (0,0) {};
      \node (b) at (1.8,0) {};
      \node (c) at (0,-1.7) {};
      \node (d) at (1.4,-1.3) {};
      \node (ds) at (2.2,-2.0) {};
      \draw[->] (a) -- node[above]{$\smash{h_j}$} (b);
      \draw[->] (a) -- node[left]{$f_i$} (c);

      \draw[->,fancydotted] (b) -- node[left,pos=0.3]{$\beta$} (d);
      \draw[->,fancydotted] (c) -- node[above,pos=0.35]{$\alpha$} (d);

      \draw[->,fancydotted] (d) -- (ds);
      \node at ($(d.center)+(12pt,-4pt)$) {$r_\beta$};
      \draw[->] (b) -- node[right]{$\id$} (ds);
      \draw[->] (c) -- node[below,overlay]{$r_i;h_j$} (ds);
    \end{tikzpicture}%
  \end{floatingpicright}
  where $\mathcal{C}$ is the existential condition given in
  \Cref{pull-forward-sections}. Given $(f_i,r_i)$ (where
  $f_i;r_i = \id$), we can conclude that the outer square on the right
  commutes and hence there exists an inner representative square
  \mbox{$(\alpha,\beta) \in \kappa(f_i,h_j )$} and $r_\beta$ such that
  the diagram to the right commutes (in particular $\beta$ is a
  section and $r_\beta$ is a retraction). In this situation, we say
  that $(\beta,r_\beta)$ is a \emph{(retraction) successor} of
  $(f_i,r_i)$.%



\end{definition}

We extend the definition of fairness to cover sections (not just isomorphisms) and also require that all right-inverses of each section are eventually used in a pull-forward step:

\begin{definition}[Fairness in the general case]
  A branch of a tableau is \emph{fair} if for each universal condition
  $\mathcal A$ on the branch, each child condition
  $\forall f_i.\mathcal A_i$ of $\mathcal A$ where $f_i$ is a section,
  and each right-inverse $r_i$ of $f_i$, there is $n \in \natzero$
  such that in the $n$-th next step, for some indirect successor
  $(f'_i,r'_i)$ of $(f_i,r_i)$ it holds that $f'_i$ is pulled forward
  using the right inverse $r'_i$.  (Every universally-quantified
  section is eventually pulled forward with every inverse.)
\end{definition}

For this definition to be effective, we need to require that every
section has only finitely many right-inverses (this is true for e.g.~$\graphf$).
Given that property, 
one way to implement a fairness strategy is to use a queue, to which
child conditions (and the corresponding right-inverses) are added.
This queue has to be arranged in such a way that for each
section/retraction pair a successor is processed eventually.

We now show how to adapt the corresponding results of the previous
section (\Cref{thm:fair-branch-model,thm-algo-sound,thm-algo-complete})
and in particular show that infinite and fair
branches are always models, from which we can infer soundness and
completeness.

\begin{theoremrep}[Fair branches are models (general case)]
  \label{thm:fair-branch-model-general}
  Let $\mathcal A_0$ be an alternating condition.  Let a fixed tableau
  constructed by the rules of \Cref{satcheck.new.general} be given.
  Let
  $\mathcal A_0 \xrightarrow{b_1} \mathcal A_1 \xrightarrow{b_2}
  \mathcal A_2 \xrightarrow{b_3} \dots$ be a branch of the tableau
  that is either unextendable and ends with a universal
  quantification, or is infinite and fair.  For such a branch, we
  define:
  $ P = \{ (\bar b, \mathcal A_i) \mid i \in \natzero,\allowbreak \bar c = [
    b_{i+1}, b_{i+2}, \tdots ] \} \subseteq \Seq \times \Cond$.
  Then, $P \subseteq s(u(P))$.  (Every open and unextendable
  or infinite and fair branch in a tableau of
  \Cref{satcheck.new.general} corresponds to a model.)
\end{theoremrep}

\begin{proof}

The general structure of the proof is similar to the proof of \Cref{thm:fair-branch-model}.
For detailed explanations, we refer to the proof of the special case,
and only highlight the changes that are necessary for the general
case. Note that once a condition contains a section (on the top
level), the branch will always be infinite, since sections are never removed.

Let $\mathcal{A} =  \mathcal{C}_0 \xrightarrow{c_1} \mathcal{C}_1 \xrightarrow{c_2} \mathcal{C}_2 \xrightarrow{c_3} \dots$
be a (suffix of a) branch of a tableau that results in a given
$([ c_1, c_2, c_3, \dots ], \mathcal A) \in P$. We show that $([ c_1, c_2, c_3, \dots ], \mathcal A) \in s(u(P))$.
\begin{proofparts}
  \proofPart{$\mathcal A$ is existential}
  Exactly as in the special case.

  \proofPart{$\mathcal A$ is universal and contains no sections}
  Assume that $\mathcal{A} = \bigland_{i\in I} \forall
  f_i.\mathcal{A}_i$. The argument is similar to the special case:
  If $\mathcal A$ contains no sections, the tableau has no child nodes for $\mathcal A$ and the path ends, i.e., we have $([\,], \mathcal A)$ ($[\,]$ representing $\id$),
  which is in $s(u(P))$ because there is no arrow $g$ that makes $\id = f_i ; g$ true (if there were such an arrow, then $f_i$ would be a section).

  \proofPart{$\mathcal A$ is universal and has at least one section}
  We now assume that at least one $f_i$ is a section.

  For $([ c_1, c_2, \dots ], \bigland_{i \in I} \forall f_i . \mathcal A_i) \in s(u(P))$ to hold,
  by definition of $s$, for all children $\forall f_i.\mathcal A_i$, all arrows~$g$ and all $n \in \natzero$,
  we need to show: if $c_1 ; \dots ; c_n = f_i ; g$, then $([ g, c_{n+1}, \dots ], \mathcal A_i) \in u(P)$.

  Hence let a particular child
  $\forall \fsat0.\Asat0 = \forall f_i.\mathcal{A}_i$ to be satisfied,
  some $n$, and some $\gsat0$ such that
  $c_1 ; \dots ; c_n = \fsat0 ; \gsat0$, be given, for which we
  need to show that
  \mbox{$([ \gsat0, c_{n+1}, \dots ], \Asat0) \in u(P)$}.

  Since at least one child is a section, in the next step on the
  branch, some $\forall \fpf0.\Apf0$ ($\fpf0$ section) is pulled
  forward, so the next condition on the path has the shape
  $\biglor_j \exists h_j.(\mathcal H_j \land \dots)$.

  Since the condition is alternating and the tableau rules preserve
  this property, the sequence of $c_1, c_2, \ldots$ on the path
  alternates between $\id$ labels (from a universal condition) and
  $e_i$ (from an existential condition).
  In particular $c_{2i} = e_i$.
  Without loss of generality we can assume that $n$ is even and
  $c_n = e_{\sfrac{n}{2}}$. (Otherwise $c_n = \id$ and we know that
  $c_1 ; \dots ; c_{n-1} = \fsat0 ; \gsat0$, where $c_{n-1}$ a
  morphism corresponding to the existential step. Then we can infer
  $([\gsat0, c_n, c_{n+1}, \dots ], \Asat0) \in u(P)$ and finally
  $([\gsat0, c_{n+1}, \dots ], \Asat0) \in u(P)$ via up-to
  recomposition.) We set $m = \sfrac{n}{2}$.


  We now claim that there exists a finite sequence
  $\forall \fsat0.\Asat0, \forall \fsat1.\Asat1, \dots,
  \allowbreak\forall \fsat m.\Asat m, \dots, \allowbreak\forall \fsat
  q.\Asat q$ of children of the universal conditions
  $\mathcal C_0, \mathcal C_2, \dots$ on the branch (i.e., every
  second one), with each element of the sequence being a (direct)
  successor of the previous element, such that:

  \begin{enumerate}
    \item
      After $m$ steps,
      $\fsat m$ is a section, having a right-inverse $\frisat m$,
    \item for $k>m$, $\fsat k$ is a section as well, with a
      right-inverse $\frisat k$ such that $(d_k,r_k)$ is a retraction successor of
      $(d_m,r_m)$.
    \item after $q$ steps, $m \leq q < \infty$, $\fsat q$ is pulled
      forward with $\frisat q$ (where $(\fsat q,\frisat q)$ is a
      retraction successor of $(\fsat m,\frisat m)$), resulting in
      $([ e_{q+1}, \dots ], {\Asat q}_{\downarrow \frisat q}) \in u(P)$.
  \end{enumerate}
  Afterwards, we can transform that to $([ \gsat0, c_{n+1}, \dots ], \Asat0) \in u(P)$ (as required by the satisfaction function $s$) by applying up-to techniques,
  thereby showing that the path actually describes a model.

  Note that $n=0$ or $n=q$ are possible as well.
  The initial steps are depicted in the following diagram, with the chain $(\id,)e_1,(\id,)e_2,\dots$, $n$, $\fsat0,\Asat0$ and $\gsat0$ given
  ($e_1, e_2, \dots$ are labels of existential steps and hence
  correspond to the morphisms $h_j$ of the tableau construction rules):

  \begin{tikzpicture}[x=0.90cm]
    \foreach \i in {0,...,10} {
      \node (top\i) at (\i,0) {};
      \node (bot\i) at (\i,-1.75) {};
    }

    \draw[->] (top0) edge node[above,align=center]{$e_1$} (top2)
              (top2) edge node[above,align=center]{$e_2$} (top4);
    \node at ($(top4)!0.5!(top5)$) {$\dots$};
    \draw[->] (top5) edge node[above]{$e_m$} (top7)
              (top7) edge node[above]{$e_{m+1}$} (top9);
    \node at ($(top9)!0.5!(top10)$) {$\dots$};

    \draw[->] (bot0) edge node[below]{$\alpha_1$} (bot2);
    \draw[->] (bot2) edge node[below]{$\alpha_2$} (bot4);
    \node at ($(bot4)!0.5!(bot5)$) {$\dots$};

    \draw[->] (top0) -- node[left]{$\fsat0$} (bot0);
    \draw[->] (top2) -- node[left]{$\fsat1$} (bot2);
    \draw[->] (top4) -- node[left]{$\fsat2$} (bot4);

    \draw[->,rounded corners=4pt] (bot0) -- +(1.0,-1.3) -- node[below,pos=0.7828]{$\gsat0$} +(5.9,-1.3) -- (top7.-60);
    \draw[->,rounded corners=4pt] (bot2) -- +(1.0,-1.2) -- node[above,pos=0.7]{$\gsat1$} +(3.8,-1.2) -- (top7.-120);

    \node[condtri,dart tip angle=30,shape border rotate=270,rotate around={10:(top0.center)}, label={[rotate=10,anchor=south,label distance=1pt]above:{$\mathcal C_0$}}] at (top0.north) {\kern5pt};
    \node[condtri,dart tip angle=20,shape border rotate=270,rotate around={0:(top2.center)}, label={[rotate=0,anchor=south,label distance=1pt]above:{$\mathcal C_2$}}] at (top2.north) {\kern3.0pt};
    \node[condtri,dart tip angle=20,shape border rotate=270,rotate around={-5:(top4.center)}, label={[rotate=-5,anchor=south,label distance=1pt]above:{$\mathcal C_4$}}] at (top4.north) {\kern3.0pt};
    \node[condtri,shape border rotate=90, rotate around={-10:(bot0.center)}, label={[rotate=-10,anchor=north,label distance=1pt]below:{$\Asat 0$}}] at (bot0.south) {\kern7pt};
    \node[condtri,shape border rotate=90, rotate around={-10:(bot2.center)}, label={[rotate=-10,anchor=north,label distance=1pt]below:{$\Asat 1$}}] at (bot2.south) {\kern7pt};
  \end{tikzpicture}

  The proof objective is now to construct the sequence of all further
  $\forall \fsat{k+1}.\Asat{k+1}$, the associated $\alpha_{k+1}$,
  and $\gsat{k+1}$ (for $k<m$) or $\frisat{k}$ (for $k \geq m$).

  \begin{enumerate}
    \item
      We construct the aforementioned sequence by repeatedly choosing a successor of $\forall \fsat k.\Asat k$ as the next element ($\forall \fsat{k+1}.\Asat{k+1}$) of the sequence,
      then show that $\fsat m$ is a section.

      See the proof of the special case, replacing ``isomorphism'' with ``section''.
      Also instead of stopping when a section is first obtained, continue until $\fsat m$ is reached.
      Additionally, $\fsat m; \gsat m = \id$ (with $\id$ being the
      empty sequence of the $c_i$), so $\gsat m$ is a right-inverse of $\fsat m$,
      which we also choose as the starting point of the sequence of retractions (i.e., $\frisat m \defeq \gsat m$).

      Note that in the special case, we considered the first
      occurrence of an isomorphism successor as the $p$-th step (with
      $p$ possibly being less than $m$, equivalently $2p\le n$).
      Now we consider exactly the first $m$ steps, even if a section is obtained earlier already.
      As the general pull-forward rule does not ``consume'' successors (unlike the special rule) even if they are pulled forward earlier, we can still ensure that after $m$ steps, we still have a successor that can be pulled forward.

    \item
      Summarizing the situation so far, $\fsat 0$ has been turned into a section $\fsat m$ and a chosen right-inverse $\frisat m$ (such that $\fsat m ; \frisat m = \id$) by now.

      We use the retraction successor relation as described in \Cref{lts-section-successor-relation}
      to obtain, for $n \leq k \leq q$, the sequence of subsequent
      $\alpha_{k+1}, \fsat{k+1}, \frisat{k+1}$ (named $\alpha, \beta,
      r_\beta$, respectively, in \Cref{lts-section-successor-relation}).

      Note that \Cref{rsq-preserve-sections} guarantees the existence
      of a section $\fsat{k+1}$ in the next step. Furthermore we also
      obtain retraction successors, namely retractions $\frisat{k}$,
      $\frisat{k+1}$ such that
      $\alpha_{k+1} ; \frisat{k+1} = \frisat{k} ; e_{k+1}$ (see
      diagram below).

            \begin{tikzpicture}
    \begin{scope}[shift={(3,0)}]
      \node (a) at (0,0) {};
      \node (b) at (1.8,0) {};
      \node (c) at (0,-1.8) {};
      \node (d) at (1.4,-1.4) {};
      \node (ds) at (2.2,-2.1) {};
      \draw[->] (a) -- node[above]{$e_{k+1}$} (b);
      \draw[->] (a) -- node[left]{$d_k$} (c);

      \draw[->,fancydotted] (b) -- node[left,pos=0.3]{$d_{k+1}$} (d);
      \draw[->,fancydotted] (c) -- node[above,pos=0.35]{$\alpha_{k+1}$} (d);

      \draw[->,fancydotted] (d) -- (ds);
      \node at ($(d.center)+(12pt,-4pt)$) {$r_{k+1}$};
      \draw[->] (b) -- node[right]{$\id$} (ds);
      \draw[->] (c) -- node[below]{$r_k;e_{k+1}$} (ds);
    \end{scope}
  \end{tikzpicture}

\item Since $\fsat m$ is a section with retraction $\frisat m$, one of
  its retraction successor must eventually pulled forward if we make
  an appropriate choice of representative squares
  above.
     From the arguments above, we know that the path is infinite and the fairness constraint guarantees the existence of a finite $q$.

     We end up with a situation depicted below, the condition
     $\mathcal{C}_q$ at the current node being marked by $*$. Note
     that the upper row of squares commutes by construction, in
     particular, $d_i;\alpha_{i+1} = e_{i+1};d_{i+1}$, while this is
     not necessarily true for the lower row of
     squares.

      \begin{tikzpicture}[x=0.90cm]
        \foreach \i in {0,...,13} {
          \node (top\i) at (\i,0) {};
          \node (bot\i) at (\i,-1.75) {};
          \node (bbot\i) at (\i,-3.5) {};
        }

        \draw[->] (top0) edge node[above]{$e_1$} (top2);
        \node at ($(top2)!0.5!(top3)$) {$\dots$};
        \draw[->] (top3) edge node[above]{$e_m$} (top5)
                  (top5) edge node[above]{$e_{m+1}$} (top7);
        \node at ($(top7)!0.5!(top8)$) {$\dots$};
        \draw[->] (top8) edge node[above]{$e_q$} (top10);
        \draw[->] (top10) -- node[right]{$\fsat{q}$} (bot10);

        \draw[->] (top10) -- node[above]{$\id$} (top11);
        \draw[->] (top11) -- node[above]{$e_{q+1}$} (top12);
        \node at ($(top12)!0.5!(top13)$) {$\dots$};

        \draw[->] (bot0) edge node[below]{$\alpha_1$} (bot2);
        \node at ($(bot2)!0.5!(bot3)$) {$\dots$};
        \draw[->] (bot3) edge node[below]{$\alpha_m$} (bot5);
        \draw[->] (bot5) edge node[below]{$\alpha_{m+1}$} (bot7);
        \node at ($(bot7)!0.5!(bot8)$) {$\dots$};
        \draw[->] (bot8) edge node[below]{$\alpha_{q}$} (bot10);

        \draw[->] (top0) -- node[right]{$\fsat0$} (bot0);
        \draw[->] (top5) -- node[right]{$\fsat m$} (bot5);

        \draw[->] (bot5) -- node[right]{$\frisat m$} (bbot5);
        \draw[->] (bot10) -- node[right]{$\frisat q$} (bbot10);

        \draw[->] (bbot5) edge node[below]{$e_{m+1}$} (bbot7);
        \node at ($(bbot7)!0.5!(bbot8)$) {$\dots$};
        \draw[->] (bbot8) edge node[below]{$e_q$} (bbot10);

        \draw[->,rounded corners=4pt] (top10) -- +(1.5,-0.8) -- node[right,pos=0.5]{$\id$} ($(bbot10)+(1.5,0.8)$) -- (bbot10);

        \node[condtri,dart tip angle=30,shape border rotate=90, rotate around={-20:(bot0.center)}, label={[rotate=-20,anchor=north,label distance=1pt]below:{$\Asat 0$}}] at (bot0.-100) {\kern5pt};
        \node[condtri,dart tip angle=30,shape border rotate=90, rotate around={-30:(bot5.center)}, label={[rotate=-20,anchor=north,label distance=1pt]below:{$\Asat m$}}] at (bot5.-100) {\kern5pt};
        \node[condtri,dart tip angle=30,shape border rotate=90, rotate around={-30:(bot10.center)}, label={[rotate=-20,anchor=north,label distance=1pt]below:{$\Asat q$}}] at (bot10.-100) {\kern5pt};
        \node[condtri,shape border rotate=270,rotate around={10:(top10.center)},scale=0.9] at (top10.north) {$*$};
        \node[condtri,dart tip angle=20,shape border rotate=90, rotate around={190:(top11.center)}, label={[rotate=0,anchor=south,label distance=1pt]below:{$\mathcal{E}$}}] at (top11.south) {\kern2.5pt};
        \node[condtri,dart tip angle=30,shape border rotate=90, rotate around={90:(bbot10.center)}, label={[rotate=0,anchor=west,label distance=1pt]below:{${\Asat q}_{\downarrow \frisat q}$}}] at (bbot10.south) {\kern5pt};
      \end{tikzpicture}

      Now let $q$ be the step at which a successor $\fsat q$ of
      $\forall \fsat m . \Asat m$ is finally pulled forward using
      inverse $\frisat q$.  Let
      $\mathcal{C}_q = \bigland_\ell \forall f_\ell.\mathcal A_\ell$
      and $\forall \fsat q.\Asat q = \forall f_t.\mathcal{A}_t$ for
      some index $t$. Assume that
      $(\Asat q)_{\downarrow \frisat q} = \biglor_j \exists
      h_j. \mathcal H_j$, then the next step in the tableau is:

      \[
        \mathcal{C}_q = \Big( \bigland_{\kern10pt\mathclap{\ell \ne
            t}\kern10pt} \forall f_\ell . \mathcal A_\ell \Big) \land
        \forall \fsat q . \Asat q \xrightarrow{\id} \overbrace{
          \underbrace{\biglor_j \exists h_j . \Big( \mathcal
            H_j}_{{\Asat q}_{\downarrow \frisat q}} \land \Big(
          \bigland_{i} \forall f_i . \mathcal A_i \Big)%
          _{\downarrow h_j} \Big) }^{\mathcal E}
      \]

      As a result, we have:
      \[ ( [ e_{q+1},\ \id, e_{q+2}, \dots ], \mathcal{E}
      ) \in P \]

  \end{enumerate}

  We now apply our up-to techniques to obtain the originally desired
  result (which was to satisfy a given $\forall \fsat0.\Asat0$ by
  showing for a given $n$ and $\gsat0$ such that
  $c_1 ; \dots ; c_n = \fsat0 ; \gsat0$ that
  $([ \gsat0, c_{n+1}, \dots ], \Asat0) \in u(P)$ holds).

      Observe that ${\Asat q}_{\downarrow \frisat q}$ is ``contained'' in $\mathcal E$,
      in the sense that the path condition $\mathcal E$ is a (nested) conjunction of ${\Asat q}_{\downarrow \frisat q}$ and additional conditions.
      First, we apply up-to conjunction removal (\Cref{thm:upto-conjunction-removal})  to turn our condition $\mathcal E$
      (``${\Asat q}_{\downarrow \frisat q}$ with conjunctions'')
      into just ${\Asat q}_{\downarrow \frisat q}$:

      \[ \implies
      \Big( [ e_{q+1}, \id, e_{q+2}, \dots ],
      \biglor_j \exists h_j . \mathcal H_j
      \Big) =
      \Big( [ e_{q+1}, \id, e_{q+2}, \dots ], {\Asat q}_{\downarrow \frisat q} \Big) \in u(P) \]

      Since $\Asat q = {\Asat{q-1}}_{\downarrow \alpha_q} = \ldots = {\Asat0}_{\downarrow \alpha_1 \downarrow \dots \downarrow \alpha_q}$,
      we can now apply a series of up-to shift operations (\Cref{thm:compat-uptoshift}) to obtain a tuple with $\Asat0$ as the condition:

      \begin{align*}
        ([ e_{q+1}, \dots ], &\ {\Asat q}_{\downarrow \frisat q} ) \in u(P) \\
        \implies
        ([ \frisat q ; e_{q+1}, \dots ], &\ \Asat q ) = \\
        ([ \frisat q ; e_{q+1}, \dots ], &\ {\Asat0}_{\downarrow \alpha_1 \downarrow \dots \downarrow \alpha_q} ) \in u(P) \\
        \implies
        ([ \alpha_q ; \frisat q ; e_{q+1}, \dots ], &\ {\Asat0}_{\downarrow \alpha_1 \downarrow \dots \downarrow \alpha_{q-1}} ) \in u(P) \\
        \implies \dots \implies
        ([ \alpha_1 ; \dots ; \alpha_q ; \frisat q ; e_{q+1}, \dots ], &\ \Asat0) \in u(P)
      \end{align*}

      Now apply previous results to rewrite the initial part of the
      chain.  Remember that we obtain
      $\alpha_{k+1} ; \frisat{k+1} = \frisat{k} ; e_{k+1}$ from the
      choice of retraction successors.
      
  Hence 
      \[ \alpha_{n+1};\dots;\alpha_q;\frisat q =
        \alpha_{n+1};\dots;\alpha_{q-1};\frisat {q-1}; e_q = \dots = \frisat
      m;e_{m+1};\dots;e_q, \]
      therefore
      \[ ([\alpha_1 ; \dots ; \alpha_m ; \frisat m ; e_{m+1} ; \dots ; e_q ; e_{q+1}, \dots], \Asat0) \in u(P) \]

      Next, we exploit the way in which the $\gsat i$ were constructed
      (with the last one being $\gsat m = \frisat m$). In particular
      $\alpha_{i+1};g_{i+1} = g_i$, hence:
      \[
        \alpha_1 ; \dots ; \alpha_n ; \frisat m
        = \alpha_1 ; \dots ; \alpha_m ; \gsat m
        = \alpha_1 ; \dots ; \alpha_{m-1} ; \gsat{m-1}
        = \ldots = \gsat0
      \]

      As a result, we have
      \[
        ([ \gsat0 ; e_{m+1} ; \dots ; e_q ; e_{q+1}, \dots ], \Asat0) \in u(P)
      \]

      With up-to recomposition (\Cref{thm:compat-uptorecomp}), we can adjust this to
      \[
        ([ \gsat0, \id, e_{m+1}, \id, \dots, e_q, \id, e_{q+1}, \dots ], \Asat0) \in u(P)
      \]
      which corresponds, as desired, to
      \[
        ([ \gsat0, c_{n+1}, c_n, \dots ], \Asat0) \in u(P).
        \qedhere
      \]
\end{proofparts}
\end{proof}

\begin{theorem}[Soundness and
  Completeness]\label{thm-algo-sound-complete-general}
  ~
  \begin{itemize}
  \item If all branches in a tableau are closed, then the
    condition in the root node is unsatisfiable.
  \item If a condition $\mathcal A$ is unsatisfiable, then in every
    tableau constructed by obeying the fairness constraint all
    branches are closed.
  \end{itemize}
\end{theorem}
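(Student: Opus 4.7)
The plan is to lift the two-part proof strategy from the restricted case (\Cref{thm-algo-sound,thm-algo-complete}) to the general setting by substituting \Cref{pull-forward-sections} for \Cref{lem:pull-forward-isos} and \Cref{thm:fair-branch-model-general} for \Cref{thm:fair-branch-model}. Both the induction underlying soundness and the contrapositive argument for completeness carry over essentially verbatim; the only genuinely new issues are (i) verifying the slightly different shape of the pull-forward step for sections and (ii) exhibiting a concrete fair strategy.

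For soundness, I would induct on the depth of the tableau $T$. The base case (depth $1$) can only be an empty existential $\condfalse$, which is unsatisfiable. In the induction step, if the root is existential $\bigvee_i \exists f_i.\mathcal{A}_i$ then each $\mathcal{A}_i$ heads a closed subtableau, so by the inductive hypothesis every $\mathcal{A}_i$ is unsatisfiable and hence so is the disjunction. If the root is universal then its unique descendant $\mathcal{C}$ is unsatisfiable by the inductive hypothesis, and \Cref{pull-forward-sections} gives $\bigwedge_i \forall f_i.\mathcal{A}_i \equiv \mathcal{C}$, so the root is unsatisfiable as well. Observe that this step is actually cleaner than its analogue in the restricted case, since the descendant is directly equivalent to the parent rather than equivalent modulo an outer $\exists f_p$.

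For completeness, I would prove the contrapositive. Suppose a tableau constructed under the fairness strategy contains a branch that is not closed. That branch is either finite, open, and unextendable (in which case its leaf must be a universal condition containing no sections, for otherwise it would be extendable by the second tableau rule), or it is infinite and, by construction, fair in the sense of the adapted fairness definition. In either case \Cref{thm:fair-branch-model-general} produces a model of the root condition, contradicting unsatisfiability.

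The main obstacle, beyond what is already handled by \Cref{thm:fair-branch-model-general}, is exhibiting a concrete fair construction, since fairness now ranges not only over sections but over every choice of right-inverse. I would describe a queue-based strategy analogous to \Cref{rem:fairness-strategies}: whenever a new universal condition enters the branch, one enqueues every pair $(f_i, r_i)$ of a section together with one of its right-inverses (finitely many by the assumption made after \Cref{lts-section-successor-relation}), and each universal pull-forward step dequeues the head pair and replaces it with the finitely many retraction successors produced by \Cref{lts-section-successor-relation}. A standard queue-fairness argument then ensures that every section-inverse pair on every universal condition of the branch is eventually processed, which is exactly the adapted fairness condition.
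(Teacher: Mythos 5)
Your proposal is correct and takes essentially the same route as the paper, whose proof simply reuses the soundness induction and the completeness contrapositive from the restricted case (\Cref{thm-algo-sound,thm-algo-complete}), noting that universal steps now yield an equivalent condition via \Cref{pull-forward-sections} and invoking \Cref{thm:fair-branch-model-general} to obtain models from open unextendable or infinite fair branches. Your third paragraph on a concrete queue-based fair strategy is not actually needed for this statement (unlike \Cref{thm-algo-complete}, it does not claim existence of a fair tableau), and its bookkeeping is slightly loose, but this does not affect the argument.
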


\begin{proof}
  Use the proof strategies of \Cref{thm-algo-complete,thm-algo-sound}.
  Tableaux constructed by the rules of \Cref{satcheck.new.general} have all properties
  that are required for the proofs (in particular, universal steps
  lead to an equivalent condition).  Use
  \Cref{thm:fair-branch-model-general} to obtain models for open
  branches.
\end{proof}


In the general case, although soundness and completeness still hold,
we are no longer able to find all finite models as before. This is
intuitively due to the fact that a condition might have a finite
model, but what we find is a seemingly infinite model that always has
the potential to collapse to the finite model.

On the other hand, the weaker requirements on the category imply that more instantiations are possible, such as to $\graphf$ (graphs and arbitrary morphisms). Here, pushouts can be used for representative squares, which allows for a more efficient shift operation that avoids the blowup of the size of the conditions that is associated with jointly epi squares.





\section{Conclusion}
\label{sec:conclusion}


We introduced a semi-decision procedure for checking satisfiability of
nested conditions at the general categorical level.  The correctness
of this tableau-based procedure has been established using a novel
combination of \coinductive (up-to) techniques. In the restricted case
we also considered the generation of finite models and witnesses for
(some) infinite models.  Our procedure thereby generalizes prior work
\cite{lo:tableau-graph-properties,slo:model-generation,nopl:navigational-logic}
on nested graph conditions that are equivalent to first-order logic
\cite{f:fol-theorem-proving}.  
As a result, we can also handle cospan categories over adhesive
categories (using borrowed context diagrams for representative
squares) and other categories, such as Lawvere theories.  

There is a notion of Q-trees \cite{fs:categories-allegories}
reminiscent of the nested conditions studied in this paper, but to our
knowledge no generic satisfiability procedures have been derived for
Q-trees.

We plan to transfer the technique of counterexample-guided abstraction
refinement (CEGAR) \cite{hjmm:abstractions-proofs}, a program analysis
technique based on abstract interpretation and predicate abstraction,
to graph transformation and reactive systems. The computation of
weakest preconditions and strongest postconditions for nested
conditions is fairly straightforward
\cite{bchk:conditional-reactive-systems} and satisfiability checks
give us the necessary machinery to detect and eliminate spurious
counterexamples. One still has to work around undecidability issues
and understand whether there is a generalization of Craig
interpolation, used to simplify conditions.

One further direction for future work is to understand the mechanism
for witness generation in more detail. In particular, since it is
known that FOL satisfiability for graphs of bounded treewidth is
decidable \cite{c:mso-graphs-I}, the question arises whether we can
find witnesses for all models of bounded treewidth (or a suitable
categorical generalization of this
notion).

Another direction is to further explore instantiation with a Lawvere theory
\cite{l:functorial-semantics-algebraic-theories}, where arrows are
$n$-tuples of $m$-ary terms. In this setting representative squares
are closely related to unification.
\Cref{appendix-terms} contains some initial results.

Finally we plan to complete development of a tool that implements the
satisfiability check and explore the potential for optimizations
regarding its runtime.

\bibliographystyle{plain}
\bibliography{references}
\renewcommand{\appendixbibliographystyle}{plain}


\nosectionappendix
\begin{toappendix}

  \section{Instantiation: Lawvere Theories}
  \label{appendix-terms}
  
  
  \newcommand{\ol}[1]{\overline{#1}}
  \renewcommand{\ol}[1]{{#1}}
  
  As a further instance of our generic procedure for checking
  (un)satisfiability of nested conditions we consider Lawvere theories
  \cite{l:functorial-semantics-algebraic-theories}, which provide a
  syntactic model of first-order algebraic signatures. After the basic
  definitions we show that the theory of a signature has weak pushouts
  by providing an explicit construction. Such squares can be taken as
  the representative squares necessary for our procedure. We conclude
  by presenting a simple nested condition, and showing that it is not
  satisfiable by using the general version of our procedure (see
  \Cref{sec:general-case}): this is needed because, in general,
  sections are not isomorphisms in these categories.

  
  A (non-equational, one-sorted) signature $\Sigma = \{\Sigma_n\}_{n\in \natzero}$ is a finite ranked set of function symbols.
  A $\Sigma$-term $t$ over variables $\{x_1, x_2, \ldots, x_n\}$, with $n \geq 0$, (written $t \in T_{\Sigma_n}$) is either a variable $x_i$ with $i \in \{1, \ldots, n\}$, or $t = f(t_1, \ldots, t_k)$, where $f \in \Sigma_k$ and $t_i \in T_{\Sigma_n}$ for each $i \in \{1, \ldots, k\}$. 
  A substitution $\sigma$ from $n$ to $m$ is a function from  $\{x_1, x_2, \ldots, x_n\}$ to $T_{\Sigma_m}$.
  If $t \in T_{\Sigma_n}$ and $\sigma$ is a substitution from $n$ to $m$, then $\sigma(t) \in T_{\Sigma_m}$ is defined in the usual way.

  \begin{definition}[cartesian category]
      A category is cartesian if it has the (categorical) product for each pair of objects. 
      Given $A,B \in Obj(C)$, their product is a triple $\langle \pi_A, A\times B, \pi_B\rangle$ such that for all objects $C$ and arrows $f: C \to A, g:C \to B$, there is a unique arrow $\langle f,g\rangle: C \to A \times B$ such that $\pi_A \circ \langle f,g\rangle = f$ and  $\pi_B \circ \langle f,g\rangle = g$. 
  \end{definition}
  
  \begin{definition}[Lawvere Theory]
   The (Lawvere) theory of a signature $\Sigma$ is the cartesian category $\mathit{Th}_{\Sigma}$ defined equivalently as: 
   \begin{itemize}
      \item The free cartesian category having natural numbers as objects (where the categorical product is addition), generated by the set of arrows $\bigcup_{n\in \natzero}\{f \colon n \to 1 \mid f \in \Sigma_n\}$. Object $0$ is the terminal object.
      
      Note that by the properties of categorical products there is a bijection between arrows from $n$ to $m$ in $\mathit{Th}_{\Sigma}$ and $m$-tuples of arrows from $n$ to $1$, since $m = 1\times \ldots \times 1$. 
  
      \item The category having natural numbers as objects, and where each arrow $\ol{t} \colon n \to m$ is an $m$-tuple of terms $\ol{t}= \langle t_1, \ldots, t_m\rangle$ in $T_{\Sigma_n}$, i.e.~with variables in $\{x_1, \ldots, x_n\}$.
      
      Given arrows $\ol{s} = \langle s_1, \ldots, s_k\rangle \colon n \to k$ and $\ol{t} = \langle t_1, \ldots, t_m\rangle \colon k \to m$, their composition is $\ol{t}\circ \ol{s} = \langle \sigma(t_1), \ldots, \sigma(t_m) \rangle \colon n \to m$, where $\sigma$ is the substitution defined as $\sigma(x_i) = s_i$ for each $i \in \{1,\ldots,k\}$.
  
      For each $n \in \natzero$, the only arrow to $0$ is the empty
      tuple $\langle\rangle$, often denoted $!$. For each pair
      $n,m \in \natzero$, their product is $n+m$ with projections
      $\pi_n = \langle x_1,\ldots, x_n\rangle\colon n+m\to n$ and
      $\pi_m = \langle x_{n+1},\ldots, x_m\rangle\colon n+m\to m$.
   \end{itemize}
  \end{definition}
  
  For the construction of the weak pushout of two arrows below we need the following technical lemma. 
  
  \begin{lemma}[finite factorization of arrows]
    \label{lem:fin_fac_in_Law_theory}
  \floatingpicspaceright{3.2cm}
    \begin{floatingpicright}{2.9cm}%
      \hfill\begin{tikzpicture}[baseline={(0,-0.2)}]
        \path[use as bounding box] (-0.2,0) rectangle (1.6,-0.75);
        \node (sqtl) at (0,0) {$k$};
        \node (sqtr) at (1.25,0) {$n$};
        \node (sqbl) at (0,-0.9) {$m$};
  
        \draw[->] (sqtl) -- node[overlay,above]{$\ol{t}$} (sqtr);
        \draw[->] (sqtl) -- node[left]{$\ol{s}$} (sqbl);
        \draw[->,dashed] (sqbl) -- node[below]{$\ol{f}$} (sqtr);
      \end{tikzpicture}
    \end{floatingpicright}
    Given arrows $\ol{t} \colon k \to n$ and $\ol{s} \colon k \to m$ in $\mathit{Th}_\Sigma$, there is at most a finite number of arrows $\ol{f}: m \to n$ such that $\ol{f} \circ \ol{s} = \ol{t}$.
  \end{lemma}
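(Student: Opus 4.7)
The plan is to reduce the claim to a coordinate-wise statement about individual terms, and then settle that statement by a straightforward size bound combined with the finiteness of the signature.

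First I would observe that composition in $\mathit{Th}_\Sigma$ acts coordinate-wise on tuples: if $\ol{f} = \langle f_1, \ldots, f_n\rangle \colon m \to n$ and $\sigma$ is the substitution determined by $\ol{s} = \langle s_1, \ldots, s_m\rangle$ (so $\sigma(x_j) = s_j$), then $\ol{f}\circ\ol{s} = \langle \sigma(f_1), \ldots, \sigma(f_n)\rangle$. Hence $\ol{f}\circ\ol{s} = \ol{t} = \langle t_1, \ldots, t_n\rangle$ holds iff $\sigma(f_i) = t_i$ for every $i \in \{1,\ldots,n\}$. It therefore suffices to show that for fixed $t \in T_{\Sigma_k}$ and fixed $\sigma$, the set $\{f \in T_{\Sigma_m} \mid \sigma(f) = t\}$ is finite; then the set of tuples $\ol{f}$ is a subset of a finite product of finite sets.

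The key lemma is a size bound: defining $|t|$ as the number of nodes in the syntax tree of $t$, one has $|\sigma(f)| \geq |f|$ for every term $f \in T_{\Sigma_m}$ and every substitution $\sigma$. I would prove this by structural induction on $f$: the base case $f = x_j$ gives $|\sigma(f)| = |s_j| \geq 1 = |f|$, and for $f = g(f_1,\ldots,f_\ell)$ we compute
\[
  |\sigma(f)| \;=\; 1 + \sum_{i=1}^{\ell} |\sigma(f_i)| \;\geq\; 1 + \sum_{i=1}^{\ell} |f_i| \;=\; |f|
\]
using the induction hypothesis. Applied to our situation, any $f$ with $\sigma(f) = t$ must satisfy $|f| \leq |t|$.

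Finally, because the signature $\Sigma$ is finite and we are working over the finite variable set $\{x_1,\ldots,x_m\}$, there are only finitely many terms in $T_{\Sigma_m}$ of size bounded by $|t|$ (the syntax trees form a finitely branching, bounded-depth family). Hence $\{f \mid \sigma(f) = t\}$ is finite, concluding the proof. I do not anticipate a real obstacle here; the only point that needs care is making sure the size-nondecreasing lemma is stated for arbitrary substitutions so that both the variable and the function-symbol cases of the induction go through cleanly.
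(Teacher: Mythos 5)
Your proposal is correct, but it argues differently from the paper. The paper also reduces to the case $n=1$ (single terms), but then proceeds by structural induction on $t$: either $f$ is one of the finitely many variables $x_1,\ldots,x_m$, or $t=h(t_1,\ldots,t_r)$ forces $f=h(f_1,\ldots,f_r)$ with $f_i\circ\ol{s}=t_i$, and the induction hypothesis bounds each $f_i$. That argument yields an explicit recursive enumeration of all factorizations, which the paper then reuses operationally when constructing the weak pushouts (the sets $G_i$, $F_i$ in Theorem~\ref{th:wp-LawTh} and the computation in Example~\ref{ex:constr-weak-po} are exactly these enumerations). Your route instead goes through the size monotonicity lemma $|\sigma(f)|\geq |f|$, so that any $f$ with $\sigma(f)=t$ satisfies $|f|\leq |t|$, and then counts: over a finite signature and the finite variable set $\{x_1,\ldots,x_m\}$ there are only finitely many terms of bounded size. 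This is a clean and slightly more elementary finiteness argument (indeed, finiteness of $\Sigma$ is not even essential for your bound, since every function symbol of $f$ must already occur in $t$), but it only delivers a cardinality bound rather than the recursive description of the factorizations that the paper's inductive case analysis provides and later exploits; for the lemma as stated, both are perfectly adequate.
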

  
  \begin{proof}
    Since arrows to $p$ are $p$-tuples of arrows to $1$, it is sufficient to prove the statement for $n = 1$, thus for $\ol{t}\in T_{\Sigma_k}$. We proceed by induction on the structure of $\ol{t}$.
    \begin{enumerate}
      \item $\ol{t} = x_i$, with $i \in \{1,\ldots,k\}$. From $x_i = \ol{t} =  \ol{f} \circ \ol{s}$ we infer that $\ol{f}$ must be a variable in $\{x_1,\ldots,x_m\}$, which is a finite set.
      
      \item  $\ol{t} = h(t_1,\ldots,t_r)$, with $t_1,\ldots,t_r \in T_{\Sigma_m}$. In this case a decomposition of $\ol{t}$ as $\ol{f} \circ \ol{s}$ can be obtained in two ways:
      \begin{enumerate}
        \item  Either $\ol{f}$ is a variable in the finite set $\{x_1,\ldots,x_m\}$ (and $\ol{t}$ is a term appearing in the $m$-tuple $\ol{s}$), or 
        \item $\ol{f} = h \circ \langle f_1, \ldots,f_r\rangle$, where for each $i \in \{1,\ldots,r\}$ it holds $t_i = f_i \circ \ol{s}$.  By induction hypothesis there are only finitely many such $f_i$ for each $i$, and therefore there are finitely many such tuples, which concludes the proof.
      \end{enumerate} 
  
  \[
    \begin{tikzpicture}
            \node (sqtl) at (0,0) {$k$};
            \node (sqtr) at (3,0) {$1$};
            \node (sqbl) at (0,-1) {$m$};
            \node (sqbr) at (3,-1) {$r$};
      
            \draw[->] (sqtl) -- node[overlay,above]{$\ol{t}=h(t_1,\ldots,t_r)$} (sqtr);
            \draw[->] (sqtl) -- node[left]{$\ol{s}$} (sqbl);
            \draw[->] (sqbl) -- node[below]{$\langle f_1, \ldots, f_r\rangle$} (sqbr);
            \draw[->] (sqbr) -- node[right]{$h$} (sqtr);
          \end{tikzpicture}
    \]
    \end{enumerate}
  \end{proof}

  
  \begin{theorem}[$\mathit{Th}_{\Sigma}$ has weak pushouts]
      \label{th:wp-LawTh}
      The Lawvere Theory of a signature,   $\mathit{Th}_{\Sigma}$, has weak pushouts. That is, 
      for each pair of arrows $\ol{t} \colon k \to n$ and $\ol{s} \colon k \to m$ there exists an object $z$ and arrows $\ol{f} \colon n \to z$ and $\ol{g} \colon m \to z$ such that (1) $\ol{f}\circ \ol{t} = \ol{g} \circ \ol{s}$ and (2) for each 
      $\langle z', \ol{c} \colon n \to z', \ol{d} \colon m \to z'\rangle$ such that  $\ol{c}\circ \ol{t} = \ol{d} \circ \ol{s}$ there is an arrow $\ol{e} \colon z \to z'$ such that $\ol{d} = \ol{e} \circ \ol{g}$ and $\ol{c} = \ol{e} \circ \ol{f}$. Note that uniqueness of arrow $\ol{e}$ is not required.
  \end{theorem}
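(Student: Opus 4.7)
The plan is to construct the weak pushout object explicitly by identifying a finite set of ``atomic'' pairs that generate all commuting completions of the span. First I would relabel variables: rewrite $\ol{t}\colon k\to n$ as a substitution $\tau_t\colon T_\Sigma(\vec{y})\to T_\Sigma(\vec{x})$ on fresh variables $\vec{y}=(y_1,\ldots,y_n)$ with $\tau_t(y_i)=t_i$, and analogously $\tau_s\colon T_\Sigma(\vec{z})\to T_\Sigma(\vec{x})$ with $\tau_s(z_j)=s_j$, where $\vec{z}=(z_1,\ldots,z_m)$. Then a cospan $\langle z', \ol{c}, \ol{d}\rangle$ completing the square corresponds componentwise to a $z'$-tuple of pairs in
\[ P \;=\; \{(u,v)\in T_\Sigma(\vec{y})\times T_\Sigma(\vec{z}) \mid \tau_t(u)=\tau_s(v)\}, \]
which forms a subalgebra of the product under componentwise application of function symbols.

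Next I would take as generators the subset $G\subseteq P$ of pairs in which at least one component is a variable, i.e.\ pairs of the shape $(y_i,v)$ with $\tau_s(v)=t_i$ or of the shape $(u,z_j)$ with $\tau_t(u)=s_j$. The crucial finiteness of $G$ is exactly what \Cref{lem:fin_fac_in_Law_theory} delivers: pairs of the first shape correspond bijectively to factorizations of $t_i\colon k\to 1$ through $\ol{s}\colon k\to m$, of which there are only finitely many; the second shape is symmetric. Enumerating $G=\{(u_1,v_1),\ldots,(u_z,v_z)\}$, I would define $\ol{f}=\langle u_1,\ldots,u_z\rangle\colon n\to z$ and $\ol{g}=\langle v_1,\ldots,v_z\rangle\colon m\to z$; commutativity $\ol{f}\circ\ol{t}=\ol{g}\circ\ol{s}$ then holds componentwise by definition of $G\subseteq P$.

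For the weak universal property, given $\ol{c}\colon n\to z'$ and $\ol{d}\colon m\to z'$ with $\ol{c}\circ\ol{t}=\ol{d}\circ\ol{s}$, every component pair $(c_j,d_j)$ lies in $P$, and I would show by structural induction that it can be written as a $\Sigma$-term $E_j$ over fresh variables $w_1,\ldots,w_z$ corresponding to generators of $G$. In the base case $(c_j,d_j)\in G$ and $E_j$ is the matching $w_i$; otherwise both components are non-variable, and since substitution preserves the root of a non-variable term, the equality $\tau_t(c_j)=\tau_s(d_j)$ forces $c_j$ and $d_j$ to share a common head $h$ of common arity, so $(c_j,d_j)=h((c_j^1,d_j^1),\ldots,(c_j^\ell,d_j^\ell))$ with each sub-pair smaller and again in $P$, and I apply the induction hypothesis. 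Setting $\ol{e}=\langle E_1,\ldots,E_{z'}\rangle\colon z\to z'$ and unwinding the substitutions then yields $\ol{e}\circ\ol{f}=\ol{c}$ and $\ol{e}\circ\ol{g}=\ol{d}$.

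The main obstacle is the finiteness of $G$: without \Cref{lem:fin_fac_in_Law_theory} there would be no bound on how deeply a variable can be ``buried'' inside a factoring term, and the construction would collapse to an arrow into a potentially infinitely-generated object, which has no counterpart in $\mathit{Th}_\Sigma$. The inductive decomposition step itself relies on the free, non-equational nature of $\mathit{Th}_\Sigma$, where the top symbol of a non-variable term is invariant under substitution; this is precisely what would fail for Lawvere theories quotiented by nontrivial equations, which is consistent with the scope of the statement.
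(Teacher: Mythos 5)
Your proposal is correct and is essentially the paper's own argument: your generator set $G$ of pairs with a variable component is exactly the paper's finitely many factorizations $G_i$ and $F_j$ obtained from \Cref{lem:fin_fac_in_Law_theory}, and your arrows $\ol{f},\ol{g}$ coincide (up to harmless duplication) with the paper's collection of these commuting squares into a single one. Your structural induction writing each commuting pair as a term over the generators is precisely the paper's anti-unifier (least general generalization) $c \sqcup d$ used to define the mediating arrow, so the two proofs differ only in presentation.
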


  \[
    \begin{tikzpicture}
      \node (sqtl) at (0,0) {$k$};
      \node (sqtr) at (2,0) {$n$};
      \node (sqbl) at (0,-1) {$m$};
      \node (sqbr) at (2,-1) {$z$};
      \node (bbrr) at (3,-2) {$z'$};
  
      \draw[->] (sqtl) -- node[above]{$\ol{t}$} (sqtr);
      \draw[->] (sqtl) -- node[left]{$\ol{s}$} (sqbl);
      \draw[->] (sqbl) -- node[above]{$\ol{g}$} (sqbr);
      \draw[->] (sqtr) -- node[right]{$\ol{f}$} (sqbr);
      \draw[->] (sqtr) to [out = 0, in = 90] node[right]{$c$} (bbrr);
      \draw[->] (sqbl) to [out = 270, in = 180] node[below]{$d$} (bbrr);
      \draw[->] (sqbr) -- node[right]{$e$} (bbrr);
    \end{tikzpicture}
  \]
  
  \begin{proof}
  Given two arrows $\ol{t} \colon k \to n$ and $\ol{s} \colon k \to m$, let us proceed as follows.
  \begin{enumerate}
      \item For each $i \in \{1,\ldots,n\}$ let $G_i = \{g \colon m \to 1 \mid
        g \circ \ol{s} = t_i\}$ be the set of all terms $g$ such that
        applying the substitution $\ol{s}$ to it we get
        $t_i$. By \Cref{lem:fin_fac_in_Law_theory} $G_i$ is finite. Let $\alpha_i = |G_i|$ be its cardinality, and $G_i = \{g_{i1}, \ldots, g_{i{\alpha_i}}\}$. Note that for each $j \in \{1,\ldots,\alpha_i\}$ we have $t_i = \langle x_i\rangle  \circ \ol{t} = g_{ij} \circ \ol{s}$, i.e.~we have a commuting square closing the span $(\ol{t},\ol{s})$.
      
      \item Symmetrically, for each $i \in \{1,\ldots,m\}$ let $F_i = \{f \colon n \to 1 \mid f \circ \ol{t} = s_i\}$ be the set of all terms $f$ such that applying the substitution $\ol{t}$ to it we get $s_i$. Let $\beta_i = |F_i|$ be its cardinality, and $F_i = \{f_{i1}, \ldots f_{i{\beta_i}}\}$. Note that for each $j \in \{1,\ldots,\beta_i\}$ we have $s_i = \langle x_i\rangle  \circ \ol{s} = f_{ij} \circ \ol{t}$.
      
      \item Let $z = \sum_{i=1}^n \alpha_i +  \sum_{j=1}^m \beta_j$. We collect all the commuting squares closing the span $(\ol{t},\ol{s})$ identified in the two previous points, which are exactly $z$, into a single one by exploiting the bijection between arrows to $z$ and $z$-tuples of arrows to $1$. Thus we define $\ol{f} \colon n \to z$ and $\ol{g} \colon m \to z$ respectively as
      
      \[\ol{f} = \langle \underbrace{x_1,\ldots,x_1}_{\alpha_1}, \ldots, \underbrace{x_n,\ldots,x_n}_{\alpha_n}, f_{11},\ldots, f_{1\beta_1},\ldots,f_{m1},\ldots, f_{m\beta_m} \rangle
      \]
  
      \[
      \ol{g} = \langle g_{11},\ldots, g_{1\alpha_1},\ldots,g_{n1},\ldots, g_{n\alpha_n},\underbrace{x_1,\ldots,x_1}_{\beta_1}, \ldots, \underbrace{x_m,\ldots,x_m}_{\beta_m}  \rangle
      \]
      The fact that $\ol{f} \circ \ol{t} = \ol{g} \circ \ol{s}$ immediately follows because it holds for all tuple components.
  \end{enumerate}

  Give the construction just described, $(\ol{f}, z, \ol{g})$ is a weak
  pushout of $(\ol{t},\ol{s})$. For, let $c \colon n \to 1$ and $d \colon m \to 1$
  be two arbitrary arrows such that $c \circ \ol{t} = d\circ \ol{s}$. We
  have to show that there is an arrow $e \colon z \to 1$ such that $e \circ
  \ol{f} = c$ and   $e \circ \ol{g} = d$.
  
    \[
      \begin{tikzpicture}
        \node (sqtl) at (0,0) {$k$};
        \node (sqtr) at (2,0) {$n$};
        \node (sqbl) at (0,-1) {$m$};
        \node (sqbr) at (2,-1) {$z$};
        \node (bbrr) at (3,-2) {$1$};
    
        \draw[->] (sqtl) -- node[above]{$\ol{t}$} (sqtr);
        \draw[->] (sqtl) -- node[left]{$\ol{s}$} (sqbl);
        \draw[->] (sqbl) -- node[above]{$\ol{g}$} (sqbr);
        \draw[->] (sqtr) -- node[right]{$\ol{f}$} (sqbr);
        \draw[->] (sqtr) to [out = 0, in = 90] node[right]{$c$} (bbrr);
        \draw[->] (sqbl) to [out = 270, in = 180] node[below]{$d$} (bbrr);
        \draw[->] (sqbr) -- node[right]{$e$} (bbrr);
      \end{tikzpicture}
    \]
  
  Arrow $c \colon n \to 1$ is a $\Sigma$-term with variables in $\{x_1,
  \ldots, x_n\}$, and $d \colon m \to 1$ is a $\Sigma$-term with (renamed apart) variables in
  $\{y_1, \ldots, y_m\}$.
  %
  Let $e = c \sqcup d:z \to 1$ be the  $\Sigma$-term on $\langle v_1, \ldots, v_z \rangle$  defined inductively as follows:
  
  \[ c \sqcup d = \left\{
      \begin{array}{ll}
          h(c_1 \sqcup d_1, \ldots,c_r \sqcup d_r) & \mbox{ if } c = h(c_1,\ldots,c_r) \mbox{ and } d = h(d_1,\ldots,d_r) \\
          v_j & \mbox{ if } \pi_j \circ \ol{f} = c \mbox{ and } \pi_j \circ \ol{g} = d
      \end{array}
  \right.\]
  
  
  Term $e = c \sqcup d:z \to 1$ is called an \emph{anti-unifier} or a \emph{least general generalization} of $c$ and $d$~\cite{p:inductive-generalization}.
  Let us show that it is well-defined. Since by hypothesis $c \circ \ol{t} = d\circ \ol{s}$ we know that $c$ and $d$ unify, therefore for each pair of corresponding subterms\footnote{That is, rooted at the same position in the two terms, a notion that we do not formalize further here.} $(c_s,d_s)$ of $c$ and $d$, if $c_s = h_c(c_1, \ldots, c_{r'})$ and $d_s = h_d(d_1, \ldots, d_{r''})$, then $h_c = h_d$ and $r' = r''$. Therefore in the first clause we only consider the case where the two subterms have the same operator symbol as root. The only other possibility is that either $c$ or $d$ or both are variables. If $c = x_i$ with $i \in \{1,\ldots, n\}$, then $c \circ \ol{t} = x_i  \circ \ol{t} = t_i$, thus we have $t_i = d \circ \ol{s}$, meaning that $d$ is one of the terms obtained in the first step of the above construction. By the definition of $\ol{f}$ and $\ol{g}$ there is an index $j$ such that $\pi_j \circ \ol{f} = x_i = c$ and $\pi_j \circ \ol{g} = d$, as desired. This holds even if also $d$ is a variable.  The case where $d$ is a variable and $c$ is not can be justified analogously.
  
  It is then easy to verify that by construction $ e \circ \ol{f} = c$ and $e \circ \ol{g} = d$, as desired.
  
  
  \end{proof}
  
  In the rest of this section we assume that a signature $\Sigma$ is given containing at least a constant $a$ and two unary function symbols $g$ and $f$. 
  
  \begin{example}[Construction of weak pushout]
    \label{ex:constr-weak-po}
    Let $a: 0 \to 1$ and $g(a): 0 \to 1$ be two arrows of
    $\mathit{Th_\Sigma}$. For Step~1 of the proof of
    \Cref{th:wp-LawTh}, using item 2(b) of
    \Cref{lem:fin_fac_in_Law_theory} we find that
    $a = a \circ \,! \circ g(a)$ is the only way to factorize $a$
    through $g(a)$. For Step~2, we find two ways of factorizing $g(a)$
    through $a$, namely $g(x_1) \circ a$ and $g(a) \circ \,!  \circ a$.
    Thus according to Step~3 we obtain the weak pushout of $a$ and $g(a)$ by collecting the resulting three commuting squares into a single one, as depicted in the fourth diagram.\\
  
    \small{
    \begin{tikzpicture}
      \node (sqtl) at (0,0) {$0$};
      \node (sqtr) at (1.25,0) {$1$};
      \node (sqbl) at (0,-0.9) {$1$};
      \draw[->] (sqtl) -- node[overlay,above]{$\ol{a}$} (sqtr);
      \draw[->] (sqtl) -- node[left]{$\ol{g(a)}$} (sqbl);
      \draw[->] (sqbl) -- node[below,right]{$a\circ ! $} (sqtr);
    \end{tikzpicture}
    \hfill
    \begin{tikzpicture}
      \node (sqtl) at (0,0) {$0$};
      \node (sqtr) at (1.25,0) {$1$};
      \node (sqbl) at (0,-0.9) {$1$};
      \draw[->] (sqtl) -- node[overlay,above]{$\ol{a}$} (sqtr);
      \draw[->] (sqtl) -- node[left]{$\ol{g(a)}$} (sqbl);
      \draw[->] (sqtr) -- node[below,right]{$g(x_1)$} (sqbl);
    \end{tikzpicture}
    \hfill
    \begin{tikzpicture}
      \node (sqtl) at (0,0) {$0$};
      \node (sqtr) at (1.25,0) {$1$};
      \node (sqbl) at (0,-0.9) {$1$};
      \draw[->] (sqtl) -- node[overlay,above]{$\ol{a}$} (sqtr);
      \draw[->] (sqtl) -- node[left]{$\ol{g(a)}$} (sqbl);
      \draw[->] (sqtr) -- node[below,right]{$g(a)\circ !$} (sqbl);
    \end{tikzpicture}
    \hfill
    \begin{tikzpicture}
      \node (sqtl) at (0,0) {$0$};
      \node (sqtr) at (1.5,0) {$1$};
      \node (sqbl) at (0,-0.9) {$1$};
     \node (sqbr) at (1.5,-0.9) {$3$};
      \draw[->] (sqtl) -- node[overlay,above]{$\ol{a}$} (sqtr);
      \draw[->] (sqtl) -- node[left]{$\ol{g(a)}$} (sqbl);
      \draw[->] (sqtr) -- node[below,right]{$\langle x_1,g(x_1),g(a)\rangle$} (sqbr);
      \draw[->] (sqbl) -- node[overlay,below]{$\langle a,x_1,x_1\rangle$} (sqbr);
    \end{tikzpicture}
    } 
  \end{example}

\begin{example}[Unsatisfiability of a condition over terms]
  Consider the following condition:
  \begin{align*}
    & \forall\ \id_0 \dotEx 0 \xrightarrow{g(a)} 1 \dotAll 1 \xrightarrow{!} 0 \dotFalse
    & \text{\small($g(a)$ occurs somewhere)} \\
    \land & \forall\ 0 \xrightarrow{a} 1 \dotEx 1 \xrightarrow{f(x_1)} 1 \dotTrue
    & \text{\small(and every $a$ is directly wrapped in $f(\ )$)}
  \end{align*}

  This condition is unsatisfiable since the first subcondition
  requires a subterm $g(a)$ (that is not discarded!), which is in
  conflict with the second subcondition that states that each constant
  $a$ in contained in a function symbol $f$.  Our procedure can show
  its unsatisfiability as follows.  Note that we use the isomorphism
  rule from the special case where possible to reduce complexity of
  intermediate results. A proof using only the section rule would also
  be possible.

  \begin{itemize}
  \item Pull forward iso in first line: we obtain a partial model
    $\id_0$ and the child
      \begin{align*}
        & \exists\ 0 \xrightarrow{g(a)} 1 .\Big(
        \forall\ 1 \xrightarrow{!} 0 \dotFalse \land
        (\forall\ 0 \xrightarrow{a} 1 \dotEx 1 \xrightarrow{f(x_1)} 1 \dotTrue)_{\downarrow 0 \xrightarrow{g(a)} 1}
      \Big) \\
      =& \exists\ 0 \xrightarrow{g(a)} 1 .\Big(
        \forall\ 1 \xrightarrow{!} 0 \dotFalse \land
        \forall\ 1 \xrightarrow{a,x_1,x_1} 3 \dotEx 3 \xrightarrow{f(x_1),x_2} 2 \dotTrue
      \Big)
    \end{align*}
    Note that this requires the representative square computed in
    \Cref{ex:constr-weak-po}.
    \item
      The child condition has no isos, but a section ($f_p = 1 \xrightarrow{a,x_1,x_1} 3$) with two right-inverses ($3 \xrightarrow{x_2} 1$ and $3 \xrightarrow{x_3} 1$).
      We pull it forward using $r_p = 3 \xrightarrow{x_3} 1$.

      First compute ${\mathcal A_p}_{\downarrow r_p} = (\exists\ 3 \xrightarrow{f(x_1),x_2} 2 \dotTrue)_{\downarrow 3 \xrightarrow{x_3} 1} = \exists\ 1 \xrightarrow{!} 0 \dotTrue$.
      Then we obtain
      \begin{align*}
        & \exists\ 1 \xrightarrow{!} 0 . \Big(
          \condtrue \land
          (\forall\ 1 \xrightarrow{!} 0 \dotFalse)_{\downarrow 1 \xrightarrow{!} 0}
          \land
          (\forall\ 1 \xrightarrow{a,x_1,x_1} 3 \dotEx 3 \xrightarrow{f(x_1),x_2} 2 \dotTrue)_{\downarrow 1 \xrightarrow{!} 0}
        \Big) \\
        =& \exists\ 1 \xrightarrow{!} 0 . \Big(
          \forall\ \id_0 \dotFalse \land \dots
        \Big)
      \end{align*}
    \item
      The child condition contains an iso $\id_0$.
      The associated child condition is an empty existential ($\condfalse$).
      Hence pulling it forward again results in an empty existential, which closes the tableau and proves the condition unsatisfiable.
  \end{itemize}

  Note that without the subcondition $\forall\ 1 \xrightarrow{!} 0 \dotFalse$, the condition would be satisfiable, as it would be possible to create and forget parts.
  For example, $0 \xrightarrow{f(a)} 1$ is a model for both $\forall\ 0 \xrightarrow{a} 1 \dotEx 1 \xrightarrow{f(x_1)} 1 \dotTrue$ and for $\exists\ 0 \xrightarrow{g(a)} 1 \dotTrue$, the latter resulting from the decomposition $0 \xrightarrow{g(a)} 1 \xrightarrow{f(a)} 1$ where the ground term $g(a)$ is simply forgotten.
\end{example}

\end{toappendix}

\end{document}